\documentclass[a4paper,USenglish,cleveref, autoref, thm-restate, numberwithinsect]{lipics-v2021}
\hideLIPIcs
\usepackage{graphicx} 
\usepackage{tikz}
\usetikzlibrary{calc}
\usetikzlibrary{arrows}
\usepackage{amsthm}
\usepackage{todonotes}
\usepackage{mathtools}
\usepackage[algo2e, vlined, ruled, linesnumbered]{algorithm2e}
\usepackage{booktabs}

\renewcommand{\O}{\ensuremath{\mathcal{O}}}
\newcommand{\A}{\ensuremath{\mathcal{A}}}
\newcommand{\G}{\ensuremath{\mathcal{G}}}
\newcommand{\C}{\ensuremath{\mathcal{C}}}
\newcommand{\B}{\ensuremath{\mathcal{B}}}
\newcommand{\F}{\ensuremath{\mathcal{F}}}
\newcommand{\I}{\ensuremath{\mathcal{I}}}
\newcommand{\cf}{\ensuremath{\mathcal{F}}}
\newcommand{\cl}{\ensuremath{\mathcal{L}}}
\renewcommand{\P}{\ensuremath{\mathsf{P}}}
\newcommand{\NP}{\ensuremath{\mathsf{NP}}}
\newcommand{\XP}{\ensuremath{\mathsf{XP}}}
\newcommand{\FPT}{\ensuremath{\mathsf{FPT}}}
\newcommand{\W}{\ensuremath{\mathsf{W[1]}}}
\newcommand{\N}{\ensuremath{\mathbb{N}}}

\newcommand{\Gred}{G_\mathrm{red}}
\newcommand{\Tred}{T_\mathrm{red}}
\newcommand{\sigmakleiner}{\preccurlyeq_\sigma}
\newcommand{\pikleiner}{\preccurlyeq_\pi}
\newcommand{\spikleiner}{\prec_\pi}
\newcommand{\staukleiner}{\prec_\tau}
\newcommand{\taukleiner}{\preccurlyeq_\tau}
\newcommand{\invsig}[1]{\sigma^{\text{-}1}(#1)}
\newcommand{\invpi}[1]{\pi^{\text{-}1}(#1)}
\newcommand{\invtau}[1]{\tau^{\text{-}1}(#1)}
\newcommand{\ssigmakleiner}{\prec_\sigma}
\newcommand{\distD}{\operatorname{d}}
\newcommand{\dist}[3]{\distD_{#1}(#2,#3)}
\newcommand{\start}{\bar v}
\newcommand{\notStart}{{v'}}
\newcommand{\Rrel}[2]{\mathcal{R}(#1,#2)}

\newcommand{\bip}[1]{\operatorname{bip}(#1)}
\newcommand{\searchlabel}{\emph{label}}

\DeclareMathOperator{\umin}{umin}
\DeclareMathOperator{\umax}{umax}

\tikzset{%
    vertex/.style={inner sep=2pt,draw,circle,fill=white}
}

\tikzstyle{treeedge} = [line width=2.5]

\theoremstyle{definition}
\newtheorem{problem}{Problem}

\title{Generalized Graph Search Trees}

\author{Florian Krowiorz}{Institute of Mathematics, Brandenburg University of Technology, Cottbus, Germany}{florian.krowiorz@b-tu.de}{}{}
\author{Robert Scheffler}{Institute of Mathematics, Brandenburg University of Technology, Cottbus, Germany}{robert.scheffler@b-tu.de}{https://orcid.org/0000-0001-6007-4202}{}

\authorrunning{F. Krowiorz and R. Scheffler} 

\ccsdesc[500]{Theory of computation~Graph algorithms analysis}
\ccsdesc[500]{Theory of computation~Problems, reductions and completeness}

\keywords{graph search trees, algorithm, computational complexity} 

\nolinenumbers

\begin{document}

\maketitle

\begin{abstract}
Graph search algorithms and their corresponding graph search trees are commonly used in algorithmic graph theory. In recent years, the recognition problem of these graph search trees has received significant attention. So far, the research has focused on two types of search trees: first-in trees that behave like BFS-trees and last-in trees that behave like DFS-trees. The search tree paradigms differ from each other by the parent a vertex is connected to. In first-in trees, it is the first visited neighbor, while in last-in trees it is the last neighbor visited before that vertex. Here, we will generalize these concepts of graph search trees by allowing every preceding neighbor of a vertex to be the parent. We study the complexity of the recognition problem of these generalized graph search trees. We present \NP-completeness proofs for most searches. We also show that the problem is trivial for Generic Search and polynomial-time solvable for several searches on bipartite graphs and chordal graphs. We also study the question how fixing the start vertex influences the complexity of the problem.
\end{abstract}

\section{Introduction}

Graph search algorithms are powerful tools that are used in many graph algorithms as subroutines. The best-known examples of these graph searches are \emph{Breadth First Search} (BFS) and \emph{Depth First Search} (DFS), which have a wide range of applications. Further graph search algorithms are \emph{Lexicographic Breadth First Search} (LBFS)~\cite{rose1976algorithmic}, \emph{Lexicographic Depth First Search} (LDFS)~\cite{corneil2008unified} as well as \emph{Maximum Cardinality Search} (MCS)~\cite{tarjan1984simple}. Although less well known than BFS and DFS, these searches also have several applications. LBFS is a subroutine of recognition algorithms for several graph classes~\cite{bretscher2008simple,corneil2010lbfs,dusart2017new}, LDFS has been used to solve several optimization problems on cocomparability graphs~\cite{corneil2013ldfs,corneil2016power,mertzios2018linear} and MCS can be used to compute minimal triangulations~\cite{berry2004maximum}.

In recent years, graph search algorithms have also become objects of research besides their application as subroutines. Several researchers considered problems to determine whether the vertex orderings computed by these graph searches can fulfill certain properties. An extensively studied example is the \emph{end-vertex problem}~\cite{beisegel2019end,charbithabibmamcarz14,corneil2010,gorzny2017end,kratsch2015end,rong2022graph} that asks whether a search ordering can end in a given vertex $t \in V(G)$ (see \cite[Table~6.1]{gorzny2022related} and \cite[Table~1]{scheffler2023ready} for overviews of known results). Other problems considered the spanning trees produced by those graph searches -- also called \emph{search trees}. In such a tree, every vertex except the start vertex of the search is connected to one of its preceding neighbors. In trees produced by BFS, they are connected to their first visited neighbor, while in DFS trees every vertex is adjacent to the neighbor that was visited last before it. Using a notation introduced by Beisegel et al.~\cite{beisegel2021recognition}, we call a BFS-like search tree \emph{\cf-tree} (short for \emph{first-in tree}) and a DFS-like search tree \emph{\cl-tree} (short for \emph{last-in tree}).

The problem of whether a given spanning tree of a graph is a search tree of a certain search was first considered in the 1980s by Hagerup and Nowak for BFS and DFS~\cite{hagerup1985biconnected,hagerup1985recognition}. The authors showed that both problems can be solved in linear time. In the following years, similar questions have been considered for weighted graphs and parallel algorithms~\cite{korach1993recognition,peng2000recognizing,peng1994recognizing}. In the last few years, these questions have gained new attention. As mentioned above, Beisegel et al.~\cite{beisegel2021recognition} introduced \cf-trees and \cl-trees, a general framework for those questions. They were able to prove that the \cf-tree recognition problem of LBFS, LDFS, and MCS are \NP-complete while the \cl-tree recognition of LDFS can be solved in polynomial time. In the following years, \NP-hardness has also been shown for the \cf-tree recognition of DFS~\cite{scheffler2022recognition}, as well as the \cl-tree recognition of BFS~\cite{scheffler2022recognition}, LBFS~\cite{scheffler2024recognizing}, and Generic Search~\cite{beisegel2024graph} (an overview is given in \cite[Table~2]{scheffler2023ready}). Generic Search is in some sense the most general graph search since it is allowed to take any vertex as the next one as long as one of its neighbors has already been visited. 

Both the \cf-tree recognition problem and the end-vertex problem are special cases of the \emph{partial search order problem} (PSOP)~\cite{scheffler2025partial}. Here, a graph and a partial order on its vertex set are given and one is asked whether there is a search ordering that extends the partial order.

\subparagraph*{Our Contribution} Although Beisegel et al.~\cite{beisegel2021recognition} only considered the recognition of \cf-trees and \cl-trees, they also introduced \emph{generic search trees} of a search ordering as those spanning trees that have an edge from every vertex except the start vertex to exactly one of its preceding neighbors. To avoid confusions with search trees constructed by Generic Search, we will call these trees here \emph{generalized search trees} (or \emph{\G-trees} for short). Similar to the recognition problem of \cf-trees and \cl-trees, we will consider the \G-tree recognition problem where a graph $G$ and a spanning tree $T$ are given and one has to decide whether there exists a search ordering where $T$ is a \G-tree of $G$. The rooted variant of this problem, where the start vertex of the graph search is fixed, is a special case of the PSOP for partial orders whose Hasse diagram forms a spanning tree of the given graph rooted in a minimal element of the partial order.

We show that the \G-tree recognition problem is trivial for Generic Search since every spanning tree is a \G-tree of some Generic Search ordering. Contrary, we show that for (L)BFS, (L)DFS, MCS, and MNS the problem is \NP-complete. Furthermore, we study the relationship between the complexity of the rooted and the unrooted problem. While it is easy to see that the unrooted problem can be solved using an algorithm for the rooted problem, the reverse is not true in general. We give a sufficient condition for a graph search and a graph class such that both problems are polynomial-time equivalent. Besides this, we give polynomial-time algorithms for some of the searches for graph classes. We solve the problem for layered searches such as BFS and LBFS on bipartite graphs. Furthermore, we show that a large family of graph searches, called strictly comp-monotone searches, has the same \G-trees on chordal graphs. As this family includes LBFS, LDFS, MCS and MNS, we can use the polynomial-time algorithms given for the PSOP of MCS on this class~\cite{rong2026partial,scheffler2025partial} to solve the the \G-tree problem for all these searches. An overview of the results is given in \cref{tab:results-tree}.

Note that many results of this paper are based on the Bachelor's thesis of the first author~\cite{krowiorz2023algorithmische}. Some of the results given in \cref{sec:chordal} are contained in the PhD thesis of the second author~\cite{scheffler2023ready}. 

\begin{table}[t]
	\centering
	\caption{Complexity of the \G-tree recognition problem of different searches on different graph classes. T~stands for trivial problems where the answer is always yes. L~stands for linear-time algorithms, P~for polynomial-time algorithms and NPC for \NP-complete. The result marked with an asterisk is only proven for the rooted case where the start vertex of the search is fixed. }\label{tab:results-tree}
    \small
	\begin{tabular}{c c c c c c c c}
		\addlinespace
		\toprule
		$\G$-Tree Results   & GS & BFS                 & LBFS           & DFS\phantom{\textsuperscript{*}}  & LDFS           & MCS            & MNS        \\[0.8ex] \toprule
		All Graphs     & 
		T  & 
		NPC & 
		NPC & 
		NPC\phantom{\textsuperscript{*}} & 
		NPC & 
		NPC & 
		NPC \\ 
		Weakly Chordal & 
		T & 
		  ? & 
		? & 
		NPC\phantom{\textsuperscript{*}} & 
        NPC & 
		  NPC  & 
		NPC \\ 
		Chordal & 
		T & 
		? & 
		P & 
		NPC\phantom{\textsuperscript{*}} & 
		P & 
		P & 
		P \\ 
		Split & 
		T & 
		P & 
		P & 
		NPC{\textsuperscript{*}} & 
		P & 
		P & 
		P \\ 
		Bipartite & 
		T & 
		P & 
		P & 
		NPC\phantom{\textsuperscript{*}} & 
		? & 
		? & 
		? \\ 
		Chordal Bipartite & 
		T & 
		P & 
		P & 
		NPC\phantom{\textsuperscript{*}} & 
		? & 
		? & 
		?  \\
        Complete Bipartite & 
		T & 
		L & 
		L & 
		T\phantom{\textsuperscript{*}} & 
		T & 
		T & 
		T  \\[0.8ex]
		\bottomrule \addlinespace
	\end{tabular}
\end{table}

\section{Preliminaries}
\subparagraph{General Notation}

The graphs considered in this paper are finite, undirected, simple and connected. Given a graph $G$, we denote by $V(G)$ the \emph{set of vertices} and by $E(G)$ the \emph{set of edges}. The terms $n(G)$ and $m(G)$ describe the number of vertices and edges of $G$, respectively, i.e., $n(G) = |V(G)|$ and $m(G) = |E(G)|$. For a vertex $v\in V(G)$, we denote by $N_G(v)$ the \emph{(open) neighborhood} of $v$ in $G$, i.e., the set $N_G(v)=\{u\in V(G)\mid uv\in E(G)\}$ where $uv$ denotes an edge between $u$ and $v$. Given a set $S \subseteq V(G)$, the graph $G[S]$ is the \emph{subgraph of $G$ induced by $S$}.

The \emph{distance} $\dist G v w $ of two vertices $v$ and $w$ in $G$ is the length (i.e., the number of edges) of the shortest $v$-$w$-path in $G$. For some $\ell \in \N$, we define $N_G^\ell(v)$ as the set containing all vertices whose distance to the vertex $v$ in $G$ is equal to $\ell$.

A \emph{vertex ordering} of $G$ is a bijection $\sigma: V(G) \to \{1,2,\dots,|V(G)|\}$. We denote by $\sigma(v)$ the position of vertex $v\in V(G)$. Given two vertices $u$ and $v$ in $G$, we say that $u$ is \emph{to the left} (resp. \emph{to the right}) of $v$ if $\sigma(u)<\sigma(v)$ (resp. $\sigma(u)>\sigma(v)$) and we denote this by $u \prec_{\sigma}v$ (resp.) $u \succ_{\sigma}v$). We write $u \sigmakleiner v$ if $u \ssigmakleiner v$ or $u = v$ holds. Depending on the context, we also use $\sigmakleiner$ to denote the set of all tuples $(u,v)$ such that $u \sigmakleiner v$.
For any nonempty subset $W \subseteq V$ of vertices we denote with $\sigma_{\vert W}$ the \emph{restriction} of $\sigma$ to the elements of $W$, i.e., $\sigma_{\vert W}(v) \coloneqq \vert \{w \in W \mid w \sigmakleiner v\}\vert$.

A subgraph $H$ of $G$ is \emph{spanning} if $V(G) = V(H)$. A \emph{clique} in a graph $G$ is a set of pairwise adjacent vertices and an \emph{independent set} in $G$ is a set of pairwise non-adjacent vertices. A vertex $v$ is \emph{simplicial} if its neighborhood induces a clique. A vertex $v$ of a connected graph $G$ is a \emph{cut vertex} if $G - v$ is not connected.

The \emph{union of two graphs} $G \cup H$ is the graph that contains all vertices and edges of $G$ and of $H$, i.e., $V(G \cup H) = V(G) \cup V(H)$ and $E(G \cup H) = E(G) \cup E(H)$. A \emph{cut vertex decomposition} of a graph $G$ is a three-tuple $(A,B,v)$ where $A$ and $B$ are subgraphs of $G$, $A \cup B = G$ and $V(A) \cap V(B) = \{v\}$.

A graph is \emph{bipartite} if its vertex set can be partitioned into two independent sets $X$ and~$Y$. A vertex of a bipartite graph is called \emph{biuniversal} if it is completely adjacent to one of the two independent sets. A bipartite graph is \emph{complete bipartite} if all vertices are biuniversal. A graph is \emph{weakly chordal} if neither $G$ nor its complement contains an induced cycle of length $\geq 5$. A graph is \emph{chordal} if it does not contain an induced cycle of length $\geq 4$. A vertex ordering $\sigma$ of a graph $G$ is a \emph{perfect elimination ordering} if every vertex $v$ is simplicial in the graph $G[S(v)]$ with $S(v) := \{w \mid w \preceq_\sigma v\}$. A graph $G$ has a PEO if and only if $G$ is chordal~\cite{rose1970triangulated}. A \emph{split graph} $G$ is a graph whose vertex set can be partitioned into sets $C$ and $I$ such that $C$ is a clique in $G$ and $I$ is an independent set in $G$. It is easy to see that every split graph is chordal. 

A \emph{tree} is an acyclic connected graph. A \emph{spanning tree} of a graph $G$ is an acyclic connected spanning subgraph of $G$. A tree together with a distinguished \emph{root vertex} $r$ is said to be \emph{rooted}. In such a rooted tree $T$, a vertex $v$ is an \emph{ancestor} of vertex $w$ if $v$ lies on the unique path from $r$ to $w$. If $v$ is also adjacent to $w$, then $v$ is called the \emph{parent} of $v$ in $T$. A vertex $w$ is called the \emph{child} of $ v $ if $ v $ is the parent of $ w $. The ancestor-relation describes a partial order $\Rrel{T}{r}$ on the vertex set of $T$, i.e., $\Rrel{T}{r} := \{(u,v) \in V(T) \times V(T) \mid \text{$u$ is an ancestor of $v$}\}$.

\subparagraph{Graph Searches}
There are different ways to define what a graph search is. The authors of  ~\cite{beisegel2021recognition} and~\cite{corneil2016tie}, for example, define a graph search to be an algorithm that, given a graph $G$ as input, produces some vertex ordering of $G$ as output. Here, we use a slightly more general definition that omits the algorithmic part. Note that this definition was already used in~\cite{scheffler2025semi,scheffler2025leaves}. We say that a \emph{graph search} $\A$ is a function that maps every graph $G$ to a set $\A(G)$ of vertex orderings of $G$. The elements of the set $\A(G)$ are the \emph{$\A$-orderings of $G$}. Every prefix of an $\A$-ordering is called an \emph{\A-prefix}. We denote by $\A(G,v)$ the orderings of $\A(G)$ that start with vertex $v \in V(G)$. We call a graph search \emph{natural} if for each graph $G$ and each vertex $v \in V(G)$ the set $\A(G,v)$ is not empty.

Similarly to what has been done in \cite{scheffler2025partial,scheffler2025leaves}, we introduce most of the graph searches considered here using the meta-algorithm Label Search($\prec_\A$) given in \cref{algo:ls}. This algorithm is an adaptation of a framework introduced by Corneil~et~al.~\cite{corneil2016tie}. In this algorithm, every vertex $v \in V(G)$ has a set $\searchlabel(v) \subseteq \{1,\dots,n(G)\}$ that contains the indices of the visited neighbors of $v$ in the ordering. The overall idea of the meta-algorithm follows from the fact that the ordering of the vertices that have been already visited by some graph search implies some ordering constraints on the unvisited vertices. To describe these constraints, the framework of LabelSearch($\prec_\A$) uses a strict partial order $\prec_{\cal A}$ on the finite subsets of $\N^+$. Fixing this partial order $\prec_{\cal A}$ also fixes the graph search $\A$, i.e., the set $\A(G)$ of a graph $G$ contains exactly those vertex orderings of $G$ that can be computed by Label Search($\prec_\A$).

\begin{algorithm2e}[t]
\small
    \KwIn{A graph $G$} 
    \KwOut{A search ordering $\sigma$ of $G$}
    \Begin{
		\lForEach{$v \in V(G)$}{\searchlabel($v$) $\leftarrow$ $\emptyset$}
		\For{$i$ $\leftarrow$ $1$ \KwTo $n(G)$}{
			\emph{Eligible} $\leftarrow$ $\{x \in V(G) \mid x$ unvisited and $\nexists$ unvisited $y \in V(G)$ \\ \mbox{}\phantom{\emph{Eligible} $\leftarrow$ $\{x \in V(G) \mid $}such that \searchlabel($x$) $\prec_{\cal A}$ \searchlabel($y$)$\}$\;
			let $v$ be an arbitrary vertex in \emph{Eligible}\;\label{line:ls}
			$\sigma(v)$ $\leftarrow$ $i$\tcc*{assigns to $v$ the number $i$}
			\lForEach{unvisited vertex $w \in N(v)$}{\searchlabel($w$) $\leftarrow$ \searchlabel($w$) $\cup$ $\{i\}$}
		}
	}
    \caption{Label Search($\prec_{\cal A}$)}\label{algo:ls}
\end{algorithm2e}

In the following, we will present the strict partial orders that define the graph searches considered in this paper (see~\cite{corneil2016tie}). In all these definitions, we assume $A$ and $B$ to be finite subsets of $\N^+$. We define $\umin(A)$ as $\infty$ if $A = \emptyset$ and as the minimal element of $A$, otherwise. We define $\umax(A)$ as $0$ if $A = \emptyset$ and as maximal element of $A$, otherwise.

The \emph{Generic Search} (GS) is equal to the Label Search($\prec_{GS}$) where $A \prec_{GS} B$ if and only if $A = \emptyset$ and $B \neq \emptyset$. Thus, any vertex with a visited neighbor can be visited next.

The partial label order $\prec_{BFS}$ for \emph{Breadth First Search} (BFS) is defined as follows: $A \prec_{BFS} B$ if and only if $\umin(A) > \umin (B)$. For the \emph{Lexicographic Breadth First Search} (LBFS)~\cite{rose1976algorithmic}, we consider the partial order $\prec_{LBFS}$ with $A \prec_{LBFS} B$ if and only if $\umin(A \setminus B) > \umin(B \setminus A)$. Both BFS and LBFS are \emph{layered}, i.e., a search ordering starting with vertex $r$ traverses the sets $N^\ell_G(r)$ in ascending order of the distance values $\ell$. We sometimes use the term \emph{layer} if we refer to a set $N^\ell_G(r)$ for some $\ell \in \N$.

The partial label order $\prec_{DFS}$ for \emph{Depth First Search} (DFS) is defined as follows: $A \prec_{DFS} B$ if and only if $\umax(A) < \umax (B)$. For the \emph{Lexicographic Depth First Search}~\cite{corneil2008unified} we use the strict partial order $\prec_{LDFS}$ where $A \prec_{LDFS} B$ if and only if $\umax(A \setminus B) < \umax(B \setminus A)$.

For \emph{Maximum Cardinality Search} (MCS)~\cite{tarjan1984simple} we define $\prec_{MCS}$ with $A \prec_{MCS} B$ if and only if $|A| < |B|$. The \emph{Maximal Neighborhood Search} (MNS)~\cite{corneil2008unified} uses $\prec_{MNS}$ with $A \prec_{MNS} B$ if and only if $A \subsetneq B$. The definition of that partial orders implies that any LBFS ordering is a BFS ordering, any LDFS ordering is a DFS ordering, any LBFS, LDFS, and MCS ordering is an MNS ordering and the orderings of all presented searches are GS orderings (see \cite[Theorem 5.6]{corneil2016tie}).

If we apply a graph search to a graph, then at any point of the computation more than one vertex could have a maximal label. For some results in this paper, we need to break those ties. To this end, $\A^+$-searches have been introduced. Given a graph search $\A$ and a vertex ordering $\rho$, the $\A^+(G,\rho)$ ordering of $G$ is the ordering $\sigma = (v_1, \dots, v_n)$ where for each $i \in \{1,\dots,n-1\}$ the vertex $v_{i+1}$ is the leftmost vertex in $\rho$ among all vertices $x$ for which $(v_1, \dots, v_i, x)$ is a prefix of an $\A$-ordering of $G$. Note that $\A^+(G,\rho)$ is well-defined as long as $\A(G)$ is not empty.

\section{G-Trees}\label{sec:g-trees}

We start with the definition of the main structure considered in this paper.

\begin{definition}[$\G$-tree]
    Let $G$ be graph and $T$ be a spanning subgraph of $G$. Let $\sigma$ be a vertex ordering of $G$. We call $T$ a \emph{$\G$-tree} of $G$ regarding $\sigma$ if for every vertex $v \in V(G)$ it holds that $\sigma(v) = 1$ or there exists exactly one $w \in N_T(v)$ with $w \ssigmakleiner v$. We will denote the set of all subgraphs of $G$ that have the property described above as $\G(G,\sigma)$.
\end{definition}

Next, we give some alternative characterizations of $\G$-trees. In particular, we show that $\G$-trees are spanning trees.

\begin{lemma}
\label{lemma:char-g-trees}
    Let $G$ be a graph and $T$ be a spanning subgraph of $G$. Let $\sigma$ be a vertex ordering of $G$. The following statements are equivalent.
    \begin{enumerate}[(i)]
        \item $T\in \G(G,\sigma)$
        \item$T[\{\invsig{1},\dots,\invsig{i}\}]$ is a tree for every $i \in \{1,\dots,n(G)\}$.
        \item $T$ is a spanning tree of $G$ and $\Rrel{T}{\invsig{1}}$ is a subset of $\sigmakleiner$.\label{prop:char3}
    \end{enumerate}
\end{lemma}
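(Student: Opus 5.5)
I would prove the equivalences by a cycle (i)$\Rightarrow$(ii)$\Rightarrow$(iii)$\Rightarrow$(i). Throughout, write $v_i := \invsig{i}$ and $T_i := T[\{v_1,\dots,v_i\}]$.

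\emph{(i)$\Rightarrow$(ii).} I use induction on $i$. The case $i=1$ is trivial, since $T_1$ is a single vertex. For the step, assume $T_{i-1}$ is a tree. By (i), the vertex $v_i$ has exactly one $T$-neighbor $w$ with $w \ssigmakleiner v_i$. Hence $v_i$ has degree exactly one in $T_i$. Adding a vertex of degree one to a tree yields a tree, so $T_i$ is a tree.

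\emph{(ii)$\Rightarrow$(iii).} Taking $i=n(G)$ shows that $T$ is a spanning tree of $G$. Root $T$ at $r=v_1$. I claim that for each $i\ge 2$ the parent of $v_i$ is its unique $T$-neighbor preceding it in $\sigma$.
\begin{itemize}
\item Since $T_i$ and $T_{i-1}$ are both trees, $v_i$ has exactly one neighbor in $T_i$, so such a predecessor neighbor $p(v_i)$ exists and is unique.
\item $T_{i-1}$ is connected and contains $r$, so the $r$-$p(v_i)$ path lies in $T_{i-1}$. Appending the edge $p(v_i)v_i$ gives a path in $T$ from $r$ to $v_i$, which by uniqueness of paths in trees is \emph{the} tree path. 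Thus $p(v_i)$ is the parent of $v_i$.
\end{itemize}
Consequently, along the root path every parent precedes its child in $\sigma$. By transitivity every ancestor $u$ of $v$ satisfies $u \sigmakleiner v$, which means $\Rrel{T}{v_1} \subseteq {\sigmakleiner}$.

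\emph{(iii)$\Rightarrow$(i).} Let $v \ne v_1$. Its parent $p$ is an ancestor, so $p \sigmakleiner v$ and hence $p \ssigmakleiner v$. Every other $T$-neighbor of $v$ is a child $c$ of $v$. Then $v$ is an ancestor of $c$, so $v \ssigmakleiner c$, and $c$ does not precede $v$. Hence exactly one $T$-neighbor precedes $v$, which is (i).

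The main subtle point is the step in (ii)$\Rightarrow$(iii) identifying the predecessor neighbor with the tree parent. This relies on connectivity of the prefix subtree $T_{i-1}$ containing the root, and it is the only place where the rooted structure interacts with $\sigma$. Everything else is routine.
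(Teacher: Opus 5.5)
Your proof is correct and follows the paper's cycle (i)$\Rightarrow$(ii)$\Rightarrow$(iii)$\Rightarrow$(i), with the same leaf-addition induction for the first implication and the same parent/children argument for the last. In (ii)$\Rightarrow$(iii) you show directly that each vertex's unique earlier $T$-neighbor is its parent and then apply transitivity, whereas the paper argues by contradiction that the $\invsig{1}$-$w$ path inside $T[\{\invsig{1},\dots,\invsig{\sigma(w)}\}]$ must be the tree path; the key observation is the same.
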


\begin{proof}
    For better readability, we will denote $T[\{\invsig{1},\dots,\invsig{i}\}]$ by $T_i$ and $\invsig{i}$ by $v_i$ for every $i \in \{1,\dots,n(G)\}$.
    
    $(i)\Rightarrow (ii):$ Suppose that $T$ is a $\G$-tree of $G$ regarding $\sigma$. $T_1$ is a graph with exactly one vertex and, thus, a tree. Suppose that for some $i \in \{1,\dots,n(G)-1\}$ the graph $T_i$ is a tree. Since $T$ is a $\G$-tree, we know that there is exactly one vertex $w \in V(G)$ adjacent to $v_{i+1}$ which is to the left of $v_{i+1}$ in $\sigma$. Thus, $T_{i+1}$ arises from $T_i$ by adding $v_{i+1}$ as a leaf. Since the class of trees is closed under the addition of leaves, it follows that $T_{i+1}$ is also a tree.

$(ii)  \Rightarrow (iii):$ Suppose, $T_i$ is a tree for every $i \in \{1,\dots,n(G)\}$. Since $V(T) = V(G)$ and $T_{n(G)} = T$, the graph $T$ is a spanning tree of $G$. It remains to show that for every $(u,w) \in \Rrel T {v_1}$ it holds that $u \sigmakleiner w$. Assume for contradiction that there is a tuple $(u,w) \in \Rrel T {v_1}$ such that $w \ssigmakleiner u$. Since $T_{\sigma(w)}$ is a tree, there must be a $v_1$-$w$-path $P$ within $T_{\sigma(w)}$. Since $u$ is to the right of $w$ in $\sigma$, $u$ is not in $T_{\sigma(w)}$ and, thus, not on $P$. As $T_{\sigma(w)}$ is a subgraph of $T$, $P$ is also a path in $T$. Since $(u,w) \in \Rrel T {v_1}$, there is a $v_1$-$w$-path in $T$ that contains $u$. It follows that there are two different $v_1$-$w$-paths in $T$. This contradicts the fact that $T$ is a tree. Therefore, such a tuple cannot exist and thus $\Rrel T {v_1} \subseteq \sigmakleiner$ holds true.

$(iii)  \Rightarrow (i):$ Suppose, $T$ is a spanning tree of $G$ and $\Rrel{T}{v_1}$ is a subset of $\sigmakleiner$. Let $v$ be a vertex of $G$ with $\sigma (v) \neq 1$. Since $T$ is a spanning tree of $G$, there is a unique path $P$ in $T$ from $v_1$ to $v$. Let $w$ be the predecessor of $v$ in $P$. This implies that $(w,v) \in \Rrel T {v_1}$. Since $\Rrel T {v_1}$ is a subset of $\sigmakleiner$, it holds that $w \ssigmakleiner v$. If $w'$ is a neighbor of $v$ in $T$ with $w' \neq w$, then $P.w'$ is the unique $v_1$-$w'$-path in $T$. Using the same argument as above, we conclude that $v \ssigmakleiner w'$. Thus, $w$ is the only neighbor of $v$ in $T$ with $w \ssigmakleiner v$. 
 \end{proof}

Similarly as was done for \cf-trees and \cl-trees in~\cite{beisegel2021recognition}, we introduce the $\G$-tree recognition problem of a graph search $\A$.

\begin{problem}[$\G$-tree recognition problem of graph search $\A$]~
    \begin{description}
        \item[Instance:] Graph $G$, spanning tree $T$ of $G$
        \item[Question:] Is there a vertex ordering $\sigma \in \A(G)$ such that $T \in \G(G,\sigma)$
    \end{description}
\end{problem}

We will also consider the \emph{rooted $\G$-tree recognition problem} where we have given a vertex $\start \in V(G)$ as additional input and ask whether there is an ordering $\sigma \in \A(G,\start)$ with $T \in \G(G,\sigma)$. Note that the third property of \cref{lemma:char-g-trees} implies that the rooted $\G$-tree recognition problem is a special case of the \emph{Partial Search Order Problem}~\cite{scheffler2025partial}, where one is given a graph and a partial order on the graph's vertex set and asks whether there is a search ordering that is a linear extension of the partial order. For the rooted $\G$-tree recognition problem, the partial order's Hasse diagram is a tree rooted in the unique minimal element. Trotter~\cite{trotter1992combinatorics} calls such partial orders \emph{cs-trees} (short for \emph{computer science trees}). Even more restrictive, the edges of the Hasse diagram of the partial order are edges in the given graph.

Using the characterization given in \cref{lemma:char-g-trees}, we can conclude that every spanning tree is a \G-tree of Generic Search for every possible root vertex, i.e., the (rooted) \G-tree recognition problem of Generic Search is trivial.

\begin{theorem}\label{thm:gs}
    Let $G = (V,E)$ be a graph, $T$ be a spanning tree of $G$ and $\start \in V$. Then there exists a Generic Search ordering $\sigma$ of $G$ starting with $\start$ such that $T \in \G(G,\sigma)$.
\end{theorem}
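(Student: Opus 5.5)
We will construct $\sigma$ explicitly and then use the characterization of \cref{lemma:char-g-trees}, specifically property~(\ref{prop:char3}). The natural choice is a traversal of $T$ itself, rooted at $\start$: let $\sigma$ be any ordering of $V$ that starts with $\start$ and in which every vertex appears after its parent in the rooted tree $(T,\start)$. For instance, we can take a BFS or DFS ordering of the tree $T$ started in $\start$, or simply order the vertices by non-decreasing distance from $\start$ in $T$. Such an ordering is, by construction, a linear extension of the ancestor relation $\Rrel{T}{\start}$, because the ancestor relation is the transitive closure of the parent relation.

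The first key step is to verify that $\sigma$ is a Generic Search ordering of $G$. Since $\prec_{GS}$ only forbids choosing a vertex with empty label while some unvisited vertex has a nonempty label, it suffices to show the following: for every $i \geq 2$, the vertex $v_i = \invsig{i}$ has a neighbor in $G$ among $v_1,\dots,v_{i-1}$. Then $v_i$ has a nonempty label at step $i$. No unvisited vertex $y$ can then satisfy $\searchlabel(v_i) \prec_{GS} \searchlabel(y)$, because that would require $\searchlabel(v_i) = \emptyset$. Hence $v_i$ is eligible at step $i$. At step $1$ all labels are empty, so $\start$ is eligible. The required neighbor is the parent $p$ of $v_i$ in $(T,\start)$: it precedes $v_i$ in $\sigma$ by construction, and $p v_i \in E(T) \subseteq E(G)$.

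The second step is immediate. $T$ is a spanning tree of $G$, $\invsig{1} = \start$, and $\Rrel{T}{\start} \subseteq \sigmakleiner$ by the choice of $\sigma$. Hence \cref{lemma:char-g-trees}~(\ref{prop:char3}) gives $T \in \G(G,\sigma)$.

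There is no real obstacle in this argument. The only point requiring care is the eligibility argument in the first step, namely making precise that a vertex with a nonempty label is never dominated under $\prec_{GS}$. This follows directly from the definition of $\prec_{GS}$.
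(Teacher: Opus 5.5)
Your proposal is correct and follows essentially the same route as the paper: take any linear extension of $\Rrel{T}{\start}$, apply property~(\ref{prop:char3}) of \cref{lemma:char-g-trees} to get $T \in \G(G,\sigma)$, and observe that each vertex other than $\start$ has its $T$-parent (a $G$-neighbor) to its left, so $\sigma$ is a Generic Search ordering. Your explicit check that a vertex with nonempty label is never dominated under $\prec_{GS}$ merely spells out a step the paper leaves implicit.
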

\begin{proof}
    Let $\sigma$ be any linear extension of $\Rrel{T}{\start}$. Property~(\ref{prop:char3}) of \cref{lemma:char-g-trees} implies that $T$ is a $\G$-tree of $G$ regarding $\sigma$. This choice of $\sigma$ also guarantees that $\sigma(\start) = 1$. It remains to show that $\sigma$ is also a Generic Search ordering of $G$. As $\sigma$ is a linear extension of $\Rrel{T}{\start}$, for every vertex $w \in V(G) \setminus \{\start\}$ its parent $u$ in $T$ is to left of $w$ in $\sigma$. Thus, every vertex but $\start$ has a neighbor to its left in $\sigma$ and, hence, $\sigma$ is a Generic Search ordering.
\end{proof}

\subparagraph{Comparing the complexity of the rooted and unrooted problem}

It is easy to see that a polynomial-time algorithm for the rooted $\G$-tree recognition problem of a graph search $\A$ implies a polynomial-time algorithm for the unrooted problem since we can apply the algorithm for every possible root. A natural question is under which conditions the inverse implication does also hold. It is not difficult to construct a graph search and a graph class for which the inverse implication is not true, unless $\P = \NP$. 

\begin{example}\label{exm:search}
    We define the graph search $\A_1$ using the meta algorithm given in \cref{algo:ls}. Let $A \prec_{\A_1} B$ if and only if $1 \notin A \cap B$ and $\umax(A) < \umax(B)$.
    In other words, the search works like DFS with the difference that the neighbors of the start vertex are incomparable. 

    Now let $\C$ be the class of graphs that contain a universal vertex, i.e., the domination number is equal to one. The \G-tree recognition problem is trivial on $\C$ as every vertex ordering starting in the universal vertex is an $\A_1$-ordering and, thus, we can simply use some Generic Search ordering of the given tree starting in the universal vertex. 
    
    Contrarily, the rooted $\G$-tree recognition problem of $\A_1$ on class $\C$ is as least as hard as the rooted \G-tree recognition problem of DFS on general graphs. For a reduction, let $(G,T,\start)$ be the input of the rooted \G-tree problem of DFS. We construct graph $G'$ by adding a universal vertex $u$ to $G$ and appending a leaf $w$ to $u$. We construct $T'$ by adding the edges $wu$ and $u\start$ to $T$. Solving the rooted \G-tree recognition problem of $\A_1$ for $(G',T',w)$ is equivalent to solving the rooted \G-tree recognition problem of DFS for $(G,T,\start)$. Note that we will prove in \cref{sec:dfs} that the rooted \G-tree recognition problem of DFS is \NP-complete.
\end{example}

Here, we present the family of self-similar searches and we will show that the rooted and the unrooted $\G$-tree recognition problem of such searches are polynomial-time equivalent on general graphs and on graph classes fulfilling certain conditions.

\begin{definition}
    \label{def:self-similar}
    We call a graph search $\A$ \emph{self-similar} on a connected graph $G = (V,E)$ if the following statements hold true for every cut vertex decomposition $(Z,Z',\start)$ of $G$:
    
    \begin{enumerate}[(i)]
        \item $\forall \sigma \in \A(Z,\start) :\quad \forall \tau \in \A(Z'): \quad \exists \pi \in \A(G):\quad  \sigmakleiner \subseteq \pikleiner\quad \land \quad \taukleiner \subseteq \pikleiner$
        \item $\forall \pi \in \A(G) :\quad  \pi^{\text{-} 1}(1) \in V( {Z'})\quad \Rightarrow \quad \pi_{\vert V({Z})} \in \A(Z,\start)$ 
    \end{enumerate}
       
    We say $\A$ is \emph{self-similar} on a graph class $\C$ if it is self-similar on every connected graph of $\C$. We call a graph search $\A$ \emph{self-similar} if it is self-similar on every connected graph $G$.
\end{definition}

The first property says that every search ordering of $Z$ starting in vertex $\start$ can be combined with any search ordering of $Z'$. This combination can be arbitrary, i.e., the orderings are not necessarily be concatenated but can be shuffled into each other. The second conditions states that any search ordering of the whole graph that starts in $Z'$ contains a subordering of $Z$ that is a search ordering starting in $\start$.

It can easily be checked that GS, (L)BFS, (L)DFS, MCS, and MNS are self-similar. This does not hold for all instances of Label Search since the search $\A_1$ given in \cref{exm:search} is not self-similar (see \cref{fig:self-similar}). 
The next theorem identifies some sufficient properties of $\prec_\A$ which ensure that LabelSearch$(\prec_\A)$ is self-similar.

\begin{figure}
    \centering
    \begin{tikzpicture}
        \footnotesize

        \draw[fill=lightgray, opacity=0.1, rounded corners] (-2.2,-1.2) rectangle (0.2,1.2);
        \draw[fill=lightgray, opacity=0.3, rounded corners] (-0.2,-1.2) rectangle (2.2,1.2);
        \node at (-1.9,0.9) {$Z$};
        \node at (1.9,0.9) {$Z'$};
        \node[vertex, label=90:$\start$] (0) at (0,0) {};
        \node[vertex] (-1) at (-1,1) {};
        \node[vertex] (-2) at (-2,0) {};
        \node[vertex] (-3) at (-1,-1) {};
        \node[vertex] (1) at (1,1) {};
        \node[vertex] (2) at (2,0) {};
        \node[vertex] (3) at (1,-1) {};

        \draw (0) -- (-1) --(-2) -- (-3) -- (0) -- (1) -- (2) -- (3) -- (0);

        \begin{scope}[xshift=6cm]
            \node at (-2,1) {$\sigma$:};
            \node[vertex, label=90:$\start$, label=-90:$1$] (0) at (0,0) {};
            \node[vertex, label=90:$2$] (-1) at (-1,1) {};
            \node[vertex, label=-90:$4$] (-2) at (-2,0) {};
            \node[vertex, label=-90:$3$] (-3) at (-1,-1) {};

            \draw (0) -- (-1) --(-2) -- (-3) -- (0);
        \end{scope}

        \begin{scope}[xshift=8cm]
            \node at (0,1) {$\tau$:};
            \node[vertex, label=90:$\start$, label=-90:$2$] (0) at (0,0) {};
            \node[vertex, label=90:$1$] (1) at (1,1) {};
            \node[vertex, label=-90:$3$] (2) at (2,0) {};
            \node[vertex, label=-90:$4$] (3) at (1,-1) {};

            \draw (0) -- (1) -- (2) -- (3) -- (0);
        \end{scope}
    \end{tikzpicture}
    \caption{Example that shows that the search $\A_1$ is not self-similar as it does not fulfill condition~(i) of \cref{def:self-similar}. The orderings $\sigma$ and $\tau$ are $\A_1$-orderings of $Z$ and $Z'$, respectively. An $\A_1$-ordering of $G$ that contains $\sigma$ and $\tau$ as subordering has to start with the start vertex of $\tau$. However, then the search has to follow an DFS ordering on $Z$ which does not fit to $\sigma$.}
    \label{fig:self-similar}
\end{figure}

\begin{theorem}\label{thm:self-similar-label}
    Let $\prec_\A$ be a partial order of the finite subsets of $\N^+$. We define $\A$ to be the graph search LabelSearch($\prec_\A$). $\A$ is self-similar if
    \begin{enumerate}[(a)]
        \item \label{assumption:self-similarity-a}
        for each finite nonempty label $A$ it holds that 
        $\emptyset \prec_\A A$;
        \item \label{assumption:self-similarity-b}
        for all labels $A$ and $B$ and every non-decreasing function $\operatorname f \colon \N^+ \rightarrow \N^+$ it holds that
        \[A \prec_\A B \quad \Leftrightarrow \quad \operatorname f[A] \prec_\A \operatorname f[B];\]
        \item \label{assumption:self-similarity-c}
        for all labels $A$, $B$, $C$ it holds that
        if $A$ and $B$ are incomparable due to $\prec_\A$ and $B \prec_\A C$, then $A \prec_\A C$.
    \end{enumerate}
\end{theorem}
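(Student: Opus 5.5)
The plan is to verify the two conditions of \cref{def:self-similar} separately, working directly with the eligibility rule of LabelSearch($\prec_\A$). First I would extract two structural facts from the assumptions. From~(\ref{assumption:self-similarity-c}) and irreflexivity, incomparability under $\prec_\A$ is transitive (together with the dual obtained by the symmetric argument), so $\prec_\A$ is a strict weak order. Hence at every step the eligible vertices are exactly those whose label lies in the top incomparability class among the unvisited vertices.

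The second fact is a projection lemma. Fix a cut vertex decomposition $(Z,Z',\start)$ and an ordering $\pi$ of $G$. Define the non-decreasing map $f(j) = |\{w \in V(Z) \mid w \pikleiner \invpi{j}\}|$. For every vertex $z \in V(Z)\setminus\{\start\}$, all neighbours of $z$ lie in $Z$. Consequently the label of $z$ at any step of $\pi$ is mapped by $f$ exactly onto its label at the corresponding step of $\pi_{\vert V(Z)}$ run on $Z$. For $\start$ this holds only after discarding the positions of its $Z'$-neighbours, so $\start$ needs separate care. By~(\ref{assumption:self-similarity-b}), comparisons between labels of such vertices are preserved in both directions under $f$.

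For condition~(ii), let $\pi \in \A(G)$ with $\invpi{1} \in V(Z')$. I would first show that $\start$ precedes all of $V(Z)\setminus\{\start\}$ in $\pi$. Before $\start$ is visited, these vertices have empty labels. Since $G$ is connected, after step~1 some unvisited vertex has a nonempty label, so by~(\ref{assumption:self-similarity-a}) an empty-labelled vertex is never eligible. After that, whenever $\pi$ picks some $z \in V(Z)\setminus\{\start\}$, no unvisited $y \in V(Z)$ has a label strictly greater than that of $z$ in $G$. Both labels involve only $Z$-positions, so applying $f$ and~(\ref{assumption:self-similarity-b}) gives the same statement in $Z$ for $\pi_{\vert V(Z)}$. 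Hence $\pi_{\vert V(Z)}$ is a valid run of LabelSearch($\prec_\A$) on $Z$ starting in $\start$.

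For condition~(i), given $\sigma \in \A(Z,\start)$ and $\tau \in \A(Z')$, I would build $\pi$ greedily as a shuffle. The prefix of $\tau$ up to $\start$ comes first; if $\tau$ starts in $Z$-side terms, i.e.\ at $\start$, this prefix is empty. After that, at each step $\pi$ takes either the next unused vertex of $\sigma$ or the next unused vertex of $\tau$, choosing the one whose current label in $G$ is larger in the weak order, breaking ties arbitrarily. The main obstacle is proving that the chosen vertex is eligible in $G$. The approach is to suppose some unvisited $y$ has a strictly larger label, and to assume without loss of generality $y \in V(Z)$. Projecting with $f$ to $Z$ via~(\ref{assumption:self-similarity-b}), $y$ would beat the next $\sigma$-vertex in $Z$, contradicting $\sigma \in \A(Z,\start)$. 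By the weak-order property, the chosen vertex is at least as large as the next $\sigma$-vertex, so it is beaten too, which yields the contradiction. The case of $\start$'s mixed label is handled by noting that $\start$ is visited exactly when it is next in $\tau$, and that $\sigma$ starts with $\start$. Finally, I would check that $\pi$ extends both $\sigmakleiner$ and $\taukleiner$ by construction.
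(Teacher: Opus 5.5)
Your proposal is correct and follows essentially the paper's proof. Part (ii) is the same argument: by (a) the search must pass through $\start$ before entering the rest of $Z$, and label comparisons then transfer to $Z$ by (b). For part (i), the paper defines $\pi = \A^+(G,\tau\sigma)$ and shows inductively that it always visits the next unvisited vertex of $\sigma$ or of $\tau$, using your ingredients: projection via (b) so that neither candidate is beaten from its own side, (a) for the empty labels before $\start$ is visited, and (c) to compare across the cut vertex. The differences are packaging only. You build the shuffle explicitly, after first noting that (c) makes $\prec_\A$ a strict weak order, and then verify eligibility. The paper instead lets the tie-breaker $\tau\sigma$ produce a valid ordering and then checks that it interleaves $\sigma$ and $\tau$.
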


\begin{proof}
    \newcommand{\preNeighbors}[4]{N^{#1}_{#2}(#3,#4)}
    Throughout this proof, we use the following notation: For every graph $G$, vertex ordering $\sigma$ of $G$ and $u,v \in V(G)$ let $\preNeighbors \sigma G u v$ be the set  \[\preNeighbors \sigma G u v \coloneqq \{ \sigma(w) \mid w \in N_G(u),\; w \sigmakleiner v\}.\]

    Note that if $\sigma$ was generated by a label search algorithm then $\preNeighbors \sigma G u v$ is exactly the label of $u$ in the iteration in which $v$ was visited. It is easy to see that $\sigma$ is an element of $\A(G)$ if and only if
    \begin{equation}
    \label{eq:criterium_label_search}
        \forall u,v \in V(G) \colon \quad u \ssigmakleiner v \quad \Rightarrow \quad \preNeighbors \sigma G u u \not \prec_\A \preNeighbors \sigma G v u.
    \end{equation}
    For details, we refer to \cite[Property 3.1]{corneil2016tie}. From now, let $G = (V,E)$ be an arbitrary graph with cut vertex composition $(Z,Z', \start)$. \\

    \proofsubparagraph{Proof of Property (i) of \Cref{def:self-similar}} Let $\sigma$ be an $\A$-ordering of $Z$ starting in $\start$ and $\tau$ be an $\A$-ordering of $Z'$. We have to prove the existence of an ordering $\pi \in  \A(G)$ such that $\pi$ is an extension of $\sigma$ an $\tau$. We claim that $\pi \coloneqq \A^+(G,\tau\sigma)$ satisfies this condition, where
    \[\tau\sigma \coloneqq (\invtau 1, \dots , \invtau {\vert Z' \vert}, \invsig 2,\dots ,\invsig {\vert Z \vert} ). \]
    
    Since $\pi$ is by definition an $\A$ ordering of $G$ it suffices to verify that $\pi$ is an extension of $\sigma$ and $\tau$. We do this by arguing inductively that the label search algorithm with the given tie breaker chooses in each iteration either the leftmost unvisited vertex of $Z$ in $\sigma$ or the leftmost unvisited vertex of $Z'$ in $\tau$. This is trivial in the first iteration. So, let $i \geq 2$ and assume that the above statement holds for all previous iterations. Now let $W_i$ be the set of unvisited vertices of $Z$, $B_i$ be the set of all unvisited vertices of $Z'$ and $w_i$ as well as $b_i$ the left most unvisited vertices regarding $\sigma $ and $\tau$, respectively. More formally, it holds that
    \begin{align*}
        W_i &\coloneqq \{v \in V(Z) \mid \pi(v) \geq i\};\\
        B_i &\coloneqq \{v \in V(Z') \mid \pi(v) \geq i\};\\
        w_i &\coloneqq \operatorname{argmin}\{\sigma(v) \mid v \in W_i\};\\
        b_i &\coloneqq \operatorname{argmin} \{ \tau(v) \mid v \in B_i\}.
    \end{align*}
    Furthermore, for every vertex $v \in V(G)$, let $L(v)$ be the label of $v$ in the $i$-th iteration.
    \begin{claim}
        \label{claim:self-similar-1}
        If $W_i$ is nonempty, then there is no vertex $x \in W_i$ with $L(w_i) \prec_\A L(x)$.
    \end{claim}
    \begin{claimproof}
        First, look at the case where $\start$ is in $W_i$. Since $\start$ is the leftmost vertex of $W_i$ in $\sigma$, it is easy to see that in this case $w_i = \start$ and the labels of all vertices $x \in W_i \setminus \{\start\}$ are empty. Thus, using Assumption (\ref{assumption:self-similarity-a}) of this theorem, the claim is trivially true. So, in the following, we can assume that $\start$ is not an element of $W_i$. We argue indirect. Imagine that there would be an $x \in W_i$ with $L(w_i) \prec_\A L(x)$. We know that in all previous iterations the vertices of $Z$ were chosen in the same order as in $\sigma$. So, there is a non-decreasing function $\operatorname f\colon \N^+ \rightarrow \N^+$ such that $\pi(v) = \operatorname f(\sigma(v))$ for all $v \in V(Z)$ with $x \ssigmakleiner w_i$. This fact as well as the fact that $\start \not \in W_i$ and the special structure of $G$ imply the following:
        \begin{alignat*}{3}
            L(x) &= \;\{\operatorname f(\sigma(v)) \mid v \in  N_Z(x), \;\sigma(v) < \sigma(w_i)\} &&= \operatorname f[\preNeighbors \sigma Z x {w_i}]\\
            L(w_i) &= \;\{\operatorname f(\sigma(v)) \mid v \in N_Z(w_i),\; \sigma(v) < \sigma(w_i)\} \;\;&&= \operatorname f[\preNeighbors \sigma Z {w_i} {w_i}]
        \end{alignat*}
        Now applying Assumption (\ref{assumption:self-similarity-b}) of this theorem, we obtain that $L(w_i) \prec_\A L(x)$ implies $\preNeighbors \sigma Z {w_i} {w_i} \prec_\A \preNeighbors \sigma Z x {w_i}$. Thus, $\sigma$ is not an $\A$-ordering of $Z$; a contradiction.
    \end{claimproof}

    Using the same arguments, one can also prove the following. 
    
    \begin{claim}
        \label{claim:self-similar-2}
        If $B_i$ is nonempty, then there is no vertex $x \in B_i$ with $L(b_i) \prec_\A L(x)$.
    \end{claim}

    Finally, we can show that $\pi$ follows $\sigma$ and $\tau$.
    
    \begin{claim}
        \label{claim:slef-similar-3}
        Either $\invpi{i} = b_i$ or $\invpi{i} = w_i$.
    \end{claim}

    \begin{claimproof}
    If either $W_i$ or $B_i$ is empty, then \Cref{claim:slef-similar-3} is a trivial conclusion from \Cref{claim:self-similar-1}, \Cref{claim:self-similar-2} and the definitions of $A^+$ and $\tau\sigma$. So, we may assume $W_i$ and $B_i$ to be nonempty. 

    Suppose $\invpi i$ is not $ b_i$. Then there has been an $x \in V(G)$ with $L(b_i) \prec_\A L(x)$. 
    Now, let $b'$ be any element of $B_i$. If $L(b') \prec_\A L(b_i)$, then we can conclude that $L(b') \prec_\A L(x)$ by transitivity of $\prec_\A$. Otherwise, \Cref{claim:self-similar-2} implies that $L(b')$ and $L(b_i)$ are incomparable and we can use Assumption (b) of this theorem to still conclude that $L(b') \prec_\A L(x)$. In any case, the label of $x$ dominates all labels of vertices in $B_i$. Thus, $\invpi{i}$ has to be an element of $W_i$.

   If $\invpi{i} \neq w_i$, then there is some $y \in V(G)$ such that $L(y)$ dominates $L(w_i)$. The vertex $y$ must be an element of $B_i$ since \Cref{claim:self-similar-1} forbids $L(w_i) \prec_\A L(y)$ for any $y \in W_i$. However, this implies that $L(w_i) \prec_\A L(y) \prec_\A L(x)$. Since $x \in W_i$, this contradicts \Cref{claim:self-similar-1}. Thus, it holds $w_i = \invpi{i}$.
    \end{claimproof}

    \proofsubparagraph{Proof of Property (ii) of \Cref{def:self-similar}} Let $\pi$ be an arbitrary $\A$-ordering of $G$ starting in $Z'$. We want to prove that $\sigma \coloneqq \pi_{\vert V(Z)}$ is an $\A$-ordering of $Z$ starting in $\start$. Due to Assumption (a) of this theorem, we do not visit any vertex without a visited neighbor as long as there are still vertices with visited neighbors. This implies that if $\A$ starts in $Z'$, then it has to pass through $\start$ before it can visit any other vertices of $Z$. Thus, $\sigma$ starts in $\start$.
    
    Now, we want to use the condition formulated in Formula $(\ref{eq:criterium_label_search})$ to prove that $\sigma$ is an $\A$-ordering of $Z$. Let $u,v$ be vertices of $Z$ with $u \ssigmakleiner v$. If $\start$ is an element of $\{u, v\}$, then we use the following argument: We know that $\start$ is the first vertex of $\sigma$ and thus $u = \start$. This already implies $\preNeighbors \sigma Z u u = \preNeighbors \sigma Z v u = \emptyset$. Thus, $\preNeighbors \sigma Z u u $ and $\preNeighbors \sigma Z v u$ are incomparable with respect to $\prec_\A$.

    From now on, we can assume that $\start$ is not an element of $\{u,v\}$. This implies that all neighbors of $u$ and $v$ in $G$ are vertices of $Z$. Since the vertices of $Z$ are ordered in $\sigma$ as they are in $\pi$, we can find a non-decreasing function $\operatorname f\colon \N^+ \rightarrow \N^+$ such that $\pi(x) = \operatorname f(\sigma(x))$ for all $x \in V(Z)$. Therefore, it holds for $w \in \{u,v\}$ that
    \begin{align*}
        \preNeighbors \pi G w u &=  \{\pi(x) \mid x\in N_G(w), \; x \spikleiner u\} \\&= \{\operatorname f(\sigma(x)) \mid x \in N_Z(w), \; x \ssigmakleiner u\} = \operatorname f[\preNeighbors \sigma Z w u].
    \end{align*} 
    We know that $\pi$ is an $\A$-ordering of $G$. This implies that $\preNeighbors \pi G u u \not \prec_\A \preNeighbors \pi G v u$. With the above and assumption (b) of this theorem, we conclude that $\preNeighbors \sigma Z u u \not \prec_\A \preNeighbors \sigma Z v u$, i.e., $\sigma$ is an $\A$-ordering of $Z$.
\end{proof}

Note that not all assumptions in the above theorem seem to be necessary for self-similarity since $\prec_\text{MNS}$ violates Assumption (c) but is still self-similar.

Self-similar searches allow for the following result.

\begin{lemma}
    \label{lem:aequivalenz_rooted_unrooted_instanz}
    Let:
    \begin{enumerate}[-]
        \item $\A$ be a graph search,
        \item $W$ be a graph,
        \item $T_W$ be a spanning tree of $W$,
        \item $\notStart$ be vertex of $W$ such that $T_W \in \G(G,\tau)$ for some $\tau \in \A(W,\notStart)$,
        \item $\start$ be a vertex of $W$ such that $T_W \notin \G(G,\rho)$ for any $\rho \in \A(W,\start)$,
        \item $G$ be graph such that $V(G) \cap V(W) = \start$,
        \item and $T_G$ be a spanning tree of $G$.
    \end{enumerate}
    If $\A$ is self-similar on $G \cup W$, then the following statements are equivalent:
    \begin{enumerate}[(i)]
        \item $T_G$ is a $\G$-tree of an $\A$-ordering $\sigma$ of $G$ starting in $\start$.
        \item $T_G \cup T_W$ is a $\G$-tree of an $\A$-ordering of $G \cup W$.
    \end{enumerate}
\end{lemma}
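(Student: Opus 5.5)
We prove the two directions separately, using the cut vertex decomposition $(G, W, \start)$ of $G \cup W$. This is a valid decomposition since $V(G) \cap V(W) = \{\start\}$. We also note that $T_G \cup T_W$ is a spanning tree of $G \cup W$, because two trees sharing exactly one vertex glue to a tree. Throughout we use characterization~(\ref{prop:char3}) of \cref{lemma:char-g-trees}. For a root $r$, the ancestor relation of $T_G \cup T_W$ restricted to each side is easy to describe. If $r \in V(W)$, then on $V(G)$ it is $\Rrel{T_G}{\start}$, and every vertex of $G$ has all $W$-ancestors on the $r$-$\start$ path in $T_W$. If $r \in V(G)$, the symmetric statement holds with the roles of $G$ and $W$ exchanged.

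\emph{(i) $\Rightarrow$ (ii).} Let $\sigma \in \A(G,\start)$ with $T_G \in \G(G,\sigma)$, and let $\tau \in \A(W,\notStart)$ with $T_W \in \G(W,\tau)$. Self-similarity property~(i), applied with $Z = G$ and $Z' = W$, gives $\pi \in \A(G \cup W)$ extending both $\sigma$ and $\tau$. Since every vertex of $G$ other than $\start$ is to the right of $\start$ in $\sigma$, the ordering $\pi$ does not start in $G - \start$. Hence $\pi$ starts at $\invtau{1} = \notStart$ (the first vertex of $\tau$ lies in $W$).

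It remains to show $\Rrel{T_G \cup T_W}{\notStart} \subseteq \pikleiner$. Take a pair $(u,v)$ of this relation.
\begin{itemize}
\item If both vertices lie in $W$, the pair is in $\Rrel{T_W}{\notStart} \subseteq \taukleiner \subseteq \pikleiner$.
\item If both lie in $G$, the pair is in $\Rrel{T_G}{\start} \subseteq \sigmakleiner \subseteq \pikleiner$.
\item If $u \in V(W)$ and $v \in V(G)$, then $u$ is an ancestor of $\start$ in $T_W$, so $u \pikleiner \start \pikleiner v$.
\end{itemize}
Transitivity of $\pikleiner$ finishes this direction.

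\emph{(ii) $\Rightarrow$ (i).} Let $\pi \in \A(G \cup W)$ with $T_G \cup T_W \in \G(G \cup W, \pi)$, and let $r = \invpi{1}$.

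First we show $r \in V(W)$, arguing by contradiction. Suppose $r \in V(G) \setminus \{\start\}$, or $r = \start$. Then $r$ lies in $Z' := G$ relative to the decomposition $(W, G, \start)$. Property~(ii) of self-similarity then gives $\pi_{\vert V(W)} \in \A(W,\start)$; for $r = \start$ we note $\start \in V(G)$, so the property still applies. Moreover, every pair of $\Rrel{T_W}{\start}$ lies in $\Rrel{T_G \cup T_W}{r}$, because the path from $r$ to any vertex of $W$ passes through $\start$. So $\Rrel{T_W}{\start} \subseteq \pikleiner$, and restricting gives $T_W \in \G(W, \pi_{\vert V(W)})$ by \cref{lemma:char-g-trees}. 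This contradicts the hypothesis on $\start$. Hence $r \in V(W) \setminus \{\start\}$.

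Now apply property~(ii) with $Z = G$ and $Z' = W$. It gives $\sigma := \pi_{\vert V(G)} \in \A(G,\start)$. As observed above, $\Rrel{T_G}{\start} \subseteq \Rrel{T_G \cup T_W}{r} \subseteq \pikleiner$, and restricted to $V(G)$ this lies in $\sigmakleiner$. So $T_G \in \G(G,\sigma)$.

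The main subtlety is the bookkeeping around which side the start vertex lies on. In particular, the case $r = \start$ must be handled correctly in the contradiction step: we treat $\start$ as belonging to $G$, so that property~(ii) is applicable there. Beyond that, the only checks are the routine ancestor-relation computations for glued trees.
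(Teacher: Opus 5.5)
Your proof is correct and takes the same route as the paper: the cut vertex decomposition $(G,W,\start)$ with property~(i) of self-similarity gives (i)$\Rightarrow$(ii), and property~(ii), used in both orientations, gives (ii)$\Rightarrow$(i), where a start vertex on the $G$-side (including $\start$ itself) contradicts the hypothesis on $\start$. The differences are only in presentation: you check the $\G$-tree conditions through the ancestor-relation characterization~(iii) of \cref{lemma:char-g-trees} instead of verifying the unique earlier tree neighbor directly, and you treat the case $\invpi{1} = \start$ explicitly, which the paper's case split leaves implicit.
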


\begin{proof}
    $(i) \Rightarrow (ii):$ It is easy to see that $(G,W,\start)$ is a cut vertex decomposition of $G \cup W$. $\A$ is self-similar on $G \cup W$. Thus, there is an $\A$-ordering $\pi$ of $G \cup W$ such that the order of the vertices of $G$ in $\pi$ is the same as in $\sigma$ and the order of the vertices of $W$ is the same as in $\tau$. We claim that $T_G \cup T_W$ is a $\G$-tree of $G \cup W$ regarding $\pi$.
    
    Since the vertices of $G$ are ordered in $\pi$ as in $\sigma$, the leftmost vertex of $G$ in $\pi$ must be $\start$. By the same argument we can conclude that the first vertex of $W$ in $\pi$ must be $\notStart$. Especially, the leftmost vertex in $\pi$ is either $\start$ or $\notStart$. Since $\start$ is a vertex of $W$, we know that $\notStart \staukleiner \start$ and therefore the leftmost vertex of $\pi$ must be $\notStart$. It remains to show that every vertex $w$ with $w\neq \notStart$ of $G \cup W$ has exactly one neighbor $u$ in $T_G \cup T_W$ that is to the left of $w$ in $\pi$. 

    If $w$ is a vertex of $W$, then we use the fact that $T_W$ is a $\G$-tree of $W$ regarding $\tau$ which means that there exists exactly one neighbor $u$ of $w$ in $T_W$ with $u \staukleiner w$. By the construction of $\pi$, we know that $u$ is also the only neighbor of $w$ in $T_W$ that is to the left of $w$ in $\pi$. If $w \neq \start$, then all neighbors of $w$ in $T_G \cup T_W$ are also in $T_W$. If $w = \start$, then $w$ might have neighbors in $T_G \cup T_W$ that are not contained in $T_W$. However, we have already seen that $\start$ is the leftmost vertex of $G$ in $\pi$. In both cases we can conclude the uniqueness of $u$.
    The same argumentation works if $w$ is a vertex of $G$.

    $(ii) \Rightarrow (i):$ Let $\pi$ be an $\A$-ordering of $G \cup W$ such that $T_G \cup T_W$ is a $\G$-tree of $G \cup W$ regarding $\pi$. Let $x$ be the leftmost vertex in $\pi$.
    
    First, we consider the case where $x$ is a vertex of $W$. Using property $(ii)$ of \cref{def:self-similar}, we can construct an $\A$-ordering $\sigma$ of $G$ that starts with $\start$ by reducing $\pi$ to vertices of $G$, i.e., $ \sigma \coloneqq \pi_{\vert V(G)}$.
    We claim that $T_G$ is a $\G$-tree of $G$ regarding $\sigma$. Let $y$ be an arbitrary vertex of $G$ with $y \neq \start$.
    $T_G \cup T_W$ is a $\G$-tree of $G \cup W$ regarding $\pi$. Therefore, there exists exactly one neighbor $w$ of $y$ in $T_G \cup T_W$ with $w \spikleiner y$. Since $y$ is a vertex of $G$ and $y \neq \start$ we can conclude that $w$ is a vertex of $G$. The vertices of $G$ are ordered in $\sigma$ as they are in $\pi$. Consequently, $w$ is also the unique neighbor of $y$ in $T_G$ with $y \ssigmakleiner w$. This already implies the desired properties of $T_G$ and $\sigma$.

    Finally, we have to consider the case where $x$ is a vertex of $G$. By swapping the roles of $G$ and $W$ in the above argument we can conclude that $T_W$ is a $\G$-tree of $W$ of an $\A$-ordering starting in $\start$. This contradicts the requirements of this lemma. Therefore, this case cannot occur.
\end{proof}

The above lemma can now be used to give the main result of this paragraph.

\begin{theorem}
    \label{the:Reduktion_rooted_auf_unrooted}
    Let $\A$ be a self-similar search. 
    Let $\C$ be a class of graphs with the following property:
    \[ \forall G,W \in \C\colon \quad \lvert V(G) \cap V(W)\rvert = 1 \quad \Rightarrow \quad G \cup W \in \C\]
    There is a linear-time reduction from the rooted $\G$-tree recognition problem of $\A$ on class $\C$ to the $\G$-tree recognition problem of $\A$ on $\C$.
\end{theorem}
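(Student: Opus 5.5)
The plan is to reduce an instance $(G,T,\start)$ of the rooted problem to a single unrooted instance. I would not use \cref{lem:aequivalenz_rooted_unrooted_instanz} directly, because it needs a gadget $W \in \C$ with a ``good'' vertex $\notStart$ and a ``bad'' vertex $\start$, and $\C$ is arbitrary. Instead I would glue two copies of the instance to each other. Let $(G_1,T_1)$ and $(G_2,T_2)$ be two disjoint copies of $(G,T)$, and identify the two copies of $\start$ into one vertex, still called $\start$. Set $H \coloneqq G_1 \cup G_2$ and $T_H \coloneqq T_1 \cup T_2$. Both copies lie in $\C$, assuming as usual that graph classes are closed under isomorphism, and they share exactly one vertex, so the hypothesis on $\C$ gives $H \in \C$. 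Also $T_H$ is a spanning tree of $H$, and the construction takes linear time. It remains to show that $(G,T,\start)$ is a yes-instance if and only if $(H,T_H)$ is.

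For ``$\Rightarrow$'', take $\sigma \in \A(G,\start)$ with $T \in \G(G,\sigma)$, and let $\sigma_1,\sigma_2$ be its copies on $G_1,G_2$. The triple $(G_1,G_2,\start)$ is a cut vertex decomposition of $H$. Property~(i) of \cref{def:self-similar}, applied with $Z=G_1$ and $Z'=G_2$, gives $\pi \in \A(H)$ that extends both $\sigma_1$ and $\sigma_2$. Both orderings start in $\start$, so $\pi$ starts in $\start$. To check that $T_H \in \G(H,\pi)$, I would argue exactly as in the $(i)\Rightarrow(ii)$ direction of \cref{lem:aequivalenz_rooted_unrooted_instanz}. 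Take a vertex $w \neq \start$, say $w \in V(G_1)$. All of its $T_H$-neighbours lie in $T_1$, and the relative order of $V(G_1)$ in $\pi$ is the same as in $\sigma_1$. Hence $w$ has exactly one $T_H$-neighbour to its left in $\pi$.

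For ``$\Leftarrow$'', let $\pi \in \A(H)$ with $T_H \in \G(H,\pi)$, and let $x = \invpi{1}$. By symmetry we may assume $x \in V(G_2)$; this also covers $x = \start$. Property~(ii) of \cref{def:self-similar}, applied with $Z = G_1$ and $Z' = G_2$, gives $\sigma \coloneqq \pi_{\vert V(G_1)} \in \A(G_1,\start)$. Now let $y \in V(G_1)\setminus\{\start\}$. The unique $T_H$-neighbour of $y$ to its left in $\pi$ lies in $G_1$, since all $T_H$-neighbours of $y$ do. The order on $V(G_1)$ is inherited from $\pi$, so $T_1 \in \G(G_1,\sigma)$. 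Transporting this back along the isomorphism $G_1 \cong G$ gives a rooted solution for $(G,T,\start)$.

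The main obstacle is making sure the self-similarity axioms apply in exactly these configurations. In ``$\Rightarrow$'', condition~(i) only requires $\sigma$ to start in $\start$, and $\tau$ is arbitrary, so taking $\tau=\sigma_2$ is allowed. In ``$\Leftarrow$'', condition~(ii) must be used when the start vertex lies in $Z'$, and this includes the case $x=\start$ because $\start\in V(Z')$. I would also check that self-similarity, which is assumed on all connected graphs, applies to $H$; it does, since $H$ is connected.
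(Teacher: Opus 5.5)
Your proof is correct, and it takes a genuinely different route from the paper.

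The paper attaches an external gadget. It looks for some $W \in \C$ with a spanning tree $T_W$ that is a \G-tree of an $\A$-ordering starting in some $\notStart$, but of no $\A$-ordering starting in $\start$. It glues $W$ to $G$ at the root and invokes \cref{lem:aequivalenz_rooted_unrooted_instanz}, which uses property~(ii) on the $W$ side to rule out orderings that start inside $G$. If no such gadget exists in $\C$, then every instance has either every root feasible or none, and the identity map $(G,T_G,x)\mapsto(G,T_G)$ serves as the reduction. As the paper itself notes, this is non-constructive, since one may not know which case holds.

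Your self-gluing replaces the gadget by a second copy of the instance. Whichever copy contains $\invpi{1}$, property~(ii) gives the other copy an $\A$-ordering rooted at $\start$ in which its tree is still a \G-tree. Because the copies are identical, that settles the original rooted instance. This removes the case distinction, gives an explicit linear-time reduction, and needs nothing from $\C$ beyond the stated gluing closure. It would also make the caveats after the theorem and after \cref{corol:sss} unnecessary. The only cost is doubling the instance size, which is harmless.

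One point you should state explicitly: you use that $\A$ and $\C$ are invariant under isomorphism. This happens when you pass $\sigma$ to the copy $G_2$, and when you transport a solution found on $G_1$ or $G_2$ back to $G$. The paper relies on the same convention when it relabels $W$ ``without loss of generality'', and it holds for all searches considered.
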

    
\begin{proof}
    Let $(G, T_G, x)$ be an instance the rooted $\G$-tree recognition problem of $\A$ where $G \in \C$. First we assume that there exists a graph $W \in \C$ such that $W$ has a spanning tree $T_W$ and vertices $\notStart$ and $\start$ with the following properties: 
    
    \begin{itemize}
        \item $T_W$ is a $G$-tree of an $\A$-ordering of $W$ starting in $\notStart$, and
        \item there is no ordering in $\A(W,\start)$ with $\G$-tree $T_W$.
    \end{itemize}
    W.l.o.g.~we may assume that $\start = x$ and $V(G) \cap V(W) = \{\start\}$. \Cref{lem:aequivalenz_rooted_unrooted_instanz} implies that there is an $\A$-ordering $\sigma$ of $G \cup W$ such that $T_G \cup T_W \in \G(G \cup W, \sigma)$ if and only if there is an $\A$-ordering $\tau$ of $G$ starting in $\start$ such that $T_G \in \G(G,\tau)$. The condition on $\C$ implies that $G \cup W \in \C$. Furthermore, $G \cup W$ has linear size in the size of $G$ since $W$ is of constant size. Thus, $(G \cup W,T_G \cup T_W)$ is a linear-time reduction from the rooted \G-tree recognition problem of $\A$ on class $\C$ to the \G-tree recognition problem of $\A$ on $\C$.

    It remains to show that a similar reduction can be constructed if there are no $W$, $T_W$, $\notStart$, $\start$ as described above. In that case it holds that 
    $T_G$ is not a $\G$-tree of any $\A$-ordering of $G$ or for every vertex $v$ of $G$ there exists an ordering $\sigma \in \A(G,v)$ such that $T_G \in \G(G,\sigma)$. Either way, in this case our reduction may map an instance $(G, T_G, x)$ of the rooted $\G$-tree recognition problem to the instance $(G, T_G)$ of the unrooted problem.
\end{proof}

Note that this proof is non-constructive. This means that we know that one of the two described reductions exists but we do not know which of the two is the correct one. It might even be possible that it is undecidable whether the graph $W$ with the spanning tree $T_W$ and the vertices $\start$ and $\notStart$ exist in $\C$ or not. Nevertheless, in all the cases that we consider here it is straightforward to find such a graph $W$.

\Cref{the:Reduktion_rooted_auf_unrooted} straightforwardly implies the following result.

\begin{corollary}\label{corol:sss}
    Let $\A$ be a self-similar search. There exists a polynomial-time algorithm for the rooted $\G$-tree recognition problem of $\A$ if and only if there exists a polynomial-time algorithm for the unrooted $\G$-tree recognition problem of $\A$.
\end{corollary}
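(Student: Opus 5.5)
The corollary has two directions, and I will handle them separately.

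\emph{Rooted to unrooted.} Suppose we have a polynomial-time algorithm for the rooted problem. Given an unrooted instance $(G,T)$, I run this algorithm on $(G,T,v)$ for every $v \in V(G)$. I answer yes if and only if at least one of these runs answers yes. This is correct because every ordering $\sigma \in \A(G)$ starts with some vertex $v$, so it belongs to $\A(G,v)$. It makes $n(G)$ polynomial calls, and this direction uses no property of $\A$.

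\emph{Unrooted to rooted.} This is where self-similarity enters. I apply \cref{the:Reduktion_rooted_auf_unrooted} with $\C$ the class of all connected graphs. Its closure hypothesis holds trivially: if $G$ and $W$ are connected and $\lvert V(G)\cap V(W)\rvert = 1$, then $G\cup W$ is connected. Since $\A$ is self-similar on every connected graph, the theorem gives a linear-time reduction from the rooted problem to the unrooted one. Composing this reduction with the assumed polynomial-time algorithm for the unrooted problem yields a polynomial-time algorithm for the rooted problem.

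\emph{Main obstacle.} The only delicate point is that the proof of \cref{the:Reduktion_rooted_auf_unrooted} is non-constructive. However, the corollary only claims that a polynomial-time algorithm \emph{exists}, and that follows in either case of the proof.
\begin{itemize}
\item If a gadget $(W,T_W,\notStart,\start)$ exists, it has constant size and can be hard-wired into the algorithm.
\item Otherwise, the reduction is the identity map $(G,T_G,x)\mapsto(G,T_G)$.
\end{itemize}
Either way a polynomial-time algorithm exists, even if we cannot say which of the two it is. This is all the existential statement requires, so no further argument is needed.
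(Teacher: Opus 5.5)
Your proof is correct and follows the paper's argument: you get rooted-to-unrooted by trying every start vertex, and unrooted-to-rooted by applying \cref{the:Reduktion_rooted_auf_unrooted} to the class of all connected graphs, which is trivially closed under unions sharing one vertex. Your remark that the non-constructive reduction still suffices for this purely existential claim matches the paper's own note after the corollary.
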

Note again that the existence of the algorithm does not necessarily imply that we can also specify how it looks like.

\section{NP-Hardness}

As we have seen in \cref{thm:gs}, the (rooted) \G-tree recognition problem of Generic Search is trivial. In this section, we will show that for all other graph searches for which the recognition problem of \cf-trees and \cl-trees have been considered the \G-tree recognition problem is \NP-complete. Note that it trivially holds for these searches that the \G-tree recognition problem is in \NP{} since we can use a search ordering as certificate.

\subsection{BFS and LBFS}

To prove the \NP-hardness of the \G-tree recognition problem of BFS and LBFS, we consider the following problem.

\begin{problem}[Beginning-end-vertex problem of graph search $\A$]~
    \begin{description}
        \item[Instance:] Graph $G$, vertices $s,t \in V(G)$
        \item[Question:] Is there an \A-ordering $\sigma$ that starts with $s$ and ends in $t$?
    \end{description}
\end{problem}

The Beginning-end-vertex problems of BFS and LBFS are \NP-complete as was shown by Charbit et al.~\cite{charbithabibmamcarz14} and Corneil et al.~\cite{corneil2010}, respectively. We will now reduce these problems to the respective \G-tree recognition problem.

\begin{theorem}\label{thm:npc-bfs}
    The (rooted) \G-tree recognition problem of BFS and LBFS is \NP-complete.
\end{theorem}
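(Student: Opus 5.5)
We would prove \NP-hardness of the \emph{rooted} problem for BFS and LBFS by a polynomial reduction from the Beginning-end-vertex problem of the same search. That problem is \NP-complete by Charbit et al.~\cite{charbithabibmamcarz14} for BFS and Corneil et al.~\cite{corneil2010} for LBFS. The unrooted version would then follow from \cref{the:Reduktion_rooted_auf_unrooted}, applied with $\C$ the class of all graphs (which is trivially closed under gluing at one vertex). This uses that BFS and LBFS are self-similar, as noted after \cref{def:self-similar}. Membership in \NP{} is clear, since a search ordering serves as certificate. For the unrooted step we also need a small gadget $W$ with spanning tree $T_W$ that is a \G-tree from some start vertex $\notStart$ but from no ordering starting at $\start$. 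A short path suffices. Take $W = P_3 = a\,b\,c$ plus a triangle edge $ac$, and let $T_W$ be the path $a\,b\,c$. From $a$, BFS visits $b$ and then $c$, which is consistent with $T_W$. From $c$, BFS visits $a$ and $b$ in layer~1, but $T_W$ forces $b$ before $a$, which is allowed. So this particular $W$ does not work as stated, and the first small step is to search for a tiny explicit $W$ (for example a $C_4$ with a pendant vertex) that has this property.

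For the core reduction, let $(G,s,t)$ be an instance. We may assume $t$ lies in the last BFS layer $N^k_G(s)$, since otherwise the answer is trivially no. The aim is to build $(G',T',s)$ such that the ancestor relation $\Rrel{T'}{s}$ encodes ``every vertex of $G$ is visited before $t$''. By \cref{lemma:char-g-trees}(\ref{prop:char3}), $T'$ is a \G-tree of $\sigma$ exactly when $\sigma$ is a linear extension of $\Rrel{T'}{s}$. Hence we only need $T'$ whose ancestor constraints, together with layeredness, are equivalent to $t$ being last among $V(G)$.

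The plan is to add gadget vertices in a fresh layer $k+1$ and to use the layered structure. Every vertex of $V(G)$ automatically precedes every gadget vertex at distance $k+1$. We then attach $t$ in $T'$ via a gadget so that, within layer $k$, all other vertices of $N^k_G(s)$ are forced to be T-ancestors of $t$, or to appear before it through a chain of tie constraints. Concretely, we would try the following. Subdivide: make $t$ adjacent in $G'$ to a new vertex $t^\ast$ and replace $t$ by a ``copy'' $\hat t$ in layer $k+1$ adjacent to all of $N^k_G(s)$. Choose $T'$ so that $\hat t$ hangs from $t$ and every other vertex of layer $k$ has its T-parent in layer $k-1$. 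We would then verify two things. First, a BFS (resp.\ LBFS) ordering of $G$ from $s$ ending in $t$ extends to $G'$ with the gadget vertices appended at the end, consistent with $T'$. Second, conversely, the restriction of a valid ordering of $G'$ to $V(G)$ is a BFS (resp.\ LBFS) ordering of $G$ that ends in $t$. For LBFS, the second direction additionally needs the gadget vertices not to alter the relative lexicographic labels of vertices in $G$, which holds if they lie beyond the last layer of $G$.

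The main obstacle is designing the gadget so that the tree constraints genuinely force $t$ to be last. Ancestor constraints in a tree can only order a vertex after its T-ancestors. Vertices in the same BFS layer, however, cannot be T-ancestors of one another unless they are adjacent. So ``$t$ is last in its layer'' has to be simulated through the label dynamics of the next layer rather than directly. The approach we would try first is to exploit that the first-visited layer-$k$ neighbor determines the order in layer $k+1$. Gadget vertices in layer $k+1$ would be adjacent to $t$ and to the other layer-$k$ vertices respectively, and $T'$ would demand an order on the gadget vertices that is only achievable if $t$ was visited last.
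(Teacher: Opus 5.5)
Your source problem (the beginning-end-vertex problem), the reduction to $t\in N^k_G(s)$, the plan to place gadget vertices in layer $k+1$, and the lift to the unrooted case via \cref{the:Reduktion_rooted_auf_unrooted} all match the paper. The gap is the gadget itself, which is the entire content of the reduction. Your one concrete construction does not work. The vertices $\hat t$ and $t^\ast$ are leaves of $T'$ hanging from $t$ in layer $k+1$. The only constraint they impose is $t\prec_\sigma \hat t$, and layeredness already guarantees it. So every BFS (resp.\ LBFS) ordering of $G'$ from $s$ has $T'$ as a \G-tree, and every instance is mapped to a yes-instance. Your last paragraph points in the right direction but never pins down a gadget or verifies it.

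The missing idea is to use \emph{adjacent pairs} in layer $k+1$. This gets around your correct remark that same-layer vertices can be related in $T'$ only if they are adjacent. Write $N^k_G(s)=\{v_1,\dots,v_m,t\}$. For every $i$, add new vertices $u_i,w_i$ with edges $u_iv_i$, $w_it$ and $u_iw_i$. Let $T'$ consist of a shortest-path tree of $G$ from $s$ together with $v_iu_i$ and $u_iw_i$; the edge $w_it$ stays outside $T'$. Then $T'$ forces $u_i\prec_\sigma w_i$. Both vertices lie in layer $k+1$, so when $u_i$ is visited, all of $N^k_G(s)$ has been visited. At that moment the labels of $u_i$ and $w_i$ are exactly $\{\sigma(v_i)\}$ and $\{\sigma(t)\}$. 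For both $\prec_{BFS}$ and $\prec_{LBFS}$, $u_i$ can be eligible only if $\sigma(v_i)<\sigma(t)$. Hence $t$ is last among $V(G)$. Since all gadget vertices come after $V(G)$, the restriction of $\sigma$ to $V(G)$ is an (L)BFS ordering of $G$ ending in $t$. Conversely, take an ordering of $G$ from $s$ ending in $(v_1,\dots,v_m,t)$ and append $(u_1,\dots,u_m,w_1,\dots,w_m)$; this gives a valid ordering of $G'$ that has $T'$ as a \G-tree.

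You also do not need to hunt for the auxiliary graph $W$. \cref{the:Reduktion_rooted_auf_unrooted} is proved so that if no such $W$ exists, the identity map is already a reduction. If you want an explicit one, take $W$ to be the $4$-cycle $a\,b\,c\,d$ with tree path $a\,b\,c\,d$. Rooted at $b$, the tree is consistent with layering. Rooted at $a$, it would require $c$ (layer~$2$) to precede $d$ (layer~$1$), which is impossible.
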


\begin{proof}
    Let $(G,s,t)$ be an instance of the beginning-end-vertex problem of (L)BFS. Let $L$ be the set of vertices of $G$ whose distance to $s$ is maximal. If $t \notin L$, then $(G,s,t)$ is a trivial no-instance. Therefore, we may assume that $t \in L$. Let $L = \{v_1, \dots, v_k, t\}$. We build a graph $G'$ from $G$ as follows. For every vertex $i \in \{1, \dots, k\}$ we add vertices $u_i$ and $w_i$ to $G$. We make $u_i$ adjacent to $v_i$ and $w_i$ adjacent to $t$. Furthermore, we add the edge $u_iw_i$ to $G'$. 
    
    We create a spanning tree $T$ of $G$ as follows.
    For every vertex $x \in V(G) \setminus \{s\}$, the tree $T$ contains exactly one edge to some vertex $y \in N_G(x)$ with $\dist{G}{s}{y} = \dist{G}{s}{x} - 1$. Furthermore, for every $i \in \{1, \dots, k\}$, the tree $T$ contains the edges $u_iv_i$ and $u_iw_i$.

    \begin{figure}
    \centering
        \begin{tikzpicture}
        \footnotesize
             \NewDocumentCommand{\irregularline}{%
              O     {2mm}   
              m             
              m            
              D   <> {20}   
            }{{%
              \coordinate (old) at #2;
              \foreach \i in {1,2,...,#4}{
              \draw (old) -- ($ ($#2!\i/(#4+1)!#3$) + (0,#1*rand) $) coordinate (old);
              }
              \draw (old) -- #3;
            }}
            \pgfmathsetseed{41}
            \irregularline[1.5mm]{(0,10)}{(4,5)}
            \pgfmathsetseed{40}
            \irregularline[1.5mm]{(0,10)}{(-4,5)}

            \node[label=180:$s$, vertex] (s) at (0,10) {};

            \draw (4,5) -- (-4,5);

            \coordinate (c1) at (-0.25,9.25);
            \coordinate (c2) at (0.35,9.25);

            \draw[treeedge] (s) -- (c1);
            \draw[treeedge] (s) -- (c2);

            \coordinate (c3) at (-1,8.5);
            \coordinate (c4) at (-0.55,8.5);
            \coordinate (c5) at (-0.125,8.5);
            \coordinate (c6) at (0.25,8.5);
            \coordinate (c7) at (0.95,8.5);
            
            \draw[treeedge] (c1) -- (c3);
            \draw[treeedge] (c1) -- (c4);
            \draw[treeedge] (c1) -- (c5);
            \draw[treeedge] (c2) -- (c6);
            \draw[treeedge] (c2) -- (c7);

            \coordinate (c8) at (-1.9,7.5);
            \coordinate (c9) at (-1.55,7.5);
            \coordinate (c10) at (-0.925,7.5);
            \coordinate (c11) at (-0.25,7.5);
            \coordinate (c12) at (0.15,7.5);
            \coordinate (c13) at (0.65,7.5);
            \coordinate (c14) at (1.2,7.5);
            \coordinate (c15) at (1.8,7.5);
            
            \draw[treeedge] (c3) -- (c8);
            \draw[treeedge] (c3) -- (c9);
            \draw[treeedge] (c3) -- (c10);
            \draw[treeedge] (c5) -- (c11);
            \draw[treeedge] (c5) -- (c12);
            \draw[treeedge] (c5) -- (c13);
            \draw[treeedge] (c6) -- (c14);
            \draw[treeedge] (c6) -- (c15);

            \coordinate (c16) at (-2.6,6.5);
            \coordinate (c17) at (-1.85,6.5);
            \coordinate (c18) at (-1.025,6.5);
            \coordinate (c19) at (-0.55,6.5);
            \coordinate (c20) at (0.15,6.5);
            \coordinate (c21) at (0.75,6.5);
            \coordinate (c22) at (1.2,6.5);
            \coordinate (c23) at (1.6,6.5);
            \coordinate (c24) at (1.9,6.5);
            
            \draw[treeedge] (c9) -- (c16);
            \draw[treeedge] (c9) -- (c17);
            \draw[treeedge] (c10) -- (c18);
            \draw[treeedge] (c10) -- (c19);
            \draw[treeedge] (c11) -- (c20);
            \draw[treeedge] (c13) -- (c21);
            \draw[treeedge] (c13) -- (c22);
            \draw[treeedge] (c13) -- (c23);
            \draw[treeedge] (c15) -- (c24);

            \node (G) at (-2.5,7.5) {$G$};

            \node[vertex, label=180:$v_1$] (v1) at (-3,5.5) {};
            \node[vertex, label=180:$v_2$] (v2) at (-2.25,5.5) {};
            
            \node[] (dots) at (-0.375,5.5) {$\dots$};
            \node[] (dots) at (-0.375,4.5) {$\dots$};
            \node[vertex, label=0:$v_k$] (vk) at (1.5,5.5) {};

            \node[vertex, label=0:$t$] (t) at (3,5.5) {};

            \draw[treeedge] (c17) -- (v1);
            \draw[treeedge] (c17) -- (v2);

            \draw[treeedge] (c20) -- (vk);
            \draw[treeedge] (c23) -- (t);

            \node[vertex, label=270:$u_1$] (u1) at (-3,4.5) {};
            \node[vertex, label=270:$u_2$] (u2) at (-2.25,4.5) {};
            \node[vertex, label=270:$u_k$] (uk) at (1.5,4.5) {};
            \node[vertex, label=270:$w_k$] (wk) at (2.25,4.5) {};
            \node[vertex, label=270:$w_2$] (w2) at (3.5,4.5) {};
            \node[vertex, label=270:$w_1$] (w1) at (4.75,4.5) {};

            \node[] (dots) at (2.875,4.5) {$\dots$};

            \draw[treeedge] (v1) -- (u1);
            \draw[treeedge] (v2) -- (u2);
            \draw[treeedge] (vk) -- (uk);
            \draw (t) -- (w1);
            \draw (t) -- (w2);
            \draw (t) -- (wk);

            \draw[bend angle=45, bend right, treeedge] (u1) to (w1);
            \draw[bend angle=45, bend right, treeedge] (u2) to (w2);
            \draw[treeedge] (uk) to (wk);
        \end{tikzpicture}
        \caption{Construction of the proof of \cref{thm:npc-bfs}. Thick edges represent tree edges. }
    \end{figure}

    We claim that $T$ is $\G$-tree of some (L)BFS of $G'$ starting with $s$ if and only if $t$ is an end vertex of an (L)BFS ordering of $G$ starting with $s$. First assume that $t$ is the end-vertex of the (L)BFS ordering $\sigma$ of $G$ that starts with $s$. W.l.o.g.~we may assume that $\sigma$ ends with the ordering $(v_1, \dots, v_k, t)$. Let $\sigma'$ be the ordering that is constructed from appending $(u_1, \dots, u_k, w_1, \dots, w_k)$ to $\sigma$. It is easy to see that $\sigma'$ is an (L)BFS ordering of $G'$, due to the assumption on the ending of $\sigma$. We claim that $T \in \G(G',\sigma')$. For every vertex $v \in G$ its parent in $T$ has a smaller distance to $s$ than $v$. Thus, the parent is to the left of $v$ in $\sigma$ and, hence, also in $\sigma'$. Furthermore, for every $i$, the parent of $u_i$, namely $v_i$ is to the left of $u_i$ in $\sigma'$ and the parent of $w_i$, namely $u_i$, is to the left of $w_i$ in $\sigma'$. Hence, $T \in \G(G,\sigma')$.

    Now assume that $T \in \G(G', \tau)$ for some (L)BFS ordering $\tau$ of $G$ starting in $s$. For every $i \in \{1,\dots,k\}$, it holds that $u_i \prec_\tau w_i$, due to \cref{lemma:char-g-trees}. By construction of $G'$, it must also hold that $v_i \prec_\tau t$. Therefore, the restriction of $\tau$ to the vertices of $G$ is an (L)BFS ordering of $G$ starting in $s$ and ending in~$t$.

    The hardness of the unrooted problem follows from \cref{the:Reduktion_rooted_auf_unrooted}.
\end{proof}

\subsection{DFS}\label{sec:dfs}

Instead of proving \NP-hardness for the \G-tree recognition problem of DFS, we give a more general result. To this end, we consider the following class of graph searches.

\begin{definition}
    A graph search $\A$ is a \emph{greedy Hamilton search} if it fulfills the following properties:
    \begin{enumerate}[(a)]
        \item For every graph $G$ and every vertex ordering $\sigma$ inducing a Hamiltonian path in $G$ it holds that $\sigma$ is an $\A$-ordering of $G$.
        \item For every graph $G$, every $\sigma \in \A(G)$ and every $v \in V(G)$ it holds that if there is a neighbor $u$ of $v$ in $G$ with $v \ssigmakleiner u$, then the successor of $v$ in $\sigma$ is neighbor of $v$ in $G$.
    \end{enumerate}
\end{definition}

It is clear that DFS is a greedy Hamilton search.

In the following we use a construction that maps every graph $G = (\{v_1,\dots,v_n\},E)$ to a bipartite graph $\bip G$ of similar structure. The construction works by creating new adjacent vertices $\ell_i$ and $u_i$ for every original vertex $v_i$ and making $\ell_i$ and $u_j$ adjacent if and only if $v_i$ and $v_j$ are neighbors in $G$. The same construction was discussed by Brandstädt~\cite{BRANDSTADT1991bip}. He gave the following result.

\begin{lemma}[Brandstädt {\cite[Theorem~2.3~(b)]{BRANDSTADT1991bip}}]
\label{lem:bipG_chordal_bipartite_iff_G_strongly_chordal}
    Let $G$ be graph. The graph $\bip G$ is chordal bipartite if and only if $G$ is strongly chordal.
\end{lemma}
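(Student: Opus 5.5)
Since this lemma is quoted from Brandstädt, I would not build a new combinatorial argument. Instead I would reduce it to two classical matrix characterizations. The first observation is that, by construction, the bipartite adjacency (biadjacency) matrix of $\bip G$ has rows indexed by $\ell_1,\dots,\ell_n$ and columns by $u_1,\dots,u_n$. Its entry $(i,j)$ is $1$ if and only if $i=j$ or $v_iv_j \in E(G)$. So this matrix is exactly the closed neighborhood matrix $N[G]$, the adjacency matrix of $G$ with ones on the diagonal.

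The proof then chains two known equivalences. The first is the folklore fact that a bipartite graph is chordal bipartite if and only if its biadjacency matrix is \emph{totally balanced}, i.e., has no square submatrix that is the incidence matrix of a cycle of length at least $3$. I would justify this directly. An induced cycle $x_1y_1x_2y_2\dots x_ky_k$ of length $2k \geq 6$ in a bipartite graph corresponds to the $k\times k$ submatrix on rows $x_1,\dots,x_k$ and columns $y_1,\dots,y_k$. That submatrix has exactly two ones in each row and column, arranged cyclically, and the correspondence runs in both directions. The second equivalence is Farber's theorem: $G$ is strongly chordal if and only if $N[G]$ is totally balanced. Composing the two, with the identity of matrices above, gives the lemma.

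If a self-contained argument is preferred over citing Farber, I would argue via the forbidden-subgraph characterization. Recall that $G$ is strongly chordal if and only if it is chordal and contains no induced $k$-sun for $k\geq 3$.

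For one direction, suppose $G$ has an induced sun with clique vertices $c_1,\dots,c_k$ and outer vertices $s_1,\dots,s_k$, where $s_i$ is adjacent to exactly $c_i$ and $c_{i+1}$. Then $\ell_{s_1}u_{c_2}\ell_{s_2}u_{c_3}\dots$ is a chordless $2k$-cycle in $\bip G$. Likewise, an induced cycle $v_1\dots v_k$ with $k\geq 4$ in $G$ yields the induced cycle $\ell_1u_2\ell_3u_4\dots$ in $\bip G$, or, using both copies of each vertex, $\ell_1u_2\ell_2u_3\dots\ell_ku_1$ of length $2k$. I would verify chordlessness by checking that the only ones in the corresponding submatrix are the cyclic ones.

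For the converse, take a chordless cycle $\ell_{a_1}u_{b_1}\ell_{a_2}u_{b_2}\dots\ell_{a_k}u_{b_k}$ in $\bip G$ with $k\geq 3$. This gives vertices $a_i,b_i$ of $G$ with $b_i \in N[a_i]\cap N[a_{i+1}]$, and with $b_j \notin N[a_i]$ for all non-cyclic pairs. I would then extract from $\{a_i\}\cup\{b_i\}$ either a chordless cycle of length $\geq 4$ or an induced sun in $G$.

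I expect this last extraction to be the main obstacle. The vertices $a_i$ and $b_j$ may coincide, and the adjacencies among the $b$'s (or among the $a$'s) are unconstrained. The natural route is to take a minimal such cycle and argue by cases on the adjacencies among the $b_j$. If the $b_j$ are not pairwise adjacent, one finds a shorter structure or a chordless cycle. If they form a clique, the $a_i$ act as the sun's outer vertices. Because this case analysis is exactly the content of Farber's theorem, my primary plan is simply to cite it together with the matrix identification above.
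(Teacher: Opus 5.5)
The paper gives no proof of this lemma; it imports it from Brandstädt. Your primary plan supplies an actual derivation, and it is correct. The paper's construction makes $\ell_i$ and $u_i$ adjacent, so the biadjacency matrix of $\bip G$ is exactly the closed neighborhood matrix $N[G]$. A $k\times k$ cycle submatrix ($k\ge 3$) of a biadjacency matrix is the same thing as an induced $C_{2k}$, so chordal bipartite is equivalent to the biadjacency matrix being totally balanced. Farber's theorem ($G$ strongly chordal iff $N[G]$ totally balanced) then closes the chain. This is the standard route in the literature. Compared with the paper, you trade a one-line citation for a transparent reduction that isolates Farber's theorem as the only nontrivial ingredient.

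The self-contained fallback, however, contains mistakes you should not carry over.
\begin{itemize}
\item The cycle $\ell_1u_2\ell_2u_3\dots\ell_ku_1$ is never chordless. Since $v_{i-1}\in N_G(v_i)$, the pair $\ell_iu_{i-1}$ is an edge of $\bip G$, but $u_{i-1}$ is not a cycle-neighbor of $\ell_i$.
\item The cycle $\ell_1u_2\ell_3u_4\dots$ closes only for even $k$, and then it has length $k$. For $k=4$ this is a $4$-cycle, which chordal bipartite graphs allow.
\item For odd $k$ you would have to go around twice, and then the diagonal edges $\ell_iu_i$ are chords.
\end{itemize}
A correct choice uses the diagonal edges sparingly. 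Take $\ell_1u_1\ell_2u_3\ell_3u_4$ for $k=4$, and $\ell_1u_1\ell_2u_3\ell_4u_5\cdots\ell_{k-1}u_k$ (length $k+1$) for odd $k\ge 5$. Keep your alternating cycle for even $k\ge 6$. Each of these is easily checked to be chordless.

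Your sun case is fine. The converse extraction of a chordless cycle or a sun from a long induced cycle of $\bip G$ is, as you say, the real content of Farber's theorem. So the fallback would not be self-contained anyway, and relying on the citation is the right call.
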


Furthermore, Müller~\cite{MULLER1996291} showed that the Hamiltonian path problem is $\NP$-complete even if the input is restricted to strongly chordal split graphs. With this in mind, we can reduce the Hamilton path problem to solving the $\G-tree$ recognition problem of any greedy Hamilton search.

\begin{figure}
    \centering
    \begin{tikzpicture}[x = 2 cm, y = 1.75 cm,node distance = {1.75 cm}, thick,main/.style = {draw, circle,minimum size = 3.3 em}, every node/.style={transform shape},
    scale=0.8]

    \coordinate[vertex,label = above left:$r$]  (r) at (0,0);
    \coordinate (xdist)  at (1,0);
    \coordinate (xdist2) at (1,0);
    \coordinate (xdist3) at (.5,0);
    \coordinate (ydist1) at (0,.75);
    \coordinate (ydist2) at (0,1);
    \coordinate (ydist3) at (0,0.75);
    \coordinate (ydist4) at (0,0.5);
    \coordinate (yrand)  at (0,0.3);
    \coordinate (xrand)  at (0.3,0);

    \draw [rounded corners,fill=gray!20] 
        ($ (ydist1) - (xdist) - (xrand) - (yrand)$)  -- 
        ($ (ydist1) + (xdist) + (xrand) - (yrand)$) -- 
        ($ (ydist1) + (xdist) + (xrand) + (yrand)$) -- 
        ($ (ydist1) - (xdist) - (xrand) + (yrand)$)  -- 
        cycle;
    \draw [rounded corners,fill=gray!20] 
        ($ (ydist2) + (ydist1) - (xdist) - (xrand) - (yrand)$)  -- 
        ($ (ydist2) + (ydist1) + (xdist) + (xrand) - (yrand)$) -- 
        ($ (ydist2) + (ydist1) + (xdist) + (xrand) + (yrand)$) -- 
        ($ (ydist2) + (ydist1) - (xdist) - (xrand) + (yrand)$)  -- 
        cycle;

    \foreach \i/\l in {1/1,2/2,3/n}
        \coordinate[vertex,label = {[label distance=-0.1cm]181:$\ell_{\l}$}]  (v\l) at
        ($(r) +(ydist1) + \i *(xdist) - 2*(xdist) $);
    \foreach \i in {1,2,n}
        \coordinate[vertex,label = {[label distance=-0.1cm]90:$u_{\i}$}] (v\i') at
        ($ (v\i) + (ydist2)$);
    \coordinate[vertex,label = below left:$\ell_{n+1}$] (vnpp) at 
        ($(vn) + (xdist2) $);
    \coordinate[vertex,label = above left:$u_{n+1}$] (vnpp') at
        ($(vn') + (xdist2) $);
    \coordinate[vertex,label = above left:$m$] (m) at 
        ($(v2') + (ydist3) $);
    

    \draw[rounded corners = 4mm] (r) -- ++($-1.5*(xdist)$) -- ++($ (ydist1) + (ydist2) + (ydist3)$) -- (m);
    \draw[rounded corners = 4mm,treeedge] (r) -- ($(vnpp) - (ydist1) $) -- (vnpp) -- (vnpp') -- ++(ydist3) -- (m);
    
    \draw[treeedge] (r) -- ++($(ydist1) - (yrand)$);
    \draw[] (m) -- ++($(yrand)-(ydist3) $);
    
    \foreach \v in {v1, v2, vn}
        \draw[treeedge] (\v) -- (\v');
    \foreach \i/\k in {1/2, 2/n}
    {
        \draw[] (v\i) -- (v\k');
        \draw[] (v\i') -- (v\k);
    }
    
    \draw[rounded corners] ($(vn') + (xrand)$) -- ($(vn') +0.625*(xdist2)$) -- ++($-1*(ydist2)$) -- (vnpp);
    
    \node at ($0.5*(vn) + 0.5*(v2)$) {...};
    \node at ($0.5*(vn) + 0.5*(v2) + (ydist2)$) {...};

    \end{tikzpicture}
    \caption{The $\NP$-completeness construction for the \G-tree recognition problem of DFS on chordal bipartit graphs. The depicted example is from the case $G = P_3$. The connection of a vertex with a box means that the vertex is connected to all vertices in this box. Tree edges are depicted by thick edges.}\label{fig:cb}
\end{figure}

\begin{theorem}
    \label{theorem:DFS-like-hard-on-chordal-biparitit}
    Let $\A$ be a greedy Hamilton search. Then the rooted \G-tree recognition problem of $\A$ is \NP-complete even if the input is restricted to chordal bipartite graphs. The same holds for the unrooted problem if $\A$ is self-similar. 
\end{theorem}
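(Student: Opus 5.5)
The plan is to reduce from the Hamiltonian path problem on strongly chordal split graphs, which is \NP-complete by Müller~\cite{MULLER1996291}, using the construction of \cref{fig:cb}. Membership in \NP{} is clear, since an \A-ordering serves as certificate. For strong chordality, \cref{lem:bipG_chordal_bipartite_iff_G_strongly_chordal} gives that $\bip G$ is chordal bipartite.

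\textbf{Construction.} Given $G$ with $V(G)=\{v_1,\dots,v_n\}$, I build $H$ from $\bip G$ by adding the vertices $r$, $\ell_{n+1}$, $u_{n+1}$ and $m$, with:
\begin{itemize}
\item $r$ adjacent to all $\ell_i$ ($i\le n+1$) and to $m$;
\item $m$ adjacent to all $u_i$ ($i\le n+1$);
\item $\ell_{n+1}$ adjacent to $u_{n+1}$ and to the $u$-vertex shown in the figure.
\end{itemize}
The tree $T$ has edges $r\ell_i$ for all $i$, $\ell_iu_i$ for all $i$, and $u_{n+1}m$. The root is $r$.

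The first step is to check that $H$ is still chordal bipartite. A bipartition puts $r$ and all $u_i$ on one side and $m$ and all $\ell_i$ on the other. The new vertices $r$ and $m$ are each complete to one side, hence biuniversal, and biuniversal vertices cannot lie on an induced cycle of length at least $6$. The remaining addition is the pendant-like pair $\ell_{n+1},u_{n+1}$, which is similarly harmless.

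\textbf{Forward direction.} Let $(v_{\pi(1)},\dots,v_{\pi(n)})$ be a Hamiltonian path of $G$. I take the ordering
\[
r,\ \ell_{\pi(1)},\ u_{\pi(1)},\ \ell_{\pi(2)},\ u_{\pi(2)},\ \dots,\ u_{\pi(n)},\ \ell_{n+1},\ u_{n+1},\ m .
\]
This is a Hamiltonian path of $H$ (after fixing the indexing so that $u_{\pi(n)}\ell_{n+1}$ is an edge). Hence it is an \A-ordering by property~(a) of a greedy Hamilton search. Every tree parent precedes its child, so $T\in\G(H,\sigma)$ by \cref{lemma:char-g-trees}.

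\textbf{Backward direction.} Here I use property~(b): while $m$ is unvisited, every visited $u_j$ and $r$ still has an unvisited neighbour, namely $m$. Therefore the successor of each such vertex is one of its neighbours. Combined with \cref{lemma:char-g-trees}, this forces the following.
\begin{itemize}
\item $u_j$ can only appear after $\ell_j$.
\item An $\ell_i$ that is not yet followed by $u_i$ forces the successor to be $u_i$. Any other $u$-neighbour $u_j$ would need $\ell_j$ earlier; I must check that this case is excluded or harmless.
\end{itemize}
So the prefix before $m$ alternates $\ell_{a_1},u_{a_1},\ell_{a_2},u_{a_2},\dots$ with $v_{a_k}v_{a_{k+1}}\in E(G)$, and it must reach $\ell_{n+1},u_{n+1}$ before $m$.

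\textbf{Main obstacle.} The hard step is forcing this prefix to contain \emph{all} of $\ell_1,\dots,\ell_n$. As drawn, some $\ell_i$ could apparently be postponed until after $m$, because its tree parent $r$ is already visited. I would first try to exclude this using property~(b) at $m$ and at the later $u_j$, arguing that a late $\ell_i$ would have to be entered from some $u_j$, which is impossible because of the tree order. If that fails, I would strengthen the gadget so that reaching $\ell_{n+1}$ requires every $\ell_i$ to be visited already, for instance by adjusting the tree parents of $\ell_{n+1}$ or $m$. Such changes must again be checked against chordal bipartiteness via biuniversality.

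\textbf{Unrooted problem.} The unrooted statement then follows from \cref{the:Reduktion_rooted_auf_unrooted}. The class of chordal bipartite graphs is closed under gluing two graphs at a single vertex. A small gadget $W$ is needed, with a spanning tree that is a \G-tree of an \A-ordering from one vertex but not from another; a path $P_3$ rooted at an end versus the middle should work, using~(b).
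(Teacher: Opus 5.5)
Your overall route is the paper's: reduce from Hamiltonian path on strongly chordal split graphs via $\bip G$, get chordal bipartiteness from biuniversal additions, and use \cref{the:Reduktion_rooted_auf_unrooted} for the unrooted case. As written, however, the proof has two genuine gaps.

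\textbf{The construction is wrong at $\ell_{n+1}$.} It must be adjacent to \emph{all} of $u_1,\dots,u_n$; in the figure the edge goes to the box of $u$-vertices, not to a single vertex. With only one $u$-neighbour, say $u_n$, the forward direction needs a Hamiltonian path of $G$ ending in $v_n$. You cannot ``fix the indexing'' after the instance has been built. Your reduction would then encode Hamiltonian paths with a prescribed end vertex, which is not the problem you reduce from. With the full adjacency, $\ell_{n+1}$ is complete to $\{r,u_1,\dots,u_{n+1}\}$ and hence biuniversal. Moreover, $u_{n+1}$ has only the biuniversal neighbours $\ell_{n+1}$ and $m$. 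So your chordal-bipartite argument still works.

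\textbf{The backward direction leaves its two essential steps open.} These are your ``I must check'' and your ``main obstacle''. No change of the gadget is needed; what is missing is a minimality argument.
\begin{enumerate}
\item Let $\ell_j$ be the leftmost vertex of $\{\ell_1,\dots,\ell_{n+1}\}$ in $\sigma$ that is not immediately followed by $u_j$. Since $u_j$ is a neighbour of $\ell_j$ to its right, property~(b) forces the successor of $\ell_j$ to be a neighbour of $\ell_j$. As $r$ is first and all other neighbours of $\ell_j$ are $u$-vertices, this successor is some $u_k$ with $k\neq j$. The tree gives $\ell_k\prec_\sigma u_k$, hence $\ell_k\prec_\sigma\ell_j$. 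By minimality $u_k$ directly follows $\ell_k$, not $\ell_j$, which is a contradiction. So every $u_i$ directly follows $\ell_i$.
\item Your obstacle now disappears. If some $u_j$ were to the right of $m$, property~(b) would force the successor of $m$ to be a neighbour of $m$. That neighbour cannot be $r$, and it cannot be a $u_i$ by the previous step. Hence $m$ is last.
\item Therefore $\sigma=(r,\ell_{\pi(1)},u_{\pi(1)},\dots,\ell_{\pi(n+1)},u_{\pi(n+1)},m)$. Apply property~(b) at each $u_{\pi(i)}$ with $i\le n$, which still has the unvisited neighbour $m$. Since $u_{n+1}$ has no neighbour $\ell_k$ with $k\le n$, this yields $\pi(n+1)=n+1$ and $v_{\pi(i)}v_{\pi(i+1)}\in E(G)$.
\end{enumerate}

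\textbf{On the unrooted case.} You need no explicit gadget, because \cref{the:Reduktion_rooted_auf_unrooted} is non-constructive; closure under one-vertex gluing plus self-similarity suffices. Your suggested gadget $P_3=(a,b,c)$ would fail anyway. Its only spanning tree is $P_3$ itself, which is a $\G$-tree of every ordering starting in the middle vertex $b$, for example of the DFS ordering $(b,a,c)$.
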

\begin{proof}
    Let $G = (\{v_1,\dots,v_n\},E)$ be an arbitrary strongly chordal graph. We construct $\Gred$ and $\Tred$ using the following steps (see \cref{fig:cb} for an illustration). 
    Construct $\bip G = (\{\ell_1,\dots,\ell_n,u_1,\dots,u_n\},E_{bib})$ as mentioned above. 
    Add a new vertex $\ell_{n
    +1}$ with edges to all $u_i$ for $i\in \{1,\dots,n\}$. 
    Add another new vertex $u_{n+1}$ with an edge to $\ell_{n+1}$.
    Furthermore, add a vertex $r$ and edges to all $\ell_i$ for $i \in \{1,\dots,n+1\}$. Finally, add a vertex $m$ which is adjacent to all $u_i$ for $i \in \{1,\dots,n+1\}$ and $r$. 
    We claim that the resulting graph is chordal bipartite. By using \Cref{lem:bipG_chordal_bipartite_iff_G_strongly_chordal}, we get that $\bip G$ is chordal bipartite. We constructed $\Gred$ from $\bip G$ by successively adding biuniversal vertices and leaves. It is easy to see that these operations maintain the status of being chordal bipartite. 
    
    For $\Tred$, we choose the edges $r\ell_i$ and $\ell_iu_i$ for every $i\in \{1,\dots,n+1\}$ as well as $u_{n+1}m$. Obviously, this construction can be done in polynomial time. It remains to show that there is a Hamiltonian path in $G$ if and only if $\Tred$ is a \G-tree of an $\A$-ordering $\sigma$ of $\Gred$ with $\sigma(r) = 1$.

Suppose, there exists a Hamiltonian path in $G$. W.l.o.g.~we may assume that it has the form $(v_1,\dots,v_n)$. By the construction of $\Gred$, it follows that $\sigma = (r,\ell_1,u_1,\ell_2,u_2,\dots,\ell_{n+1},\allowbreak u_{n+1},m)$ is a Hamiltonian path and, therefore, an $\A$-ordering of $\Gred$. It is easy to check that $\Tred$ is indeed a \G-tree of $\sigma$.

Suppose $\Tred$ is a \G-tree of an $\A$-ordering $\sigma$ of $\Gred$ with $\sigma(r) = 1$. Applying characterization $(iii)$ of \Cref{lemma:char-g-trees} to $\Gred$, $\Tred$, and $\sigma$ yields the statement that $\ell_i \ssigmakleiner u_i$ for all $i \in \{1,\dots,n\}$.
In the following, we will show that it even holds that $\sigma(u_i) = \sigma(\ell_i)+1$. Suppose for contradiction that there is a $j \in \{1,\dots,n\}$ such that $\ell_j$ is leftmost in $\sigma$ of all vertices in $\{\ell_1,\dots,\ell_n\}$ that violate the condition $\sigma(u_i) = \sigma(\ell_i)+1$. Due to the fact that $u_j$ is a neighbor of $\ell_j$ with $\ell_j \ssigmakleiner u_j$, we can use that $\A$ is a greedy Hamilton search. We conclude that the successor of $\ell_j$ in $\sigma$ must be a neighbor of $\ell_j$ in $\Gred$. Due to the construction of $\Gred$ and the fact that $\sigma(r) = 1$, it holds that the successor of $\ell_j$ in $\sigma$ must be $u_k$ for some $k \in \{1,\dots,n\} \setminus \{j\}$. As observed above, $\ell_k$ is to the left of $u_k$ in $\sigma$. Thus, it holds that $\ell_k \ssigmakleiner \ell_j$. This contradicts the choice of $\ell_j$.

Next, we show that $m$ is to the right of $u_i$ for every $i \in \{1,\dots,n+1\}$. Assume for contradiction that $m \prec_\sigma u_j$ for some $j \in \{1,\dots,n+1\}$. We know that $u_j$ is a neighbor of $m$ in $\Gred$. As $\A$ is a greedy Hamilton search, the successor of $m$ in $\sigma$ is a $\Gred$-neighbor of $m$. The neighborhood of $m$ in $\Gred$ consists of the vertices $u_1,\dots,u_{n+1}$ and $r$. Since $\sigma(r) = 1$, the successor of $m$ in $\sigma$ is one of $u_i$. However, this contradicts the fact that $\sigma (u_i) = \sigma (\ell_i) +1$ holds for all $i$.

Summarizing, it holds that $\sigma = (r,\ell_{\pi(1)},u_{\pi(1)},\dots,\ell_{\pi(n+1)},u_{\pi(n+1)},m)$
where $\pi$ is a permutation of $\{1,\dots,n\}$. For every $i \in \{1,\dots,n-1\}$, vertex $m$ is a neighbor of $u_{\pi(i)}$ and $u_{\pi(i)} \ssigmakleiner m$. Since $\A$ is a greedy Hamilton search, we conclude that $u_i$ and $\ell_{i+1}$ are neighbors for every $i \in \{1,\dots,n-1\}$. The construction of $\Gred$ yields that $(v_{\pi(1)},\dots,v_{\pi(n)})$ is a Hamiltonian path in $G$. 

Note that the union of two chordal bipartite graphs having only one vertex in common is chordal bipartite. Using \cref{the:Reduktion_rooted_auf_unrooted}, we can conclude the hardness of the unrooted case if $\A$ is self-similar.
\end{proof}

The above proof can be easily adapted to show hardness in the case where $G$ is a split graph.

\begin{theorem}
\label{theorem:DFS-like-hard-on-split}
    Let $\A$ be a greedy Hamilton search. Then the rooted \G-tree recognition problem of $\A$ is \NP-complete even if the input is restricted to split graphs.
\end{theorem}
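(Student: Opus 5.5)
The plan is to reuse the reduction from the Hamiltonian path problem in the proof of \cref{theorem:DFS-like-hard-on-chordal-biparitit}, changing only the edge set so that the graph becomes split. The proof of that theorem used just three features of the bipartite graph. First, every $\ell_i$ has only $r$ and vertices $u_k$ as neighbours. Second, $m$ has only $r$ and vertices $u_k$ as neighbours. Third, the tree edges are $r\ell_i$, $\ell_iu_i$ and $u_{n+1}m$. Adding edges among the $u$-vertices, $m$ and $r$ preserves all three features, so those vertices can be turned into a clique.

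\textbf{Construction.} Let $G=(\{v_1,\dots,v_n\},E)$ be an arbitrary graph; by Müller~\cite{MULLER1996291}, the Hamiltonian path problem is \NP-complete on (strongly chordal) split graphs, hence on general graphs. Build $\Gred$ with independent set $I=\{\ell_1,\dots,\ell_{n+1}\}$ and clique $C=\{u_1,\dots,u_{n+1},m,r\}$, and add the following edges between $I$ and $C$.
\begin{itemize}
\item $\ell_iu_i$ for all $i \le n+1$, and $\ell_iu_j$ whenever $v_iv_j\in E$.
\item $\ell_{n+1}u_i$ for all $i\le n$.
\item $r\ell_i$ for all $i\le n+1$.
\end{itemize}
By construction $\Gred$ is a split graph. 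Take $\Tred$ with edges $r\ell_i$ and $\ell_iu_i$ for $i\le n+1$, together with $u_{n+1}m$, and fix the root $r$.

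\textbf{Forward direction.} A Hamiltonian path $(v_1,\dots,v_n)$ of $G$ yields the ordering $(r,\ell_1,u_1,\dots,\ell_{n+1},u_{n+1},m)$. This is a Hamiltonian path of $\Gred$, because of the edges $r\ell_1$, $\ell_iu_i$, $u_i\ell_{i+1}$ for $i<n$, $u_n\ell_{n+1}$ and $u_{n+1}m$. Hence it is an $\A$-ordering, and each vertex's tree parent precedes it, so $\Tred$ is a \G-tree of it.

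\textbf{Backward direction.} Let $\sigma\in\A(\Gred,r)$ with $\Tred\in\G(\Gred,\sigma)$.
\begin{itemize}
\item By \cref{lemma:char-g-trees}\,(iii), $\ell_i\prec_\sigma u_i$ for all $i$.
\item Show $\sigma(u_i)=\sigma(\ell_i)+1$ by taking the leftmost violator $\ell_j$, exactly as before. By greediness, the successor of $\ell_j$ is an unvisited neighbour of $\ell_j$. Since $r$ is first and $I$ is independent, that successor is some $u_k$ with $k\ne j$. Then $\ell_k\prec_\sigma u_k$ places $\ell_k$ before $\ell_j$, which contradicts the choice of $\ell_j$.
\item Show $m$ comes after all $u_i$. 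If not, the successor of $m$ is an unvisited neighbour of $m$, hence some $u_i$, since $m$ has no neighbour in $I$. This contradicts $u_i$ immediately following $\ell_i$.
\item Conclude that $\sigma=(r,\ell_{\pi(1)},u_{\pi(1)},\dots,\ell_{\pi(n+1)},u_{\pi(n+1)},m)$. For each $i\le n$, the vertex $u_{\pi(i)}$ still has the unvisited neighbour $m$, so its successor $\ell_{\pi(i+1)}$ must be adjacent to it.
\end{itemize}

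The main obstacle is the case $\pi(i+1)=n+1$, since $\ell_{n+1}$ is adjacent to every $u_k$ and its adjacency carries no information. In the worst case $\ell_{n+1}$ sits in the middle, and the remaining $u_{\pi(i)}\ell_{\pi(i+1)}$ adjacencies then split $(v_{\pi(1)},\dots)$ into at most two paths. I would fix this as in the bipartite proof, by arguing that $\ell_{n+1}$ comes last among the $\ell$'s. If I cannot derive that cleanly, I would attach a pendant vertex to $u_{n+1}$, or drop the edges from $\ell_{n+1}$ to the $u_k$ and instead let $r$ or an extra clique vertex route to $\ell_{n+1}$, so that its position is forced to be last.

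Every other edge $u_{\pi(i)}\ell_{\pi(i+1)}$ with both indices at most $n$ corresponds to $v_{\pi(i)}v_{\pi(i+1)}\in E$. Hence $(v_{\pi(1)},\dots,v_{\pi(n)})$ is a Hamiltonian path in $G$. Membership in \NP{} is immediate, as noted at the start of the section.
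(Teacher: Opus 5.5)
Your construction is exactly the paper's: the bipartite reduction with $\{r,u_1,\dots,u_{n+1},m\}$ turned into a clique. Your forward direction and the first three backward steps are correct, and so is your remark that $G$ need not be strongly chordal here. The gap is the step you flag yourself. You never establish that $\ell_{n+1},u_{n+1}$ form the last pair, and instead offer alternative constructions. None of these is needed, and the pendant variant would break the forward direction. The ordering must remain a Hamiltonian path of $\Gred$ to invoke property (a), and a pendant at $u_{n+1}$ would have to end that path, forcing $m$ before its tree parent $u_{n+1}$. As it stands, the backward direction is incomplete.

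The missing line comes from your own last bullet, applied to an index with $\pi(i)=n+1$ rather than $\pi(i+1)=n+1$. Suppose $\pi(i)=n+1$ for some $i\le n$. Then $u_{n+1}$ still has the unvisited neighbour $m$, so by greediness its successor $\ell_{\pi(i+1)}$ must be adjacent to $u_{n+1}$. But the only vertex of $I$ adjacent to $u_{n+1}$ is $\ell_{n+1}$, which is already visited, a contradiction. Equivalently, the successor of $u_{n+1}$ lies among $m,u_1,\dots,u_n$, and no $u_k$ can directly follow anything other than $\ell_k$, so the successor is $m$. Hence $\pi(n+1)=n+1$. Every adjacency $u_{\pi(i)}\ell_{\pi(i+1)}$ with $i\le n-1$ then involves indices at most $n$, which yields the Hamiltonian path. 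The paper uses this fact tacitly: it writes $\pi$ as a permutation of $\{1,\dots,n\}$, and its split-graph proof simply points back to the bipartite argument. With this line added, your proof coincides with the paper's.
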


\begin{proof}
    We construct $\Gred$ and $\Tred$ as described in the proof of \Cref{theorem:DFS-like-hard-on-chordal-biparitit} with the sole difference that we add additional edges to make $\{r,u_1,\dots,u_{n+1},m\}$ a clique in $\Gred$. Note that $l_1,\dots,l_{n+1}$ form an independent set of $\Gred$ and thus $\Gred$ is a split graph.

    The arguments we provided in the proof of \Cref{theorem:DFS-like-hard-on-chordal-biparitit} can again be used to show that $\Tred$ is a $\G$-tree rooted in $r$ of $\Gred$ if and only if $G$ includes a Hamilton path.
\end{proof}

Note that split graphs do not fulfill the property given in \cref{the:Reduktion_rooted_auf_unrooted}. Hence, we can not simply extend the result to the unrooted case. Nevertheless, the superclass of chordal graphs fulfills the property which implies the following.

\begin{corollary}
     Let $\A$ be a greedy Hamilton search. Then the rooted \G-tree recognition problem of $\A$ is \NP-complete even if the input is restricted to chordal graphs. The same holds for the unrooted problem if $\A$ is self-similar.
\end{corollary}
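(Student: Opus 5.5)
Membership in \NP{} is immediate, since a search ordering serves as a certificate. For the rooted problem on chordal graphs I would simply reuse \cref{theorem:DFS-like-hard-on-split}. Every split graph is chordal, so the reduction from the Hamiltonian path problem built there produces a rooted instance $(\Gred,\Tred,r)$ with $\Gred$ chordal. Hence the rooted problem is already \NP-hard when restricted to chordal graphs, and nothing new has to be argued for the rooted case.

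For the unrooted problem I would apply \cref{the:Reduktion_rooted_auf_unrooted} with $\C$ the class of chordal graphs. First I check the closure condition. Let $G,W$ be chordal with $|V(G)\cap V(W)|=1$. Any induced cycle of length at least $4$ in $G\cup W$ is $2$-connected. The shared vertex is a cut vertex of $G \cup W$, and every edge of $G \cup W$ lies in $G$ or in $W$. So the cycle lies entirely in $G$ or entirely in $W$, where it would also be induced, which is impossible. Hence $G\cup W$ is chordal.

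By \cref{the:Reduktion_rooted_auf_unrooted}, if $\A$ is self-similar there is a linear-time reduction from the rooted problem on chordal graphs to the unrooted problem on chordal graphs. Combined with the rooted hardness above, this gives \NP-hardness of the unrooted version.

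The main subtlety is the non-constructive nature of \cref{the:Reduktion_rooted_auf_unrooted}. Its second branch only says that some polynomial reduction exists, which is still enough for hardness. For concreteness I would also exhibit a gadget $W$, which should be easy. Take the path $W=(a,b,c)$ with $T_W=W$. Starting at $\notStart=a$, the ordering $(a,b,c)$ is a Hamiltonian path and hence an $\A$-ordering by property~(a) of greedy Hamilton searches, and $T_W$ is a \G-tree of it. Starting at $\start=b$, say with $(b,a,c)$, property~(b) is violated: $a$ has no later neighbor here, but $b$ does, and the successor of $b$ is $a$, a neighbor, so that is fine. The real problem is the step from $a$ to $c$, which is no violation, so the path gadget does not work directly.

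I would instead try a gadget that forces greediness, for example a vertex $\start$ with two pendant paths, or a triangle, and check the rooted condition directly. The abstract theorem suffices in any case, so this gadget is optional.
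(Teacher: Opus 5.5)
Your proposal is correct and follows the paper's route: rooted hardness on chordal graphs is inherited from \cref{theorem:DFS-like-hard-on-split} because split graphs are chordal. Unrooted hardness for self-similar $\A$ then follows from \cref{the:Reduktion_rooted_auf_unrooted}, since chordal graphs are closed under gluing at a single vertex, which you verify correctly. The closing gadget search is unnecessary and can be dropped. Your suggested candidates would fail anyway: for DFS every root works when $W$ is a tree, and for any greedy Hamilton search every root works when $W$ is complete.
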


\subsection{LDFS, MCS, and MNS}

We adapt a proof given in \cite{beisegel2021recognition} for the \NP-completeness of the \F-tree recognition problem of that searches. We use a polynomial-time reduction from 3-SAT. Let $\cal{I}$ be an instance of 3-SAT with the set of literals $x_1, \dots, x_k,\overline{x}_1,\ldots,\overline{x}_k$ and the clauses $c_1, \ldots ,c_\ell$. We construct the corresponding graph $G(\mathcal{I})$ as follows (see \cref{fig:mns-tree} for an example). Every literal and every clause is represented by a vertex with the same name. The literal vertices induce the complement of the matching in which $x_i$ is matched to $ \overline{x}_i $ for every $ i \in \{1,\ldots, k\} $. The clause vertices are pairwise non-adjacent and every vertex $ c_i $ is adjacent to each literal vertex, except those representing the literals of the clause $ c_i $ for every $ i \in \{1, \ldots, \ell\} $. Additionally, we add the vertices $r$, $p$, $q$, and $b$. The vertices $r$, $p$, and $q$ are adjacent to all literal vertices and all clause vertices while $b$ is adjacent to all literal vertices. Finally, we add the edges $pr$, $rb$, $rq$, and $bq$. The spanning tree $T(\I)$ of $G(\I)$ consists of all edges incident to $p$ and the edges $rb$ and $bq$.

\begin{figure}
		\centering
    \begin{tikzpicture}[scale=0.8,vertex/.style={inner sep=2pt,draw,circle,fill=white},
	noedge/.style={dashed},
    every node/.style={transform shape}
	]
	\large
	\draw[rounded corners=5pt]
 	(-2.5,4) rectangle (6.5,9);
 	\draw[rounded corners=5pt, fill=black!10!white]
 	(-2,7) rectangle (7,8.75);

	\node[vertex,label={180:$x_1$}] (x1) at (-1, 8.4) {};
	\node[vertex,label={180:$\overline{x}_1$}] (nx1) at (-1, 7.4) {};
	\node[vertex,label={0:$x_2$}] (x2) at (1, 8.4) {};
	\node[vertex,label={0:$\overline{x}_2$}] (nx2) at (1, 7.4) {};
	\node[vertex,label={0:$x_3$}] (x3) at (3, 8.4) {};
	\node[vertex,label={0:$\overline{x}_3$}] (nx3) at (3, 7.4) {};
	\node[vertex,label={0:$x_4$}] (x4) at (5, 8.4) {};
	\node[vertex,label={0:$\overline{x}_4$}] (nx4) at (5, 7.4) {};
	
	\node[vertex,label={[label distance=0.5cm]270:$\overline{x}_1 \lor x_2 \lor \overline{x}_3$}] (c1) at (-1, 5.25) {};
	\node[vertex,label={[label distance=0.5cm]270:${x}_1 \lor \overline{x}_3 \lor {x}_4$}] (c2) at (2,5.25) {};
	\node[vertex,label={[label distance=0.5cm]270:$\overline{x}_1 \lor \overline{x}_3 \lor \overline{x}_4$}] (c3) at (5, 5.25) {};

	\node[vertex,label={0:$b$}] (b) at (9,8) {};
    \node[vertex,label={0:$p$}] (p) at (7.5,6.5) {};
	\node[vertex,label={0:$r$}] (r) at (7.5,5.25) {};
	\node[vertex,label={-90:$q$}] (q) at (9,4.5) {};
	
	\draw[treeedge] (b) -- (q);
	\draw[treeedge] (r) to (b);
    \draw[treeedge] (r) to (p);
    \draw (r) -- (q);
 
	\draw[noedge] (x1) -- (nx1);
	\draw[noedge] (x2) -- (nx2);
	\draw[noedge] (x3) -- (nx3);
	\draw[noedge] (x4) -- (nx4);
  
    \draw[noedge] (c1) -- (c2);
    \draw[noedge] (c2) -- (c3);
    \draw[noedge, bend angle=20, bend right] (c1) to (c3);
	
	\draw[noedge] (c1) -- (nx1);
	\draw[noedge] (c1) -- (x2);
	\draw[noedge] (c1) -- (nx3);
	\draw[noedge] (c2) -- (x1);
	\draw[noedge] (c2) -- (nx3);
	\draw[noedge] (c2) -- (x4);
	\draw[noedge] (c3) -- (nx1);
	\draw[noedge] (c3) to (nx3);
	\draw[noedge] (c3) -- (nx4);
  
  \draw[treeedge] (p) -- (6.5,6.6)--(p)--(6.5,6.4)--(p)--(6.5,6.8)--(p)--(6.5,6.2);
  
  \draw[] (b) -- (7,8.1)--(b)--(7,7.9)--(b)--(7,7.7)--(b)--(7,8.3);
	
  \draw (r) -- (6.5,5.35) --(r)--(6.5,5.15)--(r)--(6.5,5.55)--(r)--(6.5,4.95);
  
	\draw[] (q) -- (6.5,4.6) --(q)--(6.5,4.8)--(q)--(6.5,4.4)--(q)--(6.5,4.2);
	
	\end{tikzpicture}
 	\caption{The $\NP$-completeness construction for the \G-tree recognition problem of MNS. The depicted graph is $G(\mathcal{I})$ for $\mathcal{I} = (\overline{x}_1 \lor x_2 \lor \overline{x}_3) \land ({x}_1 \lor \overline{x}_3 \lor {x}_4) \land (\overline{x}_1 \lor \overline{x}_3 \lor \overline{x}_4)$. In both boxes only non-edges are displayed by dashed lines. The connection of a vertex with a box means that the vertex is connected to all vertices in this box. Tree edges are depicted by thick edges. Note that this figure is an adaption of a figure given in \cite{beisegel2021recognition}.}\label{fig:mns-tree}
\end{figure}

\begin{lemma}\label{lemma:b-before-clause}
    If $T(\I)$ is a \G-tree of an MNS ordering $\sigma$ of $G(\I)$ with $\sigma(r) = 1$, then vertex $b$ is to the left of every clause vertex in $\sigma$.
\end{lemma}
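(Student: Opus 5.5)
The plan is to show something stronger: $b$ is the second vertex of $\sigma$, i.e.\ $\sigma = (r, b, \dots)$. This immediately gives that $b$ is to the left of every clause vertex. The argument combines the ancestor constraints coming from the tree with a simple label-domination observation about $b$ and $q$.

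First I read off the constraints from \cref{lemma:char-g-trees}~(\ref{prop:char3}). Root $T(\I)$ at $r$. The children of $r$ are $p$ and $b$, every literal and clause vertex is a child of $p$, and $q$ is the child of $b$. Hence $\Rrel{T(\I)}{r} \subseteq \sigmakleiner$ forces
\begin{itemize}
\item $p \ssigmakleiner v$ for every literal or clause vertex $v$, and
\item $b \ssigmakleiner q$.
\end{itemize}
Consequently the second vertex of $\sigma$ is $p$ or $b$. Neither a literal nor a clause vertex can come before $p$, and $q$ cannot come before $b$.

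The key step is to exclude $\sigma(p) = 2$, or more generally any ordering in which $p$ is visited while $b$ is still unvisited. Compare the neighborhoods:
\begin{itemize}
\item $N(b) = \{r, q\} \cup \{\text{literals}\}$;
\item $N(q) = \{r, p, b\} \cup \{\text{literals}\} \cup \{\text{clauses}\}$.
\end{itemize}
So $N(b) \setminus \{q\} \subseteq N(q)$, and $p \in N(q) \setminus N(b)$. As long as $b$ and $q$ are both unvisited, the label of $b$ is therefore contained in the label of $q$. Once $p$ has been visited, this containment is strict. By the definition of $\prec_{MNS}$, $b$ is then never eligible while $q$ is unvisited. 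This means $q \ssigmakleiner b$, contradicting $b \ssigmakleiner q$. Hence $b$ must be visited before $p$, so $\sigma(b) = 2$.

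Finally, every clause vertex lies to the right of $p$, which lies to the right of $b$, so $b$ is to the left of every clause vertex. The main point needing care is the label-domination step. I have to make sure that the containment of the label of $b$ in the label of $q$ holds at every iteration after $p$ is visited and before $b$ is visited, including iterations where literal or clause vertices are added. This holds because every visited neighbor of $b$ other than $q$ is also a neighbor of $q$, and $q$ itself is unvisited.
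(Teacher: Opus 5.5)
Your key step rests on a misreading of the construction: $p$ is \emph{not} adjacent to $q$. The vertices $r$, $p$, $q$ are each joined to all literal and clause vertices, and the only additional edges are $pr$, $rb$, $rq$, $bq$. Hence $N(q)=\{r,b\}\cup\{\text{literals}\}\cup\{\text{clauses}\}$, and $N(q)\setminus N(b)$ consists of $b$ and the clause vertices only. (Were $pq$ an edge, $T(\I)$, which contains every edge at $p$ together with $rb$ and $bq$, would contain the cycle $r,p,q,b$.) So visiting $p$ does not make the label of $b$ a strict subset of the label of $q$.

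Your stronger claim $\sigma(b)=2$ is in fact false; the opposite, $\sigma(p)=2$, is forced (\cref{lemma:mns-start}). Suppose $\sigma$ began with $(r,b)$. The third vertex would have to be $q$ or a literal, and a literal is ruled out since its tree parent $p$ is unvisited. After $(r,b,q)$, every literal has label $\{1,2,3\}$. This strictly contains the label $\{1\}$ of $p$ and the labels $\{1,3\}$ of the clause vertices, so a literal would be forced before $p$. Moreover, for satisfiable $\I$, the ordering $(r,p,\text{assignment},b,q,\dots)$ built in \cref{lemma:mns-reduction} has $T(\I)$ as \G-tree with $\sigma(b)=k+3$. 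Your chain ``clauses after $p$ after $b$'' therefore breaks at its second link.

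The repair is to use a clause vertex instead of $p$ as the witness of strictness, and to drop the stronger claim. Your observation $N(b)\setminus\{q\}\subseteq N(q)$ is correct, so while $b$ and $q$ are both unvisited the label of $b$ is contained in that of $q$. Suppose some clause vertex $c$ preceded $b$ in $\sigma$. Either $q\ssigmakleiner c$ already holds, or the containment is strict from the visit of $c$ until $q$ is visited, because $c\in N(q)\setminus N(b)$; in that case $b$ is not eligible before $q$. In both cases $q\ssigmakleiner b$, contradicting $b\ssigmakleiner q$, which the tree forces. This is precisely the paper's proof, and it gives the lemma directly without determining where $b$ sits in $\sigma$.
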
%
\begin{proof}
    As $b$ is the parent of $q$ in $T(\I)$ if $T(\I)$ is rooted in $r$, it follows that $b$ has to be to the left of $q$ in $\sigma$. It holds that $N_{G(\I)}(b) \setminus \{q\} \subsetneq N_{G(\I)}(q)$ and all clause vertices are elements of $N_{G(\I)}(q) \setminus N_{G(\I)}(b)$. Thus, if some clause vertex is to the left of $b$ in $\sigma$, vertex $q$ is also to the left of $b$ in $\sigma$; a contradiction.
\end{proof}
\begin{lemma}\label{lemma:mns-start}
    If $T(\I)$ is a \G-tree of an MNS ordering $\sigma$ of $G(\I)$ with $\sigma(r) = 1$, then $\sigma(p) = 2$ and $\invsig 3 , \dots, \invsig {k+2}$ forms an assignment of \I. 
\end{lemma}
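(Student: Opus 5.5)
The plan is to use characterization~(\ref{prop:char3}) of \cref{lemma:char-g-trees} to turn the tree into precedence constraints. Then I combine these constraints with the MNS eligibility rule: a vertex cannot be chosen while some unvisited vertex has a label that is a strict superset of its own.

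Rooting $T(\I)$ in $r$, the vertex $p$ is the parent of every literal and clause vertex, $b$ is a child of $r$, and $q$ is the child of $b$. So in $\sigma$, $p$ precedes all literal and clause vertices, and $b$ precedes $q$. \cref{lemma:b-before-clause} additionally gives that $b$ precedes every clause vertex.

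\emph{Step 1: $\sigma(p)=2$.} After $r$, every vertex is a neighbor of $r$. By the precedence constraints, only $p$ or $b$ may be second. Suppose $b$ is second. Then each literal vertex has label $\{1,2\}$, while $p$ has label $\{1\}\subsetneq\{1,2\}$. Hence $p$ is not eligible as long as some literal is unvisited. But no literal may be visited before $p$, so the search can never reach either $p$ or the literals, a contradiction. Hence $\sigma(p)=2$.

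\emph{Step 2: label bookkeeping until $b$ is visited.} Before $b$, neither $q$ nor any clause vertex can appear, by the constraints above. So every vertex strictly between $p$ and $b$ is a literal. At any such moment, let $S$ be the set of visited literals. The label of $b$ is $\{1\}\cup\sigma[S]$. An unvisited literal $\ell$ has label $\{1,2\}\cup\sigma[S\setminus\{\bar\ell\}]$, because $\ell$ is adjacent to every literal except its complement.

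Consequently $b$ is eligible only if no unvisited literal has a strictly larger label. Since $2$ always lies in $\ell$'s label but not in $b$'s, this requires $b$'s label to contain a position missing from $\ell$'s label, which means $\bar\ell\in S$. So $b$ can only be chosen once, for every $i$, at least one of $x_i,\overline{x}_i$ is visited.

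\emph{Step 3: no complementary pair before all pairs are touched.} Suppose some pair $\{x_j,\overline{x}_j\}$ is still untouched and $x_i\in S$. Then $\overline{x}_i$ has label $\{1,2\}\cup\sigma[S\setminus\{x_i\}]$, while $x_j$ has label $\{1,2\}\cup\sigma[S]$, a strict superset. Hence $\overline{x}_i$ is not eligible, and symmetrically for $x_i$ when $\overline{x}_i\in S$.

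Combining Steps~2 and~3, the vertices at positions $3,4,\dots$ are literals, one from each pair, until all $k$ pairs are covered. Only after that can $b$ (or anything else) be chosen. Therefore $\invsig 3,\dots,\invsig{k+2}$ contains exactly one of $x_i,\overline{x}_i$ for each $i$, i.e., it forms an assignment of $\I$.

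The main obstacle is Step~3 together with the observation that nothing but literals can fill positions $3$ to $k+2$. This depends on the careful label comparisons under $\subsetneq$ and on invoking \cref{lemma:b-before-clause} and the tree constraint on $q$ to exclude the non-literal vertices.
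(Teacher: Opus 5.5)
Your proof is correct and follows essentially the same route as the paper. The tree precedences from \cref{lemma:char-g-trees}, together with \cref{lemma:b-before-clause}, confine the positions after $p$ to literals until $b$ appears, and the MNS label comparisons then force exactly one literal per variable; your label bookkeeping makes this step more explicit than the paper does. The only small deviation is how you exclude $b$ at position~2: you argue that $p$'s label stays $\{1\}\subsetneq\{1,2\}$ because all its other neighbors must come after it, whereas the paper notes that the two vertices following $b$ must be common neighbors of $r$ and $b$, at least one of which is a literal with parent $p$.
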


\begin{proof}
    In any \G-tree of $\sigma$ it holds that $\invsig 2$ is adjacent to $\invsig 1 $. Therefore, $\invsig 2$ has to be $p$ or $b$. If $\invsig 2$ equals $b$, then the next two vertices will be vertices that are adjacent to $b$ and $r$ ($q$ and/or some literal vertices). At least one of these two vertices has $p$ as parent in $T(\I)$ and, thus, has to be to the right of $p$ in $\sigma$; a contradiction. Therefore, the second vertex of $\sigma$ is $p$. Now $\sigma$ has to visit some vertex that is neighbor of both $r$ and $p$. Due to \cref{lemma:b-before-clause}, this vertex has to be a literal vertex. As long as there are variables whose two literal vertices are not visited so far, these literal vertices have the unique maximal label (except from clause vertices which cannot be visited before $b$, due to \cref{lemma:b-before-clause}). Therefore, $\sigma$ visits a whole assignment after $r$ and~$p$.    
\end{proof}

\begin{lemma}\label{lemma:mns-reduction}
    A 3-SAT instance $\I$ has a satisfying assignment if and only if $T(\I)$ is an \G-tree of some MNS (MCS, LDFS) ordering of $G(\I)$ starting with vertex $r$.
\end{lemma}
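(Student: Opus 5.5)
The plan is to prove the two directions separately. For the backward direction it suffices to treat MNS, because every MCS ordering and every LDFS ordering is an MNS ordering; there we use \cref{lemma:mns-start} and \cref{lemma:b-before-clause}. For the forward direction we build one explicit ordering and check, with an MNS/MCS/LDFS tie argument at each step, that it is simultaneously an MNS, MCS and LDFS ordering.

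\emph{($\Leftarrow$)} Assume $T(\I)$ is a \G-tree of an MNS ordering $\sigma$ with $\sigma(r)=1$. By \cref{lemma:mns-start}, $\sigma(p)=2$ and $\invsig 3,\dots,\invsig{k+2}$ is an assignment, i.e.\ one literal per variable. Call these literals true. Suppose some clause $c$ is not satisfied. Then all three literals of $c$ are false, so none of them has been visited yet, and hence $c$ is adjacent to every one of the first $k+2$ vertices.
\begin{itemize}
\item Consequently, at step $k+3$ the label of $c$ is $\{1,\dots,k+2\}$, the largest label any vertex can have at that point.
\item In MNS, a vertex whose label is a proper subset of $\{1,\dots,k+2\}$ is strictly dominated by $c$, so it is not eligible.
\item The only vertices whose label is the full set are clause vertices. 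A false literal misses its complement, $b$ and $q$ miss $p$, and $q$ and $p$ are not adjacent.
\end{itemize}
So $\invsig{k+3}$ is a clause vertex, which contradicts \cref{lemma:b-before-clause}. Hence the assignment satisfies $\I$.

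\emph{($\Rightarrow$)} Given a satisfying assignment, start the ordering with $r$, then $p$, then the $k$ true literals in any order. This prefix is valid because the true literals are pairwise adjacent: at each step, an unvisited literal whose complement has not yet been visited has the full label. Next visit $b$; at this moment
\begin{itemize}
\item $b$ and $q$ both have label $\{r\}\cup\{\text{true literals}\}$;
\item a false literal $\bar y$ has label containing $p$ but missing $y$;
\item a clause vertex has label containing $p$ but missing at least one true literal, since the clause is satisfied.
\end{itemize}
The eligibility of $b$ is then checked per search:
\begin{itemize}
\item For MNS, these labels are equal or incomparable, so $b$ is eligible.
\item For MCS, all these labels have size $k+1$ or less.
\item For LDFS, the symmetric difference between the label of $b$ and any competing label has its maximum at a true literal, which was visited after $p$ and lies in the label of $b$. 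So $b$ is not dominated.
\end{itemize}
Then visit $q$. The label of $q$ equals that of $b$ before $b$ was visited, plus the index of $b$. Its comparison with the false literals, which now also contain $b$, has the same structure as before. The clause vertices are not adjacent to $b$, so for LDFS the entry of $b$ makes $q$ win, and for MNS and MCS $q$ is again maximal.

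After $q$, complete the prefix to a full MNS (MCS, LDFS) ordering; this is possible because every prefix of a search ordering extends. Finally, verify the \G-tree property via \cref{lemma:char-g-trees}(iii). The parent of $p$ and of $b$ is $r$, and the parent of $q$ is $b$, which precedes it. Every other vertex has parent $p$, and all of them come after position $2$.

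The main obstacle is the backward direction. A first attempt would try to show directly that $b$ is visited immediately after the assignment. The key observation that avoids this is that an unsatisfied clause vertex has the full label at step $k+3$, which forces a clause vertex to be visited before $b$ and so contradicts \cref{lemma:b-before-clause}.
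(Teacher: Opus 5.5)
Your proof is correct and follows essentially the same route as the paper. For ($\Leftarrow$), you use \cref{lemma:mns-start} and note that an unsatisfied clause vertex has the full label at step $k+3$, forcing a clause before $b$ and contradicting \cref{lemma:b-before-clause}; for ($\Rightarrow$), you use the explicit prefix $r, p$, the true literals, then $b$, exactly as the paper does. Placing $q$ right after $b$ and checking MNS separately are harmless extras, since any continuation after $b$ keeps $T(\I)$ a \G-tree and every MCS or LDFS ordering is already an MNS ordering.
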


\begin{proof}
    First we prove that $\I$ has a fulfilling assignment if $T(\I)$ is a $\G$-tree of some MNS ordering $\sigma$ of $G(\I)$ starting with $r$. As we have seen in \cref{lemma:mns-start}, the ordering $\sigma$ visits $p$ after $r$ and then it visits an assignment of $\I$. If this assignment does not fulfill $\I$, then there is a clause vertex $c_i$ which is adjacent to all vertices that were visited so far. On the other hand, all unvisited literal vertices and $b$ and $q$ have at least one visited non-neighbor. Therefore, the next vertex that is visited by $\sigma$ is a clause vertex and, thus, there is a clause vertex that is to the left of $b$ in $\sigma$; a contradiction to \cref{lemma:b-before-clause}.

    Now assume that there is a fulfilling assignment $\B$ of $\I$. We construct an LDFS (MCS) ordering $\sigma$ with $\G$-tree $T(\I)$ as follows. We start in $r$, then we choose $p$ and then we choose the literal vertices that correspond to $\B$. As these vertices form a clique, this is a valid prefix of MCS and LDFS. Since $\B$ is a fulfilling assignment, every unvisited literal vertex and every clause vertex has at least one non-neighbor that was already visited. The vertices $b$ and $q$ have exactly one visited non-neighbor (vertex $p$). Therefore, $b$ has the maximal MCS label and can be visited next by MCS. It is easy to see that $T(\I)$ forms a \G-tree of any ordering starting with these vertices. For LDFS we observe that $b$ and $q$ are adjacent to all visited literal vertices while all other unvisited vertices are not adjacent to at least one of them. Therefore, the LDFS label of $b$ and $q$ is also maximal and $b$ can be chosen next. This completes the proof.  
\end{proof}

Note that the graph $G(\I)$ that we have constructed is an induced subgraph of the graph $G(\I)$ that was constructed for the proof of Theorem 5.1 in \cite{beisegel2021recognition}. As this graph is weakly chordal, our graph is it as well. Combining this fact with \cref{lemma:mns-reduction}, we get the \NP-completeness of the rooted problem. We observe that the union of two weakly chordal graphs that have exactly one vertex in common is also weakly chordal. Therefore, \cref{the:Reduktion_rooted_auf_unrooted} also implies the \NP-completeness of the unrooted problem.

\begin{theorem}
    The (rooted) $\G$-tree recognition problem of LDFS, MCS, and MNS is \NP-complete even if the input is restricted to weakly chordal graphs.
\end{theorem}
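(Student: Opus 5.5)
Membership in \NP{} is immediate, since a search ordering serves as a certificate: given $\sigma$ we check in polynomial time that it is an LDFS (MCS, MNS) ordering and that $T$ satisfies the definition of a \G-tree. For hardness of the rooted problem, I would reduce from 3-SAT via the construction $(G(\I),T(\I),r)$, which is clearly computable in polynomial time. Correctness is exactly \cref{lemma:mns-reduction}, whose two directions need care for the different searches. If $T(\I)$ is a \G-tree of an MNS ordering starting in $r$, then $\I$ is satisfiable. Since every LDFS and every MCS ordering is an MNS ordering, this direction transfers to LDFS and MCS for free. Conversely, a satisfying assignment yields an ordering that is simultaneously a valid LDFS prefix and a valid MCS prefix. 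Extending such a prefix arbitrarily by the respective search still keeps $T(\I)$ a \G-tree, because after $r,p,b$ every remaining vertex has its tree parent ($p$ or $b$) already visited and no other tree neighbour. Since LDFS and MCS orderings are MNS orderings, the forward direction also holds for MNS.

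Next comes the graph-class restriction. I would argue that $G(\I)$ is weakly chordal, using the observation that it is an induced subgraph of the graph from Theorem~5.1 of \cite{beisegel2021recognition}, which is weakly chordal. Weak chordality is hereditary, because induced cycles in $G$ or in its complement survive in induced subgraphs. If that citation were not to be relied on, I would check directly that neither $G(\I)$ nor its complement has an induced cycle of length at least $5$. The complement is sparse: it consists of the matching $x_i\overline{x}_i$, the clique on the clause vertices, clause--literal non-edges, and the few non-edges involving $b,p,q,r$. Long induced cycles in the complement would have to alternate through the clause vertices, but the clause vertices form a clique there, which forces chords.

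For the unrooted problem I would apply \cref{the:Reduktion_rooted_auf_unrooted}. This needs two facts: LDFS, MCS and MNS are self-similar (stated earlier in the paper), and the class of weakly chordal graphs is closed under gluing two graphs at a single vertex. The gluing property holds because an induced cycle of length at least $5$ in $G\cup W$ cannot pass through the cut vertex into both parts, so it lies in one part. In the complement, any two non-adjacent vertices of the glued graph already lie in a common side, and an anti-hole (the complement of a cycle of length at least $6$) is $2$-connected in the original graph, so it also lies entirely in one side. The main obstacle is this complement/anti-hole part of the gluing argument, together with the weak-chordality check if done directly. I would handle it by noting that anti-holes of length at least $5$ have no cut vertex, so they cannot straddle the cut vertex.
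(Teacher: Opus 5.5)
Your proposal is correct and follows the paper's proof. The rooted case uses the 3-SAT reduction via $(G(\I),T(\I),r)$ together with the MNS/LDFS/MCS reduction lemma, weak chordality comes from the induced-subgraph observation, and the unrooted case comes from the rooted-to-unrooted reduction theorem for self-similar searches on classes closed under one-vertex gluing. Your explicit argument for that gluing closure (holes and anti-holes of length at least $5$ are $2$-connected, so they cannot straddle the cut vertex) fills in a step the paper only asserts.
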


\section{Polynomial-time Algorithms}
\subsection{Layered Searches and Bipartite Graphs}

\begin{lemma}
    \label{lemma:G_Baum_impliziert_gleiche_abstaende}
    Let $\A$ be a layered search, $G$ be a bipartite graph, $\start \in V(G)$, $\sigma \in \A(G,\start)$ and $T \in \G(G,\sigma)$. For all vertices $v$ of G it holds that $\dist G \start v  = \dist T \start v$.
\end{lemma}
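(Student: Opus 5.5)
We argue by induction on $\sigma(v)$, i.e., we process the vertices in the order in which $\sigma$ visits them. Since $T$ is a subgraph of $G$, we always have $\dist G \start v \le \dist T \start v$, so it suffices to show $\dist T \start v \le \dist G \start v$. For $v = \start$ both distances are $0$. Now let $v \neq \start$. By \cref{lemma:char-g-trees}~(\ref{prop:char3}), $T$ is a spanning tree whose ancestor relation rooted in $\start$ (note $\sigma(\start)=1$) is contained in $\sigmakleiner$. Hence the parent $w$ of $v$ in $T$ rooted at $\start$ satisfies $w \ssigmakleiner v$, and $\dist T \start v = \dist T \start w + 1$. By the induction hypothesis, $\dist T \start w = \dist G \start w$.

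Set $d \coloneqq \dist G \start v$. Since $wv$ is an edge of $G$, we have $\dist G \start w \in \{d-1, d, d+1\}$. Bipartiteness excludes $\dist G \start w = d$: every edge of a bipartite graph joins vertices whose distances from $\start$ have different parity, because the two sides of the bipartition are exactly the vertices of even and odd distance from $\start$.

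It remains to exclude $\dist G \start w = d+1$. Here we use layeredness: $\sigma$ visits the layers $N^\ell_G(\start)$ in ascending order of $\ell$. So every vertex of layer $d+1$ comes after every vertex of layer $d$ in $\sigma$, which gives $v \ssigmakleiner w$. This contradicts $w \ssigmakleiner v$.

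Therefore $\dist G \start w = d-1$, and so $\dist T \start v = (d-1)+1 = d$, which completes the induction.

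The only step with real content is the case analysis on $\dist G \start w$. Bipartiteness removes the case $\dist G \start w = d$, and the layered property removes the case $\dist G \start w = d+1$; both are short. The main point requiring care is to root $T$ at $\start$ and to use characterization~(\ref{prop:char3}) of \cref{lemma:char-g-trees}, which guarantees that the tree parent of $v$ precedes $v$ in $\sigma$.
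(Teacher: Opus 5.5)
Your proof is correct and follows the paper's argument. The tree parent $w$ of $v$ precedes $v$ in $\sigma$; bipartiteness excludes $w$ being in the same layer as $v$, and layeredness excludes $w$ being in a later layer. The only difference is cosmetic: you induct along $\sigma$, whereas the paper takes a counterexample of minimal distance to $\start$ in $T$, and either ordering makes the hypothesis available for the parent.
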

\begin{proof}
    Let $M \coloneqq \{ v \in V(G) \mid \dist G \start v  \neq \dist T \start v\}$. Suppose $M$ is not empty. Let $x$ be an element of $M$ with minimal distance to $\start$ in $T$. Furthermore, let $w$ be the predecessor of $x$ in the $\start$-$x$-path in $T$. By the choice of $x$, it holds that $\dist{T}{\start}{w} = \dist{G}{\start}{w}$ and, thus, $\dist{T}{\start}{x} = \dist{G}{\start}{w} + 1$.
    As $x$ and $w$ are neighbors in $G$, it holds that 
    $\dist{G}{\start}{x} \leq \dist{G}{\start}{w} +1$.
    Since $x \in M$, it holds $\dist{G}{\start}{x} < \dist{T}{\start}{x}$ and, thus, $\dist{G}{\start}{x} < \dist{G}{\start}{w} +1$. As $G$ is bipartite, it is not possible that $\dist{G}{\start}{x} = \dist{G}{\start}{w}$. As a consequence, we have $\dist{G}{\start}{x} < \dist{G}{\start}{w}$.
    However, $\A$ is a layered search and, thus, the above implies that $x$ is to the left of $w$ in $\sigma$, a contradiction to the fact that $T$ is a $\G$-tree of $\sigma$ and the choice of $w$.
\end{proof}

This lemma implies a characterization of \G-trees of layered searches on bipartite graphs.

\begin{theorem}
    Let $\A$ be a natural layered graph search, $G$ be a bipartite graph, $T$ be a spanning tree of $G$ and $\start \in V(G)$. The following statements are equivalent:
    \begin{enumerate}[(i)]
        \item There exists an $\A$-ordering $\sigma$ of $G$ such that $\sigma$ starts in $\start$ and $T$ is $\G$-tree of $\sigma$.
        \item For all vertices $u\in V$ it holds that the distances of $\start$ and $u$ in $G$  is equal to the distance of $\start$ and $u$ in $T$.
        \item $T$ is a $\G$-tree of every $\A$-ordering of $G$ starting in $\start$.
    \end{enumerate}
    
\end{theorem}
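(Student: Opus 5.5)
I would prove the cycle of implications (i)$\Rightarrow$(ii)$\Rightarrow$(iii)$\Rightarrow$(i).

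\emph{(i)$\Rightarrow$(ii).} This is exactly \cref{lemma:G_Baum_impliziert_gleiche_abstaende}, applied to the ordering $\sigma$ given in (i).

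\emph{(iii)$\Rightarrow$(i).} Since $\A$ is natural, the set $\A(G,\start)$ is nonempty. Pick any $\sigma$ in it; by (iii), $T \in \G(G,\sigma)$.

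\emph{(ii)$\Rightarrow$(iii).} This is the main step. Fix an arbitrary $\sigma \in \A(G,\start)$. By characterization~(\ref{prop:char3}) of \cref{lemma:char-g-trees}, it suffices to show that $\Rrel{T}{\start} \subseteq \sigmakleiner$. Because $\sigmakleiner$ is transitive, it is enough to show that every parent $w$ of a vertex $x$ in $T$ (rooted at $\start$) satisfies $w \ssigmakleiner x$.

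Let $w$ be the parent of $x$. Then $\dist{T}{\start}{w} = \dist{T}{\start}{x} - 1$. By (ii), this gives $\dist{G}{\start}{w} = \dist{G}{\start}{x} - 1$, so $w$ lies in a strictly earlier layer than $x$. Since $\A$ is layered and $\sigma$ starts in $\start$, $\sigma$ visits the layers $N^\ell_G(\start)$ in increasing order of $\ell$. Hence $w \ssigmakleiner x$.

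It follows that $\Rrel{T}{\start} \subseteq \sigmakleiner$, and therefore $T \in \G(G,\sigma)$.

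The only subtle points are two. First, the transitivity reduction from ancestors to parents must be stated explicitly. Second, layeredness has to be used in exactly its stated form: orderings starting in $\start$ traverse the distance layers from $\start$ in ascending order. Bipartiteness is not needed in this direction; it enters only through \cref{lemma:G_Baum_impliziert_gleiche_abstaende} in (i)$\Rightarrow$(ii).
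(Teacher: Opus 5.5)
Your proof is correct and follows the paper's own argument. The paper likewise gets (i)$\Rightarrow$(ii) from the preceding lemma and (iii)$\Rightarrow$(i) from naturality, and proves (ii)$\Rightarrow$(iii) via characterization~(iii) of the $\G$-tree lemma plus layeredness; the only cosmetic difference is that you reduce to parent--child pairs by transitivity, while the paper directly compares an arbitrary ancestor $u$ of $w$ using $\dist{T}{\start}{u} < \dist{T}{\start}{w}$.
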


\begin{proof}
    $(i) \Rightarrow (ii)$ follows from \cref{lemma:G_Baum_impliziert_gleiche_abstaende}. So, suppose that $(ii)$ holds. Let $\sigma$ be an arbitrary $\A$-ordering of $G$ starting in $\start$. Let $u,w$ be two distinct vertices of $G$ such that $u$ lies on the $\start$-$w$-path in $T$. It follows that the distance of $\start$ and $u$ in $T$ is smaller than the distance of $\start$ and $w$ in $T$. Using $(ii)$, it follows that the same holds for the distances of these vertices in $G$. Since $\A$ is a layered search, $u$ has to be to the left of $w$ in $\sigma$. This implies $(iii)$.

    As $\A$ is a natural graph search, $\A(G,\start)$ is not empty. Hence, $(iii) \Rightarrow (i)$ holds trivially.
\end{proof}

As the distances from the start vertex can be computed in linear time, the theorem implies the following running time bound.

\begin{corollary}
    The rooted $\G$-tree recognition problem of a natural layered graph search is linear time solvable if the input is restricted to bipartite graphs.
\end{corollary}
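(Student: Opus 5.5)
The plan is to turn the characterization theorem just proved into an algorithm. Given an instance $(G,T,\start)$ with $G$ bipartite, that theorem says (i) and (ii) are equivalent for a natural layered search $\A$. Hence $(G,T,\start)$ is a yes-instance of the rooted $\G$-tree recognition problem exactly when $\dist G \start u = \dist T \start u$ holds for every $u \in V(G)$.

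The algorithm therefore consists of the following steps.
\begin{enumerate}
\item Check that $T$ is a spanning tree of $G$, i.e., $T$ has $n(G)-1$ edges, all of them edges of $G$, and $T$ is connected. This takes linear time, for instance by marking the edges of $G$ and traversing $T$. (By the problem definition $T$ is already a spanning tree, so this step is only a sanity check.)
\item Run a BFS from $\start$ in $G$ to compute $\dist G \start u$ for all $u$. This takes $\O(n(G)+m(G))$ time.
\item Run a BFS (or any traversal) from $\start$ in $T$ to compute $\dist T \start u$ for all $u$. This takes $\O(n(G))$ time.
\item Compare the two distance arrays entry by entry in $\O(n(G))$ time. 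Accept if and only if they agree everywhere.
\end{enumerate}

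Correctness is immediate from the implication (ii) $\Rightarrow$ (i) of the preceding theorem, which uses that $\A$ is natural, and from (i) $\Rightarrow$ (ii), which is \cref{lemma:G_Baum_impliziert_gleiche_abstaende}.

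Since the characterization is already established, no step here is a real obstacle. The only point to get right is that the decision does not require constructing an $\A$-ordering at all, so nothing about $\A$ beyond being natural and layered is needed. If a certificate ordering is wanted, property (iii) shows that any $\A$-ordering starting in $\start$ works. Producing one efficiently, however, depends on $\A$; for BFS and LBFS it can be done in linear time.
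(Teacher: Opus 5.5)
Your proposal is correct and follows the same route as the paper: the corollary is read off the characterization theorem via condition (ii), comparing distances from $\start$ in $G$ and in $T$, each computed by BFS in linear time. The paper states this in one line, and your step-by-step version, including the observation that naturality is what gives (ii) $\Rightarrow$ (i), fills in exactly that argument.
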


\subsection{Complete Bipartite Graphs}

Although the class of complete bipartite graphs seems to be quite restrictive, the Partial Search Order problem of DFS, LDFS, and MCS is \NP-complete on this class~\cite{beisegel2024computing,scheffler2025partial}. We will show here that this does not hold for the \G-tree recognition. To this end, we show that the \G-tree recognition problem is trivial for the following family of graph searches that includes these three searches. 

\begin{definition}
    Let $G$ be a complete bipartite graph with bipartition $(X,Y)$ and let $\sigma$ be a vertex ordering of $G$. Let $i$ be the largest index such that $\invsig i \in X$ and let $j$ be the largest index such that $\invsig j \in Y$. Let $k = \min\{i,j\}$. The ordering $\sigma$ is \emph{alternating} if for all $\ell, \ell' \leq k$ it holds that $\invsig \ell \invsig {\ell'} \in E(G)$ if and only if $ \ell \mod 2 \equiv (\ell' + 1) \mod 2$.

    We say that a graph search $\A$ \emph{can alternate} on $G$ if every alternating vertex ordering of $G$ is an $\A$-ordering of $G$.
\end{definition}

\begin{theorem}
    Let $G$ be a complete bipartite graph and let $\A$ be graph search that can alternate on $G$. Then for each vertex $\start \in V(G)$ and each spanning tree $T$ of $G$ there is an ordering $\sigma \in \A(G,\start)$ such that $T \in \G(G,\sigma)$.
\end{theorem}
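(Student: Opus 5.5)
The idea is to build an alternating ordering $\sigma$ of $G$ starting in $\start$ that is a linear extension of $\Rrel{T}{\start}$. Once we have it, property~(\ref{prop:char3}) of \cref{lemma:char-g-trees} gives $T \in \G(G,\sigma)$, and since $\A$ can alternate on $G$, $\sigma \in \A(G,\start)$. Everything therefore reduces to a combinatorial construction on the rooted tree $(T,\start)$. We use that $T$ is a subgraph of the bipartite graph $G$, so every tree edge joins $X$ and $Y$. We also use that in a complete bipartite graph a sequence is alternating exactly when its sides switch at every step up to the position $k$, i.e.\ until one side has been used up.

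We construct $\sigma$ greedily, maintaining the set of \emph{available} vertices: the unvisited vertices whose $T$-parent (with respect to root $\start$) is already visited. Initially only $\start$ is available and we visit it first. As long as both $X$ and $Y$ still contain unvisited vertices, the next vertex must lie on the side opposite to the last visited vertex. Among the available vertices on that side we pick one that still has children in $T$ if such a vertex exists, and an arbitrary one otherwise. Once one side is exhausted, we append the remaining vertices in any order that respects $\Rrel{T}{\start}$, for instance by repeatedly taking an available vertex. Every vertex is visited only after its parent, so $\sigma$ is a linear extension of $\Rrel{T}{\start}$ by construction. Alternation up to index $k$ also holds by construction, provided the greedy step never gets stuck.

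The main obstacle is showing that it never gets stuck. Suppose we just visited some $x \in X$ and $Y$ still has unvisited vertices; we must show that an available $Y$-vertex exists.

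If $x$ has children in $T$, they all lie in $Y$. They became available when $x$ was visited and are unvisited, so we are done.

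Otherwise $x$ is a leaf, so by the selection rule every available $X$-vertex at that moment was a leaf of $T$. Suppose no $Y$-vertex is available. Visiting the leaf $x$ adds no new available vertices, so then every available vertex is an $X$-leaf. Every unvisited vertex has a first unvisited vertex on its $T$-path from $\start$, and that vertex is available. So every unvisited vertex is itself an available $X$-leaf. This means no unvisited $Y$-vertex remains, contradicting the assumption.

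The case with the roles of $X$ and $Y$ exchanged is symmetric. The first step from $\start$ is covered by the same argument, since all children of $\start$ lie on the opposite side. It remains to check routinely that the "until one side is exhausted" phase corresponds exactly to the index bound $k$ in the definition of alternating, which completes the proof.
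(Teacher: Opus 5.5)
Your proposal is correct and follows the paper's proof essentially step by step. It uses the same greedy construction: only vertices whose $T$-parent has already been visited are eligible, sides alternate, non-leaves of $T$ are preferred, and the leaf-preference rule is what shows the construction never gets stuck. Your global form of the contradiction (all available vertices would be leaves on the side of the last visited vertex, so every unvisited vertex lies on that side) is just a tidier packaging of the paper's argument, which follows a tree path up to an active non-leaf vertex.
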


\begin{proof}
    We assume in the following that $T$ is rooted in $\start$. We first mark all vertices as \emph{inactive}. Now we visit vertex~$\start$. Whenever a vertex $v$ is visited, all its children in $T$ are marked as \emph{active} while $v$ is marked as inactive. The set $S_v$ contains those neighbors of $v$ in $G$ that are active if there is any unvisited neighbor of $v$. Otherwise, $S_v$ contains all the active non-neighbors of $v$. If there is a vertex in $S_v$ that is not a leaf in $T$, then we visit one of these vertices next. Otherwise, we visit an arbitrary vertex in $S_v$. 
    
    We claim that this algorithm computes an \A-ordering $\sigma$ with $T \in \G(G,\sigma)$. First we show that there is always a vertex in $S_v$ unless all the vertices of $G$ have already been visited. First assume that there is no unvisited neighbor of $v$. Then all vertices of the partition set not containing $v$ are already visited. In that case, for all vertices in the partition set of $v$ it holds that their parent in $T$ has already been visited as those parents are contained in the other partition set. Hence, every unvisited vertex is active and, thus, it is contained in $S_v$.
    
    Therefore, we may assume that there are unvisited neighbors of $v$. Let $u$ be the vertex visited before $v$. Assume for contradiction that there is no active neighbor of $v$. As our algorithm only visits vertices that are active and vertices become only active when their parent is visited, no child of $v$ in $T$ can be visited before $v$. After $v$ is visited, all children of $v$ are active. As we have assumed that there is no active neighbor of $v$, it must hold that $v$ is a leaf in $T$. Let $w$ be some unvisited neighbor of $v$. Consider the path $P$ from $w$ to $\start$ in $T$. Let $y$ be the first vertex on $P$ that has already been visited. The child $x$ of $y$ on the path from $w$ to $\start$ is an active vertex. Therefore, $x$ is not a neighbor of $v$ and, thus, not equal to $w$. However, $x$ has a child in $T$ since it is not equal to $w$. This is a contradiction since $v$ was visited before $x$ by the algorithm but both $v$ and $x$ were active neighbors of $u$ and $v$ is a leaf in $T$ but $x$ is not.

    Since vertices are only visited if their parent has already been visited, $T$ is a \G-tree of $\sigma$. By the choice of $S_v$, $\sigma$ is an alternating ordering and, thus, an \A-ordering of $G$.
\end{proof}

It is easy to see that DFS, LDFS, MCS, and MNS can alternate on every complete bipartite graph.

\begin{corollary}
    Every spanning tree of a complete bipartite graph is a \G-tree of a DFS ordering, an LDFS ordering, an MCS ordering and an MNS ordering.
\end{corollary}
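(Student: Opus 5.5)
The corollary follows from the preceding theorem once we check that DFS, LDFS, MCS and MNS can alternate on every complete bipartite graph. The theorem then gives, for every start vertex $\start$ and every spanning tree $T$, an ordering $\sigma \in \A(G,\start)$ with $T \in \G(G,\sigma)$. So the plan is to show that every alternating ordering $\sigma$ of a complete bipartite graph $G$ with bipartition $(X,Y)$ is an ordering of each of these searches. We do this with the label-search characterization (\ref{eq:criterium_label_search}): for all $u \ssigmakleiner v$, the label of $u$ at the time $u$ is visited must not be $\prec_\A$-smaller than the label of $v$ at that time.

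First we describe the labels along an alternating ordering. Let $k$ be as in the definition. Positions $1,\dots,k$ alternate between the two sides, and after position $k$ one side is exhausted, so all remaining vertices lie on the other side. Suppose $u$ is visited at position $p$ and $v$ is an unvisited vertex at that time.
\begin{itemize}
\item If $p \le k$, the visited vertices $1,\dots,p-1$ alternate. Vertices on the side opposite to $\invsig{p-1}$, which includes $u$ when $p\ge 2$, have label equal to all positions of parity $\equiv p-1 \pmod 2$ below $p$. Vertices on the same side as $\invsig{p-1}$ have label equal to the remaining positions below $p$.
\item If $p > k$, every unvisited vertex lies on the same side. Since $G$ is complete bipartite, all of them have identical labels, namely all visited positions of the opposite side.
\end{itemize}

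Next we check each search.
\begin{itemize}
\item For $p>k$ all unvisited labels coincide, so no violation is possible for any search.
\item For $p\le k$, the label $A$ of $u$ contains $p-1$, while the other label $B$ (same side as $\invsig{p-1}$) does not. For DFS, $\umax(A)=p-1>\umax(B)$, so $A \not\prec_{DFS} B$.
\item For LDFS, $\umax(A\setminus B)=p-1$ exceeds everything in $B\setminus A$, so again $A\not\prec_{LDFS} B$.
\item For MCS, alternation gives $|A|=\lceil (p-1)/2\rceil \ge \lfloor (p-1)/2\rfloor = |B|$, so $|A|\not<|B|$.
\item For MNS, $A\not\subsetneq B$ because $p-1\in A\setminus B$.
\end{itemize}
The degenerate case $p=1$ has all labels empty. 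Vertices on $u$'s own side have exactly $u$'s label, which is not strictly smaller than itself.

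The main point needing care is the bookkeeping of which vertices carry which label in the alternating phase. In particular, an unvisited vertex on $u$'s side has exactly $u$'s label, since $G$ is complete bipartite. Also, the MCS inequality requires that the side of $\invsig{p-1}$ has received at most as many visits as the other side among the first $p-1$ positions. This holds because strict alternation starting with $\invsig 1$ gives the counts $\lceil (p-1)/2\rceil$ and $\lfloor (p-1)/2\rfloor$ in the right order. With these checks done, the corollary follows by applying the preceding theorem to each of the four searches.
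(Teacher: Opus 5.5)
Your proposal is correct and takes the paper's route: the paper derives the corollary from the preceding theorem by asserting that DFS, LDFS, MCS and MNS can alternate on complete bipartite graphs, and your label computation is a valid explicit check of that claim. One wording slip: in your final paragraph, the side of $\invsig{p-1}$ must have received \emph{at least} (not at most) as many visits as the other side, which is what your bullet $|A|=\lceil (p-1)/2\rceil \ge \lfloor (p-1)/2\rfloor=|B|$ actually shows.
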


This does not hold for layered searches. Instead, the following characterization holds.

\begin{theorem}
     Let $\A$ be any natural and layered graph search. A spanning tree $T$ of a complete bipartite graph $G$ is a \G-tree of $\A$ rooted in $r \in V(G)$ if and only if $N_G(r) = N_T(r)$.  
\end{theorem}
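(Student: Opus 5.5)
The plan is to reduce this to the characterization of \G-trees of natural layered searches on bipartite graphs proved just above. Since $G$ is complete bipartite, it is bipartite. Hence $T$ is a \G-tree of some $\A$-ordering starting in $r$ if and only if $\dist{G}{r}{u} = \dist{T}{r}{u}$ for every vertex $u$. It therefore suffices to show that, for a spanning tree $T$ of a complete bipartite graph $G$ with bipartition $(X,Y)$ and $r \in X$, this distance condition is equivalent to $N_G(r) = N_T(r)$.

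First I record the distances in $G$. Every vertex of $Y$ is adjacent to $r$, so it has distance $1$. Every vertex of $X \setminus \{r\}$ has distance $2$, because it is nonadjacent to $r$ but shares a neighbor in $Y$ with it (if $Y$ is nonempty, which holds since $G$ is connected with at least two vertices; the one-vertex case is trivial). Thus $N_G(r) = Y$.

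For the direction ``$\Rightarrow$'', assume the distances agree. Every $y \in Y$ has $\dist{T}{r}{y} = 1$, so $y$ is a $T$-neighbor of $r$. This gives $Y \subseteq N_T(r)$, and $N_T(r) \subseteq N_G(r) = Y$ is automatic.

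For the direction ``$\Leftarrow$'', assume $N_T(r) = Y$. Then each $y \in Y$ has $T$-distance $1$. Take $x \in X \setminus \{r\}$. Since $T$ is a tree in a bipartite graph, the $r$--$x$ path in $T$ has even length at least $2$. Its second vertex is some $y \in Y$, which is adjacent to $r$ in $T$. Because all of $Y$ hangs directly off $r$ in $T$, the path cannot revisit $r$, so the third vertex is $x$ itself: any further step from $y$ goes to a vertex of $X \setminus \{r\}$ and then would have to continue to another vertex of $Y$. That vertex lies at $T$-distance $1$ from $r$, and reaching it this way would create a cycle. Hence $\dist{T}{r}{x} = 2 = \dist{G}{r}{x}$.

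Combining both directions with the earlier theorem yields the claim. The only delicate point is this last step: arguing cleanly that vertices of $X$ sit at depth exactly $2$. I would phrase it as follows. In the tree rooted at $r$, the children of $r$ are exactly $Y$. Every other vertex is in $X$, and its parent must be a $G$-neighbor, hence in $Y$. So every vertex of $X \setminus \{r\}$ has a child of $r$ as its parent and therefore has depth $2$.
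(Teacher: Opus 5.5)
Your proof is correct, but it is organized differently from the paper's. You route everything through the preceding characterization of $\G$-trees of natural layered searches on bipartite graphs. What remains is then purely graph-theoretic: in a complete bipartite graph with $r \in X$, the condition $\dist{G}{r}{u} = \dist{T}{r}{u}$ for all $u$ is equivalent to $N_T(r) = Y = N_G(r)$. Your closing argument settles the depth-$2$ part cleanly: the parent of every $x \in X\setminus\{r\}$ lies in $Y$, and every vertex of $Y$ is a child of $r$. The earlier path-based version is also correct, but it is unnecessary.

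The paper instead argues directly from the layered property: any $\A$-ordering starting in $r$ places all of $Y$ immediately after $r$. For the forward direction, the only $G$-neighbor of a vertex $y\in Y$ to its left is $r$, which forces $ry \in E(T)$. For the converse, every vertex of $X\setminus\{r\}$ has its parent in $Y$, hence to its left.

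The direct proof is self-contained and does not need the distance lemma. Your modular version is shorter once the previous theorem is available. As a by-product, it inherits statement (iii) of that theorem: when $N_T(r)=N_G(r)$, the tree $T$ is a $\G$-tree of every $\A$-ordering starting in $r$. The paper's argument shows this too, implicitly, since it never fixes a particular ordering.
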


\begin{proof}
    Let $X$ and $Y$ be the two sets of the bipartition of $G$. First assume that $N_G(r) = N_T(r)$ for some vertex $r \in V(G)$. W.l.o.g. we may assume that $r \in X$. We start the search ordering of $\A$ in $r$. Since $\A$ is layered, $r$ is followed by all vertices of $Y$ in some ordering. As $r$ is adjacent to all of these vertices in $T$, this does not contradict the edges of $T$. Since the remaining vertices of $X$ have their parents in $Y$, this ordering has $T$ as its \G-tree.

    Now assume that $T$ is a \G-tree of an \A-ordering $\sigma$ starting in some vertex $r \in V(G)$. Again, we may assume w.l.o.g. that $r \in X$. As $\A$ is layered, $r$ is followed by the vertices of $Y$ in $\sigma$. Therefore, $r$ is the only neighbor of the vertices of $Y$ that is to the left of them in~$\sigma$. This implies that $r$ has to be the parent of every vertex of $Y$ in $T$.
\end{proof}

\subsection{Chordal Graphs}\label{sec:chordal}

Both the \cf-tree and the \cl-tree recognition problem of a large family of graph searches have been shown to be polynomial-time solvable on chordal graphs~\cite{beisegel2021recognition}. Using known results on the Partial Search Order problem, we can show the same holds for the \G-tree recognition of MCS. Rong et al.~\cite{rong2026partial} give an $\O(n^4)$ time algorithm for the PSOP on chordal graphs with $n$ vertices.\footnote{The concrete running time bound is only given in a preliminary arXiv version of the paper~\cite{rong2023polynomial}.} For split graphs, this can be improved to linear time~\cite{scheffler2025partial}. These results imply the following.
\begin{proposition}\label{prop:mcs-chordal}
     Let $G$ be a graph with $n$~vertices. The rooted \G-tree recognition problem of MCS can be solved for $G$
    \begin{enumerate}
        \item in $\O(n^4)$ time if $G$ is a chordal graph, and
        \item in $\O(n^2)$ time if $G$ is a split graph.
    \end{enumerate}
\end{proposition}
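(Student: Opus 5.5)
The plan is a direct reduction to the Partial Search Order Problem (PSOP), using property~(\ref{prop:char3}) of \cref{lemma:char-g-trees}. Given an instance $(G,T,\start)$ of the rooted \G-tree recognition problem of MCS, set $\mathcal{P} \coloneqq \Rrel{T}{\start}$, the ancestor relation of $T$ rooted in $\start$. We claim that $(G,T,\start)$ is a yes-instance if and only if some MCS ordering of $G$ is a linear extension of $\mathcal{P}$.

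\emph{Forward direction.} Suppose $\sigma \in \mathrm{MCS}(G,\start)$ and $T \in \G(G,\sigma)$. Then $\invsig 1 = \start$, so \cref{lemma:char-g-trees} gives $\Rrel{T}{\start} \subseteq \sigmakleiner$. Hence $\sigma$ is a linear extension of $\mathcal{P}$.

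\emph{Backward direction.} Let $\sigma$ be an MCS ordering extending $\mathcal{P}$. The root $\start$ is the unique minimal element of $\mathcal{P}$, since it is an ancestor of every vertex, so $\sigma(\start) = 1$. Because $T$ is a spanning tree, \cref{lemma:char-g-trees}~(\ref{prop:char3}) yields $T \in \G(G,\sigma)$.

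The running time then follows by plugging in the known PSOP algorithms.
\begin{enumerate}
\item On chordal graphs we run the $\O(n^4)$ algorithm of Rong et al.~\cite{rong2026partial,rong2023polynomial} on $(G,\mathcal{P})$.
\item On split graphs we use the linear-time algorithm of~\cite{scheffler2025partial}.
\end{enumerate}

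Computing $\mathcal{P}$ is cheap. A DFS of $T$ from $\start$ lists all ancestor pairs in $\O(n^2)$ time. Alternatively, we can pass only the Hasse diagram, namely the $n-1$ parent--child edges of $T$, since a linear extension of the Hasse diagram's transitive closure is the same as one respecting each cover relation. Either way the preprocessing stays within both stated bounds.

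The main point to check is the input size for the split case. The PSOP algorithm is linear in the size of $G$ plus the size of the partial order. Since $G$ may have $\Theta(n^2)$ edges and the explicit ancestor relation may also have $\Theta(n^2)$ pairs, "linear" becomes $\O(n^2)$ in terms of $n$, which matches the claimed bound. I will verify that the cited algorithm accepts the order either as a Hasse diagram or as the full relation within this budget. If it requires the transitive closure, that costs at most $\O(n^2)$, so the bound still holds.
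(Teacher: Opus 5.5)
Your proposal is correct and follows the paper's route. The paper notes that, by property~(iii) of \cref{lemma:char-g-trees}, the rooted \G-tree recognition problem is the special case of the Partial Search Order Problem with partial order $\Rrel{T}{\start}$, and then invokes the $\O(n^4)$ algorithm of Rong et al.\ for chordal graphs and the linear-time split-graph algorithm of Scheffler, which is linear in an input of size $\O(n^2)$. Your explicit check of both directions of the equivalence and your input-size accounting for the split case spell out what the paper leaves implicit.
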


In the remainder of this section we will generalize this result as it was done for \cf-trees and \cl-trees~\cite{beisegel2021recognition}. We first define the family of strictly comp-monotone graph searches.\footnote{Beisegel et al.~\cite{beisegel2021recognition} defined the family of \emph{edge-forced PEO-finders}. This family is equivalent to the strictly comp-monotone searches when chordal graphs are considered.}

\begin{definition}
A vertex ordering $\sigma = (v_1, \ldots, v_n)$ of a graph $G$ has the \emph{comp-inclusion property} if for each pair $i,j \in \{1,\ldots,n\}$ with $i < j$ such that $v_i$ and $v_j$ are part of the same component of $G[v_i, \ldots, v_n]$ it holds that $N_G(v_i) \cap \{v_1,\ldots,v_{i-1}\}$ is not a strict subset of $N_G(v_j) \cap \{v_1,\ldots,v_{i-1}\}$. A graph search $\A$ has the \emph{comp-inclusion property} if every $\A$-ordering of every graph $G$ has the comp-inclusion property.

A graph search $\A$ has the \emph{comp-tie property} if for each graph $G$, each (possibly empty) $\A$-prefix $(v_1, \ldots, v_k)$ of $G$ and each pair of vertices $x,y$ which are in same component of $G - \{v_1, \ldots, v_k\}$ the following holds: If $N_G(x) \cap \{v_1, \ldots, v_k\} = N_G(y) \cap \{v_1, \ldots, v_k\}$, then $(v_1, \ldots, v_k, x)$ is an $\A$-prefix of $G$ if and only if $(v_1, \ldots, v_k, y)$ is an $\A$-prefix of $G$.

A graph search $\A$ is called \emph{strictly comp-monotone} if it has both the comp-inclusion property and the comp-tie property.
\end{definition}

Shier~\cite{shier1984some} defined the search \emph{Maximal Element in Component Scheme (MEC)}. It chooses a vertex next whose set of visited vertices is inclusion maximal among all the vertices in the same component of the graph that is induced by the unvisited vertices. Thus, MEC is the most general graph search having the comp-inclusion property. Shier~\cite{shier1984some} showed that a vertex ordering of a chordal graph is a PEO if and only if it is a MEC ordering. This implies the following theorem.

\begin{theorem}[see~Shier~{\cite[Theorem~2]{shier1984some}}]\label{thm:searches-chordal}
A vertex ordering $\sigma$ of a chordal graph $G$ is a PEO of $G$ if and only if $\sigma$ has the comp-inclusion property.
\end{theorem}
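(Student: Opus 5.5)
The plan is to prove the two implications separately, since they have different flavours. Recall that in this paper a PEO is an ordering in which the \emph{left} neighbourhood $N_G(v)\cap S(v)$ of every vertex $v$ is a clique. Both directions will rest on the following path lemma.

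\emph{Path lemma.} Let $\sigma$ be a PEO of a graph $G$, and let $x=p_0,p_1,\dots,p_k=y$ be a path whose interior vertices all lie to the right of both $x$ and $y$ in $\sigma$. Then $xy\in E(G)$.

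To prove it, induct on $k$. Let $p_t$ be the rightmost interior vertex. Its path neighbours $p_{t-1}$ and $p_{t+1}$ are both to its left, so they lie in its left neighbourhood, which is a clique; hence $p_{t-1}p_{t+1}\in E(G)$. Shortcutting the path through this edge gives a shorter path that still satisfies the hypothesis.

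\emph{PEO $\Rightarrow$ comp-inclusion.} Suppose $i<j$, that $v_i$ and $v_j$ lie in the same component of $G[v_i,\dots,v_n]$, and that $N_G(v_i)\cap\{v_1,\dots,v_{i-1}\}\subsetneq N_G(v_j)\cap\{v_1,\dots,v_{i-1}\}$. Pick $w$ in the difference, so $w$ is to the left of $v_i$, adjacent to $v_j$, and not adjacent to $v_i$. Take a $v_j$--$v_i$ path $P$ inside $G[v_i,\dots,v_n]$ and prepend $w$. Every interior vertex of the resulting path is either $v_j$ or a vertex of $P$ other than $v_i$, so it has index larger than $i$. The path lemma then forces $wv_i\in E(G)$, which is a contradiction. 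This direction does not even use chordality.

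\emph{Comp-inclusion $\Rightarrow$ PEO (for chordal $G$).} I would argue by minimal counterexample. Let $v=v_j$ be the leftmost vertex whose left neighbourhood is not a clique, and pick nonadjacent $a=v_p$ and $b=v_q$ in it with $p<q<j$. By the choice of $j$, the prefix $(v_1,\dots,v_{j-1})$ is a PEO of the graph it induces, so the path lemma is available for paths whose vertices all lie in this prefix.

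Set $S=\{v_1,\dots,v_{q-1}\}$ and let $C$ be the component of $G-S$ containing $b$; it also contains $v$, via the edge $bv$. The comp-inclusion property at index $q$ says that $N_S(b)$ is not strictly contained in $N_S(x)$ for any $x\in C$. I want to contradict this using chordality. The natural candidate is $v$ itself, because $a\in N_S(v)\setminus N_S(b)$. So it suffices to show $N_S(b)\subseteq N_S(v)$.

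Take $c\in N_S(b)$ and suppose $cv\notin E(G)$. Then $c,b,v,a$ together with an $a$--$c$ connection through $S$ should produce a cycle of length at least $4$. The two non-edges $bv$-side constraints $cv\notin E(G)$ and $ab\notin E(G)$ must be ruled out as chords, and the path lemma applied to the PEO prefix should control which chords can exist. I would try to pick the $a$--$c$ connection as a shortest path in $G[S\cup\{b,v\}]$ and show that the resulting cycle is induced, contradicting chordality.

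This last step is the main obstacle. If $a$ and $c$ are not connected inside $S$ in a suitable way, I would instead strengthen the induction hypothesis to the following standard fact: for every prefix $S$ of a comp-inclusion ordering of a chordal graph and every component $C$ of $G-S$, the set $N(C)\cap S$ is a clique. This is the fact Shier's original argument uses, and it can be proved via minimal separators of chordal graphs being cliques. Once it is available, $a$ and $b$ both lie in $N(C')\cap S'$ for the component $C'$ of $v$ at the time $v$ is chosen, so $ab\in E(G)$ immediately, which gives the contradiction.
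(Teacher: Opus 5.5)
Your path lemma and the direction PEO $\Rightarrow$ comp-inclusion are correct. You in fact get the stronger inclusion $N_G(v_j)\cap\{v_1,\dots,v_{i-1}\}\subseteq N_G(v_i)$, which is \cref{lemma:chordal_neighborhoods}. The converse is the real content of Shier's theorem (the paper only cites it), and it is not proved. Your local plan cannot work, because chordality, the path lemma on the PEO prefix, and comp-inclusion at index $q$ are jointly consistent with $N_S(b)\not\subseteq N_S(v)$. Take the fan with edges $av,vb,bc$ plus a vertex $d$ adjacent to $a,v,b,c$, ordered $(d,a,c,b,v)$. Then:
$v$ is the leftmost vertex whose left neighbourhood $\{d,a,b\}$ is not a clique, and the nonadjacent pair is $a,b$;
the prefix $(d,a,c,b)$ is a PEO;
comp-inclusion holds at $q=4$, since $N_S(b)=\{d,c\}$ and $N_S(v)=\{d,a\}$ are incomparable.
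Yet $c\in N_S(b)\setminus N_S(v)$. The only violation of comp-inclusion in this ordering is at index $3$: $c$ was chosen although $v$, in the same component, had the strictly larger set $\{d,a\}$. So a correct proof must use comp-inclusion at earlier indices, which is exactly what an induction over prefixes does.

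Your fallback invariant is the right one, and with it the conclusion is immediate. However, asserting it amounts to assuming the theorem, and the hint via minimal separators does not prove it. For an arbitrary $S$ in a chordal graph, $N(C)\cap S$ need not be a clique (path with edges $xy,yz$, $S=\{x,z\}$, $C=\{y\}$). So the property does not follow from chordality and must be propagated step by step by comp-inclusion.

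The missing ingredient is this lemma: if $G$ is chordal, $D$ is connected and $K\subseteq N(D)$ is a clique, then some vertex of $D$ is adjacent to all of $K$. To prove it, choose $y\in D$ maximizing $|N(y)\cap K|$ and suppose $k\in K\setminus N(y)$. Take a shortest path in $G[D]$ from $y$ to $N(k)$, and close it up with $k$ and any $k'\in N(y)\cap K$. This gives a chordless cycle of length at least $4$ unless the last path vertex sees $k'$. Hence the last vertex sees all of $N(y)\cap K$ and also $k$, contradicting maximality.

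With this lemma the induction runs as follows. Let $S=\{v_1,\dots,v_{i-1}\}$, let $x=v_i$ lie in the component $D$ of $G-S$, and assume $K=N(D)\cap S$ is a clique. Since $N_S(x)\subseteq K$ and some vertex of $D$ sees all of $K$, comp-inclusion forces $N_S(x)=K$. Hence the left neighbourhood of $x$ is a clique. Moreover, every component of $D-x$ has its neighbourhood in $S\cup\{x\}$ inside the clique $K\cup\{x\}$, and all other components of $G-S$ are unaffected.
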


Thus, graph searches having the comp-inclusion property compute PEOs of chordal graphs.

\begin{corollary}\label{corol:search-peo}
For every graph search $\A$ having the comp-inclusion property, every $\A$-ordering of a chordal graph $G$ is a PEO of $G$. 
\end{corollary}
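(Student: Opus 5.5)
The corollary follows by combining the definition of the comp-inclusion property with \cref{thm:searches-chordal}. I would proceed in three short steps.

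First, fix a graph search $\A$ with the comp-inclusion property, a chordal graph $G$, and an arbitrary $\sigma \in \A(G)$. By the definition of the comp-inclusion property for searches, every $\A$-ordering of every graph has the comp-inclusion property. In particular $\sigma$, viewed as a vertex ordering of $G$, has it.

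Second, since $G$ is chordal, \cref{thm:searches-chordal} (Shier's characterization) applies. A vertex ordering of a chordal graph is a PEO if and only if it has the comp-inclusion property. The backward direction immediately gives that $\sigma$ is a PEO of $G$.

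Third, one point needs care, though it is not a real obstacle. The paper defines a PEO by requiring each $v$ to be simplicial in $G[\{w \mid w \preceq_\sigma v\}]$, so earlier neighbors form a clique. Shier's comp-inclusion formulation, by contrast, compares earlier neighborhoods of vertices to the right. I would check that the orientation convention in \cref{thm:searches-chordal} matches the paper's PEO definition (each vertex's earlier neighbors form a clique). Since the theorem is stated in the paper with exactly these notions, I take it as given, and the corollary follows directly.
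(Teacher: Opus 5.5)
Your proposal is correct and matches the paper exactly. The paper states the corollary as an immediate consequence of Shier's characterization (\cref{thm:searches-chordal}): it applies the ``comp-inclusion $\Rightarrow$ PEO'' direction to an arbitrary $\A$-ordering, which has the comp-inclusion property by definition. Your orientation check also comes out fine, since the paper's PEO convention (earlier neighbors form a clique) is the one that matches MEC orderings and is used consistently, e.g., in \cref{lemma:chordal_neighborhoods}.
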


In the following, we will show that for all comp-monotone graph searches, the set of rooted \G-trees of a chordal graph is the same. This has been shown by Beisegel et al.~\cite{beisegel2021recognition} for \cf-trees and \cl-trees. We will need the following lemmas.

\begin{lemma}[Beisegel et al.~{\cite[Lemma 6.3]{beisegel2021recognition}}]\label{lemma:peo_path}
Let $G$ be a chordal graph and let $\sigma$ be a PEO of $G$. Let $P = (v_1, \ldots, v_k)$ be an induced path of $G$. If $v_1$ is the leftmost vertex of $P$ in $\sigma$, then $v_1 \prec_\sigma v_2 \prec_\sigma \ldots \prec_\sigma v_k$.
\end{lemma}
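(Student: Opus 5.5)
We argue by induction on $k$, the number of vertices of the induced path $P=(v_1,\dots,v_k)$. For $k \le 2$ the claim is immediate: $v_1$ is leftmost by assumption, so $v_1 \prec_\sigma v_2$. For the inductive step, suppose the claim holds for induced paths with fewer than $k$ vertices and let $k \ge 3$. The subpath $(v_1,\dots,v_{k-1})$ is again induced and still has $v_1$ as its leftmost vertex, so the induction hypothesis gives $v_1 \prec_\sigma \dots \prec_\sigma v_{k-1}$. It therefore remains to show $v_{k-1} \prec_\sigma v_k$.

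Suppose for contradiction that $v_k \prec_\sigma v_{k-1}$. Then $v_{k-1}$ is the rightmost vertex of $P$ in $\sigma$: it lies to the right of $v_1,\dots,v_{k-2}$ by the induction hypothesis and to the right of $v_k$ by assumption. In particular, both of its path neighbours $v_{k-2}$ and $v_k$ satisfy $v_{k-2} \prec_\sigma v_{k-1}$ and $v_k \prec_\sigma v_{k-1}$, so both lie in $S(v_{k-1}) = \{w \mid w \sigmakleiner v_{k-1}\}$. Since $\sigma$ is a PEO, $v_{k-1}$ is simplicial in $G[S(v_{k-1})]$, which forces $v_{k-2}$ and $v_k$ to be adjacent. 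But $P$ is an induced path with $k \ge 3$, so $v_{k-2}$ and $v_k$ are non-consecutive on $P$ and hence non-adjacent in $G$. This is a contradiction, so $v_{k-1} \prec_\sigma v_k$, which completes the induction.

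Stated directly, the key fact is that every internal vertex of an induced path has two non-adjacent path neighbours, so it can never be the rightmost of the three vertices $v_{i-1},v_i,v_{i+1}$ in a PEO. The only step needing care is confirming that the simplicial condition of the PEO applies to the rightmost vertex of $P$ among those considered; this holds because simpliciality in a PEO is taken over all earlier vertices, and both path neighbours are earlier. Chordality of $G$ is not used beyond the existence of the PEO $\sigma$.
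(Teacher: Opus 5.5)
Your proof is correct. It applies the paper's PEO convention (each vertex is simplicial among the vertices to its left) in exactly the right way: the first index $i$ with $v_{i+1} \prec_\sigma v_i$ would make $v_{i-1}$ and $v_{i+1}$ two non-adjacent left neighbours of $v_i$, which the PEO forbids. The paper gives no proof of its own and cites Beisegel et al.; your induction is the standard minimal-violation argument for this lemma, so there is nothing to compare or add.
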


\begin{lemma}[Shier~{\cite[proof of Theorem~2]{shier1984some}}]\label{lemma:chordal_neighborhoods}
Let $G$ be a chordal graph and let $\sigma = (v_1, \ldots, v_n)$ be a PEO of $G$. Let $S = \{v_1, \ldots, v_{i-1}\}$ and let $C$ be the connected component of $G - S$ containing $v_i$. Then for every $w \in C$ it holds that $N_G(w) \cap S \subseteq N_G(v_i)$.
\end{lemma}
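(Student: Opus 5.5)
The plan is to argue by contradiction, using \cref{lemma:peo_path}. Fix $w \in C$ and a vertex $s \in S$ with $sw \in E(G)$; we must show $sv_i \in E(G)$. The key observation is that every vertex of $C$ lies in $G - S$, so its position in $\sigma$ is at least $i$. Hence $v_i$ is the leftmost vertex of $C$ in $\sigma$, and $s$, whose position is less than $i$, lies to the left of every vertex of $C$.

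Since $C$ is connected and contains both $v_i$ and $w$, we choose a shortest $v_i$-$w$-path $P = (p_0, p_1, \dots, p_m)$ in $G[C]$, with $p_0 = v_i$ and $p_m = w$. Being a shortest path, $P$ is an induced path of $G$. Because $s$ is adjacent to $p_m = w$, there is a smallest index $j$ such that $sp_j \in E(G)$. If $j = 0$ we are done, so assume for contradiction that $j \geq 1$.

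We then consider the path $Q = (s, p_j, p_{j-1}, \dots, p_0)$. It is induced: the subpath $(p_j, \dots, p_0)$ is induced because $P$ is, and by minimality of $j$ the vertex $s$ is adjacent to $p_j$ but to none of $p_0, \dots, p_{j-1}$. The first vertex $s$ of $Q$ is its leftmost vertex in $\sigma$, since $s \in S$ and all other vertices of $Q$ lie in $C$. Applying \cref{lemma:peo_path} to the PEO $\sigma$ and the induced path $Q$ gives $s \prec_\sigma p_j \prec_\sigma \dots \prec_\sigma p_0 = v_i$. In particular $p_j \prec_\sigma v_i$ with $p_j \in C$, which contradicts $v_i$ being the leftmost vertex of $C$. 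Hence $j = 0$, so $s$ is adjacent to $v_i$, and therefore $N_G(w) \cap S \subseteq N_G(v_i)$.

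The only delicate step is making sure $Q$ is really induced, so that \cref{lemma:peo_path} applies. This is exactly why we take $P$ as a shortest path inside $C$ and $j$ as the minimal index of a neighbor of $s$ on $P$. The rest of the argument is bookkeeping of positions in $\sigma$.
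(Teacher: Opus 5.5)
Your proof is correct. Note that the paper gives no proof of this lemma: it is quoted from Shier's proof of Theorem~2, so there is no in-paper argument to compare against.

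Your derivation is self-contained given \cref{lemma:peo_path}, which the paper states just before. You make the two choices that matter correctly.
\begin{itemize}
\item Taking $P$ as a shortest path inside $G[C]$ makes it induced in $G$, because $G[C]$ is an induced subgraph.
\item Taking $j$ minimal means $s$ has exactly one neighbor on $(p_j,\dots,p_0)$.
\end{itemize}
Together these make $Q$ an induced path whose leftmost vertex is $s$. The lemma then forces $p_j \prec_\sigma v_i$, contradicting that $v_i$ is leftmost in $C$. The edge cases are also covered: $w = v_i$ gives $m = 0$ and $j = 0$.

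An alternative route is the standard path lemma for elimination orderings. In the paper's convention, where left neighborhoods are cliques, any path whose interior vertices all lie to the right of both endpoints forces the endpoints to be adjacent. To see this, repeatedly shortcut at the rightmost interior vertex: its two path-neighbors are both to its left, so they are adjacent. Apply this to the path consisting of $s$, then $w$, then a $w$-$v_i$-path in $C$. Its interior lies in $C \setminus \{v_i\}$, hence strictly to the right of both $s$ and $v_i$, so $sv_i \in E(G)$ follows directly. That route needs no shortest-path or minimal-index bookkeeping. Yours reuses a lemma the paper already has.
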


Now we are able to prove that strictly comp-monotone graph searches behave very similar on a chordal graph if they are given the same tie-breaker. Let $x$ be an arbitrary vertex of the chordal graph. Consider the component containing vertex $x$ that is induced by the vertices to the right of $x$ in such a search ordering. We show that this component is independent from the particular search and only depends on the tie-breaker.

\begin{lemma}\label{lemma:chordal-plus-comp}
Let $G$ be a chordal graph and $\rho$ be a vertex ordering of $G$. Let $\A_1$ and $\A_2$ be two strictly comp-monotone graph searches and let $\sigma_1 = \A_1^+(G,\rho)$ and $\sigma_2 = \A_2^+(G,\rho)$. For every $v\in V(G)$, let $S_1(v) = \{w \mid w \prec_{\sigma_1} v\}$, let $S_2(v) = \{w \mid w \prec_{\sigma_2} v\}$ and let $C_1(v)$ and $C_2(v)$ be the components of $G[V(G) \setminus S_1(v)]$ and $G[V(G) \setminus S_2(v)]$, respectively, containing $v$. 
Then for every vertex $v \in V(G)$, $C_1(v)$ is equal to $C_2(v)$.
\end{lemma}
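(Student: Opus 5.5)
The argument is a strong induction over the positions in $\sigma_1$. The inductive claim is that for the first $i$ vertices, $\sigma_1$ and $\sigma_2$ induce the same components, i.e., $C_1(v)=C_2(v)$ for every $v$ among them. Note that $\sigma_1$ and $\sigma_2$ need not be identical: strictly comp-monotone searches may choose different next vertices, for example MCS and LBFS in different components. What should coincide is the "component structure". So the plan is a structural statement that is independent of the global order. It characterizes $C_1(v)$ purely in terms of $G$, $\rho$, and earlier components.

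The key intermediate claim is the following. Let $C$ be a component of $G-S$ arising during either search, with $S$ a prefix set. Then the first vertex of $C$ visited by $\A_j^+(G,\rho)$ is the $\rho$-leftmost vertex among those $w\in C$ with $N_G(w)\cap S$ inclusion-maximal within $C$. Everything else in the prefix lies outside $C$, and removing vertices outside $C$ does not change $C$ until one of its vertices is taken.

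To prove this, I would use \cref{corol:search-peo}: every $\A_j$-ordering is a PEO. Then \cref{lemma:chordal_neighborhoods} shows that the first visited vertex $x$ of $C$ satisfies $N(w)\cap S\subseteq N(x)$ for all $w\in C$. Hence the set $M_C$ of vertices of $C$ attaining the maximum neighborhood $N(x)\cap S$ is well defined; it is the set of vertices whose neighborhood in $S$ is the unique inclusion-maximum. By the comp-tie property, any vertex of $M_C$ is eligible whenever $x$ is, because all of them share the same component and the same visited neighborhood. By the comp-inclusion property no other vertex of $C$ is eligible. One subtlety: the prefix at the moment $C$ is entered is $S'\supseteq S$, where the extra vertices lie outside $C$. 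These have no neighbors in $C$, since $C$ is a component of $G-S$ and those vertices are in $G-S$ but not in $C$. So $N(w)\cap S'=N(w)\cap S$ for $w\in C$. With the $\rho$ tie-breaker, the chosen vertex of $C$ is therefore the $\rho$-leftmost vertex of $M_C$. One must also check that "$\rho$-leftmost overall among eligible vertices" restricted to $C$ gives that vertex: when $C$ is entered at all, the selected vertex is in $C$, and within $C$ the $\rho$-leftmost eligible vertex is $\min_\rho M_C$.

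From this the lemma follows by defining a canonical recursive decomposition. Start with the components of $G$ (just $G$, which is connected). For each current component $C$, with $S$ the set of vertices already removed, let $x_C=\min_\rho M_C$. Set $C(x_C)=C$, then recurse on the components of $C-x_C$ with $S\cup\{x_C\}$. Note that in a component $D$ of $C-x_C$, the relevant $S$-neighborhoods are $N(w)\cap(S\cup\{x_C\})$. Vertices of other sibling components are not adjacent to $D$, so whatever a search has visited elsewhere is irrelevant. Induction on the recursion depth, using the claim above for both $\sigma_1$ and $\sigma_2$, shows that $C_j(v)$ equals the canonical component for which $v=x_C$. The reason is that when $\sigma_j$ visits $v$, the component of the unvisited graph containing $v$ is exactly a canonical component whose ancestors' representatives were all already visited. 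Since the canonical components are independent of $j$, this gives $C_1(v)=C_2(v)$.

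The main obstacle is making rigorous that the unvisited graph at the time $v$ is visited has $v$'s component equal to a canonical component. This requires showing that the removed set restricted to the relevant region is exactly the canonical ancestor chain, and that visits in other components do not alter neighborhoods. I would handle it by an induction on the canonical tree of components, with the invariant that each canonical component $D$ remains intact until its representative is visited, and that no vertex of $D$ is visited before all canonical ancestors' representatives.
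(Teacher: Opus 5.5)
Your proposal is correct, but it takes a genuinely different route from the paper.

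\textbf{The paper's route.} The paper argues by minimal counterexample. It fixes the $\sigma_1$-leftmost $x$ with $C_1(x)\neq C_2(x)$. Using \cref{lemma:peo_path}, it shows that the $\sigma_j$-leftmost vertex $y$ of $C_j(x)\setminus C_i(x)$ satisfies $x\in C_i(y)$, and deduces $C_1(x)\setminus C_2(x)\neq\emptyset$. It then takes the $\sigma_2$-leftmost element $u$ of that set and obtains $N_G(x)\cap S_1(x)=N_G(u)\cap S_1(x)$ from \cref{lemma:chordal_neighborhoods} and comp-inclusion. Finally, comp-tie of $\A_1$ gives $x\prec_\rho u$, and comp-inclusion and comp-tie of $\A_2$ produce a vertex contradicting the choice of $u$.

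\textbf{Your route.} You instead build a search-independent rooted tree of components from $G$ and $\rho$ alone. You then show that both $\sigma_1$ and $\sigma_2$ enter every canonical component $D$ at $x_D$, hence $C_j(x_D)=D$. The steps hold up:
\begin{itemize}
\item Assume all of $A_D$ (the representatives of the ancestors of $D$) precedes every vertex of $D$. Then the prefix $S'$ at the moment $D$ is entered satisfies $N_G(w)\cap S'=N_G(w)\cap A_D$ for $w\in D$.
\item Moreover, $D$ is exactly the component of $G-S'$ containing the entered vertex.
\item So \cref{lemma:chordal_neighborhoods} shows the inclusion-maximal neighborhood is a maximum. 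Comp-tie makes all of $M_D$ eligible, and the $\rho$-tie-breaking forces $\min_\rho M_D$.
\item The invariant passes to the children of $D$ because $x_D$ is the first vertex of $D$ visited.
\end{itemize}

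\textbf{Points to tighten.}
\begin{itemize}
\item State the key claim for $S=A_D$ under that invariant, rather than for $S$ a prefix set; your non-adjacency argument already covers this case.
\item Make explicit that the induction runs over the canonical tree, not over positions in $\sigma_1$.
\item Drop the appeal to comp-inclusion for excluding other vertices of $C$. It is not needed; comp-inclusion enters only through \cref{corol:search-peo}.
\end{itemize}

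\textbf{What each approach buys.} Your route is constructive and avoids \cref{lemma:peo_path} and the four-claim contradiction. It also proves more: both orderings are linear extensions of the same canonical tree, and $N_G(v)\cap S_j(v)=N_G(v)\cap A_D$ for $v=x_D$. This gives \cref{thm:chordal-plus} at once, including the identical ordering of the left neighborhoods. The paper's route avoids setting up the recursive structure and its invariant, at the price of a less transparent argument.
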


\begin{proof}
Assume for contradiction that the statement does not hold. Let $x$ be the leftmost vertex in $\sigma_1$ such that $C_1(x) \neq C_2(x)$. First note that, by \cref{corol:search-peo}, both $\sigma_1$ and $\sigma_2$ are PEOs of $G$. 

\begin{claim}\label{claim:chordal-plus-comp1}
    If for $i,j \in \{1,2\}$ it holds that $C_j(x) \setminus C_i(x) \neq \emptyset$, then it holds for the leftmost vertex $y \in C_j(x) \setminus C_i(x)$ in $\sigma_j$ that $x \in C_i(y)$.
\end{claim}

\begin{claimproof}
    Let $P$ be an induced path from $x$ to $y$ in $C_j(x)$. Since $x$ is the leftmost vertex of $P$ in $\sigma_j$, it follows from \cref{lemma:peo_path} that all inner vertices of $P$ are to the left of $y$ in $\sigma_j$. By the choice of $y$, all these inner vertices are elements of $C_i(x)$ and, thus, they are to the right of $x$ in $\sigma_i$. If $y$ would also be to the right of $x$ in $\sigma_i$, then all vertices of $P$ except $x$ would be to the right of $x$ in $\sigma_i$ and, thus, $P$ would be a path in $C_i(x)$; a contradiction to the fact that $y \notin C_i(x)$. Therefore, $y$ is to the left of $x$ in $\sigma_i$. Hence, all the vertices of $P$ except $y$ are to the right of $y$ in $\sigma_i$ and, thus, $x \in C_i(y)$.
\end{claimproof}

\begin{claim}\label{claim:chordal-plus-comp2}
    $C_1(x) \setminus C_2(x) \neq \emptyset$.
\end{claim}

\begin{claimproof}
    Assume for contradiction that this is not the case. Since $C_1(x) \neq C_2(x)$, we then know that $C_2(x) \setminus C_1(x) \neq \emptyset$. Let $y$ be the leftmost vertex of $C_2(x) \setminus C_1(x)$ in $\sigma_2$. Due to \cref{claim:chordal-plus-comp1}, $x \in C_1(y)$ and, thus, $y \prec_{\sigma_1} x$. By the choice of $x$, it holds that $C_1(y) = C_2(y)$. Thus, $x \in C_2(y)$. Therefore, $x \in C_2(y)$ and $y \in C_2(x)$. This is only possible if $x = y$. However, then $y \in C_1(x) \cap C_2(x)$;  a contradiction to the fact that $y \in C_2(x) \setminus C_1(x)$
\end{claimproof}

Let $u$ be the leftmost vertex of $C_1(x) \setminus C_2(x)$ in $\sigma_2$. 

\begin{claim}\label{claim:chordal-plus-comp3}
    $x \in C_2(u)$.
\end{claim}

\begin{claimproof}
    Let $y$ be the leftmost vertex of $C_1(x) \setminus C_2(x)$ in $\sigma_1$. \Cref{claim:chordal-plus-comp1} implies that $x \in C_2(y)$ and, thus, $y$ is to the left of $x$ in $\sigma_2$. Since $u$ is to the left of $y$ in $\sigma_2$ (or $u$ is equal to $y$), it also holds that $u$ is to the left of $x$ in $\sigma_2$. 
    
    Let $w$ be an arbitrary vertex in $C_1(x) \setminus \{u\}$. We distinguish two cases. First assume that $w \in C_2(x)$. Then $w \in C_2(y)$ since $x \in C_2(y)$. This implies that $y \prec_{\sigma_2} w$ and, hence, $u \prec_{\sigma_2} w$. Now assume that $w \notin C_2(x)$. Then $u \prec_{\sigma_2} w$, due to the choice of~$u$. 
    
    Thus, in any case all the vertices of $C_1(x) \setminus \{u\}$ are to the right of $u$ in $\sigma_2$. Hence, the path between $u$ and $x$ in $C_1(x)$ is also contained in $C_2(u)$ and, therefore, $x \in C_2(u)$.
\end{claimproof}

\begin{claim}\label{claim:chordal-plus-comp4}
    $N_G(x) \cap S_1(x) = N_G(u) \cap S_1(x)$.
\end{claim}

\begin{claimproof}
    Let $z \in N_G(x) \cap S_1(x)$. It holds that $x \in C_1(z)$ and, since $u \in C_1(x)$, it holds that  $u \in C_1(z)$. Since $z \prec_{\sigma_1} x$, the choice of $x$ implies that $C_1(z) = C_2(z)$ and, as $u \in C_1(z)$, it holds $u \in C_2(z)$. Combining this with \cref{claim:chordal-plus-comp3}, we see that $z \prec_{\sigma_2} u \prec_{\sigma_2} x$. By \cref{lemma:chordal_neighborhoods}, $z$ is also a neighbor of $u$. Thus, $N_G(x) \cap S_1(x) \subseteq N_G(u) \cap S_1(x)$. 

    Since $u \in C_1(x)$, the comp-inclusion property implies that $N_G(x) \cap S_1(x)$ is not a strict subset of $N_G(u) \cap S_1(x)$. Therefore, $N_G(x) \cap S_1(x)$ equals $N_G(u) \cap S_1(x)$.
\end{claimproof}

Since $u \in C_1(x)$, it holds that $x \prec_{\sigma_1} u$. Combining the comp-tie property of $\A_1$ and \cref{claim:chordal-plus-comp4}, we can deduce that $x \prec_\rho u$. Due to \cref{claim:chordal-plus-comp3}, it holds that $u \prec_{\sigma_2} x$. Therefore, $\A_2$ has chosen $u$ before $x$ although both vertices are in the same component of $G - S_2(u)$ and $x$ is to the left of $u$ in tie-breaker $\rho$. The comp-inclusion property of $\A_2$ implies that the neighborhood of $u$ in $S_2(u)$ was not a strict subset of the neighborhood of $x$ in $S_2(u)$. The comp-tie property implies that these neighborhoods are not identical. Therefore, there is a vertex $v \in N_G(u) \setminus N_G(x)$ with $v \prec_{\sigma_2} u$. As $x$ is to the left of $u$ in $\sigma_1$,  \cref{lemma:chordal_neighborhoods} implies that vertex $v$ is to the right of $x$ in $\sigma_1$. Since $u \in C_1(x)$ and $uv \in E(G)$, it holds that $v \in C_1(x)$. However, $v \notin C_2(x)$ since $v \prec_{\sigma_2} x$. Thus, $v \in C_1(x) \setminus C_2(x)$. Since $v \prec_{\sigma_2} u$, this contradicts the choice of $u$. Therefore, $u$ cannot exist and $C_1(x) \setminus C_2(x)$ is empty. This contradicts \cref{claim:chordal-plus-comp2}.
\end{proof}

Now it is easy to show that the forward neighborhood of a vertex is always the same and is ordered in the same way if we use strictly comp-monotone searches with the same tie-breaker on a chordal graph.

\begin{theorem}\label{thm:chordal-plus}
Let $G$ be a chordal graph and $\rho$ be a vertex ordering of $G$. Let $\A_1$ and $\A_2$ be strictly comp-monotone graph searches and let $\sigma_1 = \A_1^+(G,\rho)$ and $\sigma_2 = \A_2^+(G,\rho)$. Then for every vertex $v \in V(G)$ it holds that:
\[N_1(v) := \{w \in N_G(v) \mid w \prec_{\sigma_1} v\} = \{w \in N_G(v) \mid w \prec_{\sigma_2} v\} =: N_2(v). \]
Furthermore, the vertices of $N_i(v)$ are ordered identical in $\sigma_1$ and $\sigma_2$.
\end{theorem}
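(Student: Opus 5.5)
The plan is to derive both claims from \cref{lemma:chordal-plus-comp} together with \cref{lemma:chordal_neighborhoods}. Fix $v \in V(G)$ and write $C(v) := C_1(v) = C_2(v)$, which the lemma allows. By \cref{corol:search-peo}, both $\sigma_1$ and $\sigma_2$ are PEOs of $G$.

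\emph{Characterize $N_i(v)$ by the component $C(v)$.} I claim that for $i \in \{1,2\}$,
\[N_i(v) = N_G(v) \setminus C(v) = N_G(C(v)),\]
where $N_G(C(v))$ denotes the set of vertices outside $C(v)$ that have a neighbor in $C(v)$.

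First, $N_i(v) = N_G(v) \setminus C(v)$. Every neighbor of $v$ to the right of $v$ in $\sigma_i$ lies in $G - S_i(v)$ and is adjacent to $v$, so it belongs to $C(v)$. Conversely, every vertex of $C(v)$ is not in $S_i(v)$, so it is $v$ or lies to the right of $v$.

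Second, $N_G(v) \setminus C(v) = N_G(C(v))$. Any vertex outside $C(v)$ that has a neighbor in $C(v)$ must lie in $S_i(v)$, since $C(v)$ is a component of $G - S_i(v)$. By \cref{lemma:chordal_neighborhoods}, applied to $\sigma_i$ with $S = S_i(v)$, such a vertex is adjacent to $v$. The reverse inclusion is trivial.

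Since $C(v)$ does not depend on $i$, this gives $N_1(v) = N_2(v)$.

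\emph{Show the order on $N(v) := N_1(v)$ is the same in $\sigma_1$ and $\sigma_2$.} Take distinct $a,b \in N(v)$ with $a \prec_{\sigma_1} b$. I compare the components $C(a)$ and $C(b)$, which are independent of $i$ by the lemma. In $\sigma_1$, the vertex $b$ lies right of $a$, and the path $b,v$ avoids $S_1(a)$, because $v$ is right of $b$. Hence $b \in C(a)$ and $v \in C(a)$.

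Now suppose, for contradiction, that $b \prec_{\sigma_2} a$. The symmetric argument in $\sigma_2$ gives $a \in C(b)$.

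The membership $b \in C(a)$ is evaluated via $\sigma_2$, where $C(a) = C_2(a)$ consists of vertices not left of $a$ in $\sigma_2$. But $b \prec_{\sigma_2} a$, so $b \notin C_2(a)$. This contradicts $b \in C(a) = C_2(a)$. Hence $a \prec_{\sigma_2} b$, and the orders agree.

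\emph{Main obstacle.} The delicate step is the first one: justifying the use of \cref{lemma:chordal_neighborhoods}, whose statement needs the PEO property. That is supplied by \cref{corol:search-peo}, since strictly comp-monotone searches have the comp-inclusion property. The only other care needed is to check that the component-based characterization is symmetric in $i$. The ordering argument then follows directly from the fact that every vertex of a component $C_i(x)$ is at or to the right of $x$ in $\sigma_i$.
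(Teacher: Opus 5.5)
Your proof is correct and follows the paper's argument. As in the paper, $N_i(v) = N_G(v) \setminus C(v)$ with $C(v)$ independent of $i$ by \cref{lemma:chordal-plus-comp}, and the relative order of $a,b \in N(v)$ is read off from whether $b \in C(a)$. There are two minor differences. First, you connect $a$ to $b$ via the path $a,v,b$ instead of using that $N_i(v)$ is a clique (\cref{corol:search-peo}); this is slightly more economical, since it needs no PEO property. Second, your detour through $N_G(C(v))$ and \cref{lemma:chordal_neighborhoods} is superfluous, because $N_G(v)\setminus C(v)$ already does not depend on $i$. So the step you flag as the main obstacle is not needed at all.
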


\begin{proof}
Let $C_1(v)$ and $C_2(v)$ be defined as in \cref{lemma:chordal-plus-comp}. Any neighbor of $v$ that is not to the left of $v$ in $\sigma_i$ is in $C_i(v)$. By \cref{lemma:chordal-plus-comp}, it holds that $C_1(v) = C_2(v)$ and, thus, $N_1(v) = N_2(v)$ for every $v \in V(G)$. By \cref{corol:search-peo}, $N_i(v)$ is a clique. Combining this with \cref{lemma:chordal-plus-comp} shows that the vertices of $N_i(v)$ are ordered identical in $\sigma_1$ and $\sigma_2$.
\end{proof}

The following result is a direct consequence of this theorem.

\begin{theorem}\label{theo:ltrees_chordal}
Let $\A_1$ and $\A_2$ be two strictly comp-monotone graph searches and let $\rho$ be a vertex ordering of $G$. Let $\sigma_1 = \A_1^+(G,\rho)$ and $\sigma_2 = \A_2^+(G,\rho)$. Then the sets $\G(G, \sigma_1)$ and $\G(G, \sigma_2)$ are equal.
\end{theorem}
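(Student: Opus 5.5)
We derive the statement directly from \cref{thm:chordal-plus}, implicitly taking $G$ to be chordal as throughout this section. By the definition of $\G$-trees, membership of a spanning subgraph $T$ in $\G(G,\sigma)$ depends only on two data for each vertex $v$: whether $\sigma(v)=1$, and which neighbors of $v$ lie to the left of $v$ in $\sigma$. Concretely, $T \in \G(G,\sigma)$ if and only if every vertex $v$ with $\sigma(v)\neq 1$ satisfies $|N_T(v) \cap \{w \in N_G(v) \mid w \prec_\sigma v\}| = 1$. Since $N_T(v) \subseteq N_G(v)$, the condition only involves the set $N_\sigma(v) := \{w \in N_G(v) \mid w \prec_\sigma v\}$.

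The first step is to show that $\sigma_1$ and $\sigma_2$ have the same start vertex. The first vertex $v$ of $\sigma_i$ is exactly the vertex with $C_i(v) = V(G)$: since $G$ is connected, $V(G)\setminus S_i(v)$ induces all of $G$ exactly when $S_i(v)=\emptyset$. By \cref{lemma:chordal-plus-comp}, $C_1(v) = C_2(v)$ for every $v$, so $\invsig{1}$ is the same vertex in both orderings. Alternatively, one can argue that the first vertex is the unique vertex with $N_i(v)=\emptyset$ whose component is everything, or simply that both searches choose the $\rho$-first vertex by the comp-tie property with the empty prefix. I would use the component argument, since it relies only on the already proved lemma.

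The second step applies \cref{thm:chordal-plus}, which gives $N_1(v) = N_2(v)$ for every $v \in V(G)$. Then for any spanning subgraph $T$ and any vertex $v$ other than the common start vertex, the set $\{w \in N_T(v) \mid w \prec_{\sigma_1} v\} = N_T(v)\cap N_1(v)$ equals $N_T(v)\cap N_2(v)$. Hence the "exactly one left neighbor in $T$" condition holds for $\sigma_1$ if and only if it holds for $\sigma_2$, and $\G(G,\sigma_1)=\G(G,\sigma_2)$.

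I do not expect a real obstacle here. The only subtle point is the start vertex: the definition exempts the vertex with $\sigma(v)=1$, so we must confirm it is the same vertex in both orderings. If it differed, the exemption would shift, although even then $N_i(v)=\emptyset$ at the start would make the condition fail for a non-start vertex with no left neighbors. That observation gives a second, independent way to handle this step: every vertex other than the first has a nonempty $N_i(v)$, because both searches have the comp-inclusion property and hence visit a vertex adjacent to the already visited part within each component. So the vertices with empty $N_i$ are the same in both orderings, and the start vertex is determined by them.
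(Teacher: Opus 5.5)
Your proof is correct and takes essentially the paper's route: both derive the statement directly from \cref{thm:chordal-plus}. The paper uses the parent-before-child characterization~(iii) of \cref{lemma:char-g-trees}, which only needs the $T$-parent of each vertex $w$ to lie in $N_2(w)$ and settles the common root implicitly, since the root is an ancestor of every vertex and is therefore forced to be first in $\sigma_2$. You instead work with the definition and verify the common start vertex explicitly via \cref{lemma:chordal-plus-comp}; both of your start-vertex arguments (the component one and the observation that comp-inclusion makes the first vertex the unique one with empty left neighborhood) are valid under the paper's standing connectivity assumption.
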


\begin{proof}
Let $T \in \G(G, \sigma_1)$. Let $w$ be an arbitrary vertex of $G$ and let $v$ be the parent of $w$ in $T$. Then $v$ is to the left of $w$ in $\sigma_1$. Due to \cref{thm:chordal-plus}, $v$ is also to the left of $w$ in $\sigma_2$. Therefore, every parent is to the left of its children in $\sigma_2$. By \cref{lemma:char-g-trees}, $T$ is also in $\G(G, \sigma_2)$. The reverse direction works analogously.
\end{proof}

Thus, we have proven that strictly comp-monotone graph searches have all the same rooted \G-trees on a chordal graph. Combining this with \cref{prop:mcs-chordal} leads to the main result of this section.

\begin{theorem}
    Let $\A$ be a strictly-comp monotone graph search and let $G$ be a graph with $n$~vertices. The rooted \G-tree recognition problem of $\A$ can be solved for $G$
    \begin{enumerate}
        \item in $\O(n^4)$ time if $G$ is a chordal graph, and
        \item in $\O(n^2)$ time if $G$ is a split graph.
    \end{enumerate}
\end{theorem}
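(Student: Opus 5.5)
The plan is to reduce the rooted \G-tree recognition problem of an arbitrary strictly comp-monotone search $\A$ to that of MCS, and then invoke \cref{prop:mcs-chordal}. First I would check that MCS is itself strictly comp-monotone. For the comp-inclusion property, on a chordal graph MCS yields a PEO, and more generally $|A|<|B|$ whenever $A\subsetneq B$, so a vertex whose visited neighborhood is a strict subset of another's is never eligible. For the comp-tie property, equal visited neighborhoods give equal labels, so the two vertices are interchangeable in Label Search.

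The core step is to show that for every chordal $G$, every $\start$ and every spanning tree $T$, there is $\sigma\in\A(G,\start)$ with $T\in\G(G,\sigma)$ if and only if there is such a $\sigma'\in\mathrm{MCS}(G,\start)$. Take $\sigma\in\A(G,\start)$ with $T\in\G(G,\sigma)$ and set $\rho:=\sigma$. Since $\sigma$ is an $\A$-ordering, $\A^+(G,\rho)=\sigma$: inductively, the leftmost vertex of $\rho$ that extends the current prefix is exactly the next vertex of $\sigma$, because that vertex extends the prefix and all earlier vertices of $\rho$ are already visited. Now let $\sigma_2:=\mathrm{MCS}^+(G,\rho)$. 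This is well defined because MCS is natural, and it starts with $\rho$'s first vertex $\start$, since initially every vertex is eligible. By \cref{theo:ltrees_chordal}, $\G(G,\sigma)=\G(G,\sigma_2)$, so $T\in\G(G,\sigma_2)$ with $\sigma_2\in\mathrm{MCS}(G,\start)$. The converse is symmetric, with the roles of $\A$ and MCS exchanged; it needs $\A^+(G,\rho)$ to be well defined, i.e.\ $\A(G)\neq\emptyset$. I would argue this via the comp-tie property together with the observation that on chordal graphs a PEO-extending choice always exists, or else simply assume that $\A$ is natural, as all the searches considered are.

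With this equivalence in hand, the algorithm runs the MCS procedure from \cref{prop:mcs-chordal} on $(G,T,\start)$, giving $\O(n^4)$ for chordal and $\O(n^2)$ for split graphs (split graphs are chordal, so the equivalence applies). The main subtlety is the start vertex. One must make sure that $\A^+(G,\rho)$ and $\mathrm{MCS}^+(G,\rho)$ both begin at $\rho$'s first vertex, and that each search admits at least one ordering, so that the $^+$-orderings exist. I expect the nonemptiness of $\A(G)$ to be the only point requiring care, and I would handle it as described above.
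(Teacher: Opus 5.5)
This is the paper's argument. The paper obtains the theorem by combining \cref{theo:ltrees_chordal} (strictly comp-monotone searches with a common tie-breaker $\rho$ have the same $\G$-trees on chordal graphs) with \cref{prop:mcs-chordal} for MCS. Your details fill in what the paper leaves implicit: $\A^+(G,\sigma)=\sigma$, both $^+$-orderings starting at $\start$, and MCS being strictly comp-monotone.

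On the nonemptiness issue you flag, only your second option works. Nonemptiness of $\A(G)$ cannot be derived from the comp-tie property and PEO arguments, because comp-inclusion and comp-tie hold vacuously whenever $\A(G)=\emptyset$. For instance, the search with $\A(G)=\emptyset$ for every $G$ is strictly comp-monotone, yet it never agrees with MCS on yes-instances. A search that coincides with MCS except that $\A(P_n)=\emptyset$ for $n$ outside an undecidable set is also strictly comp-monotone, and it makes the rooted problem undecidable. So the hypothesis that $\A(G)\neq\emptyset$ for connected chordal $G$ (e.g. naturality) must be added to the statement; the paper uses it tacitly too.

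Once this is assumed, the comp-tie property applied to the empty prefix already gives $\A(G,\start)\neq\emptyset$ for every $\start$: all vertices lie in one component, and each has empty neighborhood in the empty prefix. Hence $\A^+(G,\rho)$ is well defined and starts at $\rho$'s first vertex, and your proof goes through.
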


\section{Conclusion}

We have introduced the \G-tree recognition problem of graph searches and we were able to prove that it is \NP-complete for all common searches except for Generic Search where it is trivial. Furthermore, we studied these problem for several graph classes. Some interesting cases that we have to leave open are the BFS case for chordal graphs and the cases of LDFS, MCS, and MNS for (chordal) bipartite graphs.

One further direction is the consideration of parameterized algorithms. Beisegel et al.~\cite{beisegel2024graph} have shown that the \cl-tree recognition problem of Generic Search parameterized by the number of leaves of the given spanning tree can be solved in \XP{} time and is \W-hard. It is shown in~\cite{scheffler2025partial} that the Partial Search Order problem of MCS and MNS can be solved in \XP{} time when parameterized by the width of the partial order. This implies the same result for the \G-tree recognition problem of these searches when parameterized by the number of leaves. However, it remains open whether these algorithmic results can be improved to \FPT{} time. Furthermore, there are no results for the \cf-tree, \cl-tree, or \G-tree recognition problems of BFS and DFS when parameterized by the number of leaves, except from those that follow directly from the polynomial-time solvability of the problem.

Another open question concerns the self-similar searches introduced in \cref{sec:g-trees}. In \cref{thm:self-similar-label}, we presented sufficient conditions on the partial order $\prec_\A$ such that LabelSearch$(\prec_\A)$ becomes self-similar. As observed these conditions are not characterizing as MNS is self-similar but does not fulfill the third condition. We leave open whether there is a nice characterization of self-similar instances of LabelSearch.

\bibliographystyle{plainurl}
\bibliography{lit.bib}

\end{document}